\documentclass[sigconf]{acmart}

\usepackage{bm}
\usepackage{mathtools}
\usepackage{graphicx}

\title{Preservation Theorems for Transducer Outputs}

\author{Val\'erie Berth\'e}
\affiliation{%
  \institution{Universit\'e Paris Cit\'e, IRIF, CNRS}
\country{France}
}

\author{Herman Goulet-Ouellet}
\affiliation{%
  \institution{Moncton University}
\country{Canada}
}

\author{Toghrul Karimov}
\affiliation{%
  \institution{MPI-SWS, Saarland Informatics Campus}
 \country{Germany}
 }

\author{Dominique Perrin}
\affiliation{%
  \institution{Universit\'e Gustave Eiffel}
\country{France}
}

\author{Mihir Vahanwala}
\affiliation{%
  \institution{MPI-SWS, Saarland Informatics Campus}
\country{Germany}
}

\AtEndPreamble{
    \theoremstyle{acmdefinition}
    \newtheorem{remark}[theorem]{Remark}
}

\begin{document}

\begin{abstract}
Suppose we have a deterministic finite-state transducer $\mathcal{A}$ and an infinite word $x$, and run $\mathcal{A}$ on $x$ to obtain an infinite word $\mathcal{A}(x)$.
Which properties of $x$ are guaranteed to also hold for $\mathcal{A}(x)$?
In this paper, we study this preservation question for various well-known combinatorial properties, e.g., recurrence, being morphic, and having factor frequencies.
The celebrated Krohn-Rhodes theorem provides the framework for proving our preservation results, and our techniques are based on the ergodic theory of symbolic dynamical systems, i.e., shift spaces.
\end{abstract}

\keywords{Krohn-Rhodes Theorem, Transducers, Ergodic Theory, Word Combinatorics, Morphic Words}

\begin{CCSXML}
		<ccs2012>
		<concept>
		<concept_id>10003752.10003790.10002990</concept_id>
		<concept_desc>Theory of computation~Logic and verification</concept_desc>
		<concept_significance>500</concept_significance>
		</concept>
		</ccs2012>
	\end{CCSXML}

	\ccsdesc[500]{Theory of computation~Logic and verification}

\maketitle

\section{Introduction}

Words with elegant combinatorial properties often accurately capture the behaviour of symbolic dynamical systems and natural phenomena. Classes of self-similar words like \emph{morphic words}  \cite{Queffelec2010,Fog02}, and in particular, \emph{automatic sequences} \cite{Allouche_Shallit_2003} have profound connections with logic and automata theory \cite{Shallit:23}. Such words have been of interest for over a century: e.g., the  ubiquitous Thue-Morse word (as qualified  in   \cite{All99}) was defined to study a problem concerning geodesics  on surfaces \cite[Chap.~5]{Fog02}. It is moreover  an example of a \emph{uniformly recurrent} word, i.e., every factor occurs infinitely often (this first condition makes the word \emph{recurrent}) and there is a bound on the gaps between its consecutive occurrences (this additional condition makes the word \emph{uniformly} recurrent). This \emph{almost periodic} nature of uniformly recurrent words has been used to model quasicrystals in the setting of aperiodic order  \cite{BG13}: one may consult \cite{Fog02} and its bibliography for such applications in physics.

A perennially relevant class of uniformly recurrent words is that of \emph{toric words} \cite{berthe2025monadic}.
They have traditionally attracted interest in the study of dynamical systems and ergodic theory (see \cite[Chap.~1]{walters2000introduction} and \cite{weyl1916gleichverteilung}). More recently, toric words have been identified to accurately capture the behaviour of \emph{linear loops} in program verification \cite{Karimov2023}. They also play a central role in the decidability of the monadic second-order (MSO) theory of the structure ${\langle \mathbb{N}; <, a_1^\mathbb{N}, \ldots, a_k^\mathbb{N}\rangle}$ as established by \cite{berthe2024decidability}.

Intuitively, if the infinite  word $x$ is a trace of a system (e.g., the coding of an orbit of a point under the action of a dynamical system), and $\mathcal{A}$ is a deterministic finite-state transducer, then $\mathcal{A}(x)$ is obtained by augmenting $x$ with annotations with respect to some specification provided by the transducer. It is natural to ask: are combinatorial properties of $x$ preserved even after adding such annotations, and to what extent? More precisely, this paper considers the following concrete questions.
\begin{enumerate}
    \item If $x$ is recurrent, then is $\mathcal{A}(x)$ also recurrent? Can the \emph{recurrence function} of $\mathcal{A}(x)$ be described and computed?
    \item If $x$ admits \emph{factor frequencies}, then does $\mathcal{A}(x)$? If these frequencies are effective for $x$, are they effective for $\mathcal{A}(x)$? 
\end{enumerate}

In the above, (i) the recurrence function $R_x\colon \mathbb{N} \rightarrow \mathbb{N} \cup \{\infty\}$ of a recurrent infinite word $x$ returns for each $n$, the maximum gap between two consecutive occurrences of a factor $u$ of length $n$; (ii) an infinite word $x$ admits factor frequencies if for every finite word~$u$, we have that $\lim_{N\rightarrow \infty} \frac{1}{N}\cdot |\{i \mid 0 \le i < N, x(i, i+|u|) = u\}|$ exists.  

A classical result regarding the preservation of factor frequencies by transducers is Agafonov's theorem (see e.g., \cite{agafonov-statement}), which states that if $\mathcal{A}$ is a deterministic \emph{oblivious subsequence-selecting} transducer, then $x$ is \emph{normal} if and only if $\mathcal{A}(x)$ is normal. A normal word $x \in \Sigma^\omega$ admits factor frequencies such that for any $u \in \Sigma^+$, the frequency of $u$ is $1/|\Sigma|^{|u|}$. Normality is a prerequisite for \emph{randomness}, and its preservation by transducers has been studied in \cite{CartonOrduna,Carton:2022}. This paper complements the above results by showing that transducers can also preserve the existence of factor frequencies by virtue of preserving combinatorial structure. 

We identify that the Krohn-Rhodes theorem \cite{meyer1969remarks} provides a convenient framework to answer our preservation-related questions. This is because it guarantees that $\mathcal{A}(x)$ can be obtained as a cascade $\sigma \circ \mathcal{B}_k \circ \cdots \circ \mathcal{B}_1(x)$, where $\mathcal{B}_1, \ldots, \mathcal{B}_k$ are ``simple'' letter-to-letter transducers whose underlying automata are either \emph{permutation automata} or \emph{reset automata}, and $\sigma$ is a letter-to-word substitution. In other words, it suffices to answer the above questions for substitutions (done in Lem.~\ref{lem:automata-suffice}), permutation automata, and reset automata. 

To answer the first question regarding recurrence, we use the Krohn-Rhodes framework to revisit the work of Pritykin \cite{pritykin}, who used a result of Sem\"enov et al. \cite{semenov-english} to show that if $x$ has a uniformly recurrent suffix, then so does $\mathcal{A}(x)$. The advantages of using the Krohn-Rhodes theorem are as follows. (i) We can prove a slightly stronger result, i.e., if $x$ has a recurrent suffix, then so does $\mathcal{A}(x)$. (ii) We can already give an effective answer in the affirmative to the second question regarding frequencies in the special case of uniformly recurrent words and \emph{counter-free} transducers. (iii) In the original case of uniform recurrence and arbitrary transducers, we can explicitly describe and compute the starting index of the uniformly recurrent suffix of $\mathcal{A}(x)$, and also the recurrence function of this suffix. In particular, if $x$ is \emph{linearly recurrent} (i.e., the recurrence function of $x$ grows linearly), then the suffix of $\mathcal{A}(x)$ is also linearly recurrent; likewise if $x$ is polynomially recurrent, then so is the suffix of $\mathcal{A}(x)$. Moreover, if $x$ is (linearly, polynomially, uniformly) recurrent and the underlying automaton of $\mathcal{A}$ is a permutation automaton, then $\mathcal{A}(x)$ itself possesses the corresponding recurrence property. These preservation results are stated and proved in Sec.~\ref{sec::preservation-of-recurrence}, and will be useful to treat the second question regarding factor frequencies in greater generality. 

What about preservation of having factor frequencies?
The astute reader might already provide the counterexample of the (non-recurrent automatic) word $x \in \{0, 1\}^\omega$ such that $x(n) = 1$ if and only if $n$ is a power of $3$, and the two-state transducer $\mathcal{A}$ that permutes the states upon reading the letter $1$ and prints the current state. The word $x$ admits factor frequencies but $\mathcal{A}(x)$ does not. Applying Lem.~\ref{lem:automatic}, we deduce that $\mathcal{A}(x)$ is also automatic, and thus automatic (and hence morphic) words do not necessarily admit factor frequencies.

The question of what structure a word needs in order to admit factor frequencies is indeed profound. We note in Lem.~\ref{lem:generic} the intrinsic connection to symbolic dynamics and ergodic theory: a word $x$ admits factor frequencies if and only if it is \emph{generic} for an \emph{invariant measure} on the \emph{shift space} $X$ that it generates. Normal words, for instance, are by definition generic for the uniform Borel probability measure on the full shift $\Sigma^\omega$. We therefore seek to understand when we can guarantee that $\mathcal{A}(x)$ is also a generic point for some invariant measure on the shift space $Y$ it generates.

The most reliable way of doing so is to ensure that the shift $Y$ generated by $\mathcal{A}(x)$ is \emph{uniquely ergodic}, i.e., it admits a single invariant measure; by Oxtoby's theorem, all words in $Y$ would then be generic for this measure. It does not suffice to merely require that $x$ generate a uniquely ergodic shift: this condition is met by the counterexample above. Nevertheless, if a shift space is uniquely ergodic, then the support of its invariant measure will be a \emph{minimal} shift space, i.e., a shift space generated by a uniformly recurrent word. In the above counterexample, this support is the singleton shift space generated by $0^\omega$. We further prove that all words in a minimal uniquely ergodic shift space admit computable factor frequencies (Lem.~\ref{lem:ur-ue-compfreq}).

It is tempting to try to establish that if $x$ generates a minimal shift space $X$ that is uniquely ergodic, then $\mathcal{A}(x)$ generates a shift space $Y$ that is uniquely ergodic. Unfortunately, as the symbolic traces of ingenious counterexamples shown by Veech \cite{veech1969strict}, Sataev \cite{sataev1975number}, Chaika \cite{chaika-counter}, and Guenais \& Parreau \cite{guenais605250valeurs} demonstrate, this is not the case. Intuitively, the introduction of ``annotations'' gives rise to words whose factor frequencies ``oscillate'' between different limits, or, in technical terms, a generic point in $X$ need not lift to a generic point in $Y$.

What additional structure must we impose on $x$ in order to ensure that $\mathcal{A}(x)$ generates a uniquely ergodic shift space? The following is an ideal starting point. A \emph{primitive morphic} word $x$ is a uniformly recurrent word that is morphic, i.e., obtained by applying a substitution $\tau$ to the fixed point of a non-trivial substitution $\sigma$. Primitive morphic words are in fact linearly recurrent, and are classic examples of words that generate uniquely ergodic shifts. We show that if $x$ has a primitive morphic suffix, then so does $\mathcal{A}(x)$, and this result is fully effective, i.e., we can compute the factor frequencies of $\mathcal{A}(x)$; see~Thm.~\ref{thm:morphic-preservation}. We mention that we incidentally prove that more general notions of self-similarity are preserved while obtaining $\mathcal{A}(x)$ from $x$ (see Lem.~\ref{lem:self-similar}).  

Our main contribution lies in proving that, in order to guarantee preservation of having factor frequencies, it suffices to impose far less structure on $x$ than that implied by being primitive morphic. More precisely, we consider an ergodic-theoretic property known as \emph{Boshernitzan's condition}, or \emph{Condition (B)} (Def.~\ref{def:Boshernitzan}) on \emph{minimal} shifts. If a minimal shift $X$ satisfies Boshernitzan's condition, then it is uniquely ergodic. Boshernitzan's condition is rather relaxed: it is known that all linearly recurrent words, as well as all \emph{Sturmian} words, generate shifts that satisfy it. We show that if $x$ is uniformly recurrent and generates a shift that satisfies Boshernitzan's condition, then the uniformly recurrent suffix of $\mathcal{A}(x)$ also generates a shift that satisfies Boshernitzan's condition, and $\mathcal{A}(x)$ thus admits factor frequencies. This main result is Thm.~\ref{thm:boshernitzan-preserve} in the paper. As a special case, we show that the factor frequencies of $\mathcal{A}(x)$ are computable when $x$ is a computable  Sturmian word and the underlying automaton of $\mathcal{A}$ is a permutation automaton (Cor.~\ref{cor:Sturmian}).

Our preservation theorems are summarised in Fig.~\ref{fig:summary}, which depicts the relations between combinatorial properties, whether $\mathcal{A}(x)$ inherits them from $x$, and whether the corresponding preservation theorem is effective.

\begin{figure}
    \centering
    \includegraphics[width=\linewidth]{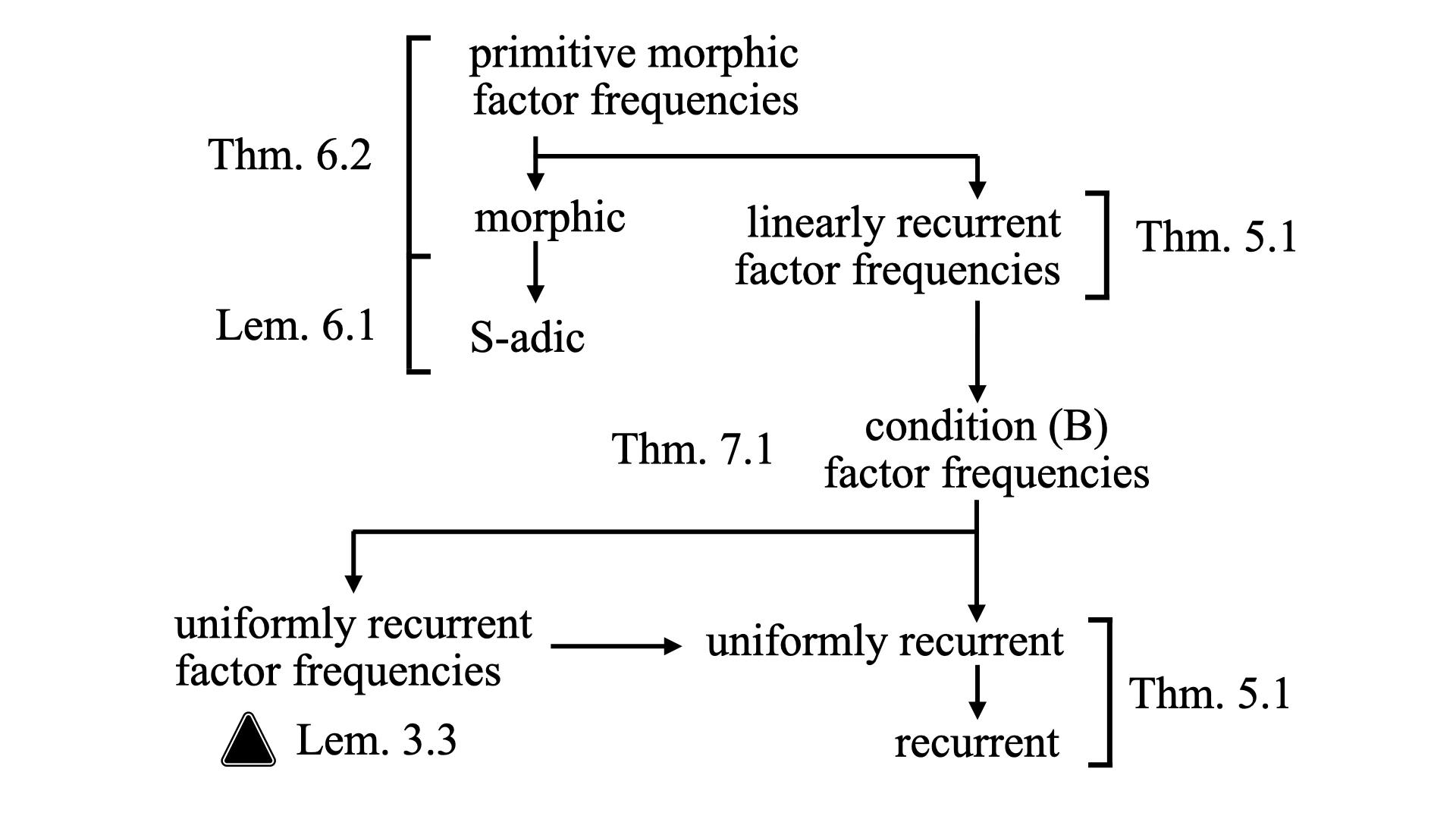}
    \caption{Summary of main results. Arrow from block $P$ to block $Q$ means that the properties in $P$ imply those in $Q$. We have effective preservation results for all properties in the diagram, with one exception. Uniformly recurrent $x$ admitting factor frequencies does not guarantee that $\mathcal{A}(x)$ does.}
    \label{fig:summary}
\end{figure}

\section{Preliminaries}
\subsection{Notation and Terminology}
Throughout this paper, we shall work with words over finite non-empty alphabets, which will usually be denoted by $\Sigma$ and sometimes by $\Gamma$. The set of infinite words over $\Sigma$ (indexed by $\mathbb{N}$) is denoted by $\Sigma^\omega$, the set of finite words is denoted by $\Sigma^*$, the empty word is denoted by $\varepsilon$, and the set of finite non-empty words is denoted by $\Sigma^+$. For a word $x$, the notation $x(i)$ denotes the letter in the $i$-th position of $x$, the notation $x(i, j)$ denotes the finite subword, or \emph{factor} of $x$ starting in position $i$ and ending at position $j-1$ (i.e., the length of the factor is $j-i$). When $u = x(i, j)$, we say that $u$ has an \emph{occurrence} in $x$ at index $i$. If $x$ is an infinite word, the notation $x(i, \infty)$ denotes the suffix of $x$ starting at position $i$. We shall denote the length of a finite word $u$ as $|u|$. The factor complexity of a word $x$ is the function $P_x$ that counts for each $n$, the number of distinct length-$n$ factors of $x$. We denote by $\mathcal{L}(x)$ the \emph{factor language} of $x$, i.e., the set of finite factors $u$ of $x$. We shall sometimes use $\mathcal{L}_n(x)$ as a convenient shorthand for $\mathcal{L}(x) \cap \Sigma^n$.

A word $x \in \Sigma^\omega$ is said to be \emph{recurrent} if every factor $u$ of $x$ occurs infinitely often. In this case, we can define the set of \emph{return words} to a factor $u$ of $x$: these are words $r$ such that $ru$ is a factor of $x$, and $ru$ has exactly two occurrences of $u$, once as a prefix, and once as a suffix. A recurrent word $x$ is said to be \emph{uniformly recurrent} if for every factor $u$, its set of return words $\mathcal{R}_x(u)$ is finite. For a uniformly recurrent word $x$, we can define a \emph{return-time} (or \emph{recurrence}) function $R_x\colon \mathbb{N}\rightarrow \mathbb{N}$, such that $R_x(n)$ gives the length of the longest return word to a length-$n$ factor of $x$. A uniformly recurrent word $x$ is said to be \emph{linearly recurrent} if $R_x(n) \in O(n)$. A helpful perspective is as follows: in a uniformly recurrent word $x$, for any factor $u$, the gaps between its consecutive occurrences are bounded. If $x$ is linearly recurrent, the bound is linear in $|u|$. If for a uniformly (respectively, linearly) recurrent word $x \in \Sigma^\omega$, given any $n$, we can compute $x(n)$ and $R_x(n)$, we say that $x$ is \emph{effectively} uniformly (respectively, linearly) recurrent.

\begin{remark}
    \label{remark:compute-fac-lang}
    If $x(n)$ and $R_x(n)$ are computable, we can compute $\mathcal{L}_n(x)$ as the set of length-$n$ factors of $x(0, n + R_x(n))$. This is because if the occurrence of $u$ at index $i$ is its first occurrence, then $v = x(0, i)$ must be a suffix of some return word $r$ to $u$, since indeed, $vu$ is a recurrent factor of $x$. In particular, $R_x$ gives a bound on the index of the first occurrence of a factor. Conversely, if $x$ is uniformly recurrent and $x(n)$ and $\mathcal{L}_n(x)$ are computable given any $n$, then we can compute $R_x(n)$ by enumerating $\mathcal{L}(x)$ till we find an $N$ such that all length-$N$ factors contain at least two occurrences of each length-$n$ factor.  
\end{remark}

For arbitrary recurrent words, the set of return words to a factor is not necessarily finite. In this general setting, we say that a word $x$ is \emph{effectively recurrent} if it is recurrent, and for any regular language $L$, the following two problems are decidable: (A) Does there exist $N > 0$ such that $x(0, N) \in L$; (B) Does there exist non-empty $u \in L \cap \mathcal{L}(x)$? Due to a result of Sem\"enov (see e.g., \cite[Thm.~5]{colcombet-semenov}), this definition is equivalent to asserting that $x$ is recurrent and has a decidable monadic-second order (MSO) theory.\footnote{Technically, just the decidability of problem (B) suffices; a justification can be elicited from \cite[Cor.~3]{colcombet-semenov}. There are, however, examples of words that are computable and recurrent, have recursive factor language, but their attendant MSO theory is undecidable. One such word $x$ is constructed as follows. Map a pair $(m, k) \in \mathbb{N} \times \mathbb{N}$ to $a^{m+1} b^{k+1} c$ if the $m$-th Turing machine does not halt within $k$ steps, and to $a^m b^k h$ otherwise. Fix an enumeration $v_0, v_1, \ldots$ of such words, and define $x = \lim_{n \rightarrow \infty} u_n$, where $u_0 = v_0$, and $u_{n+1} = u_n v_{n+1} u_n$. The MSO theory of $x$ subsumes the Halting Problem.} Recall that it is elementary to show that if $x$ has a decidable MSO theory, then so does $\mathcal{A}(x)$ for any transducer $\mathcal{A}$ (see, e.g., \cite[Lem.~4.5]{berthe2025monadic}).

Let $u$ be a factor of an infinite word $x$. The \emph{frequency} of $u$ in $x$, denoted $f_x(u)$, is given by $\lim_{N \rightarrow \infty} \frac{1}{N}{|\{i \mid x(i, i+|u|) = u, ~i < N\}|}$ (i.e., the limiting fraction of indices which mark an occurrence of $u$), and is defined if the limit exists. We say that a word $x \in \Sigma^\omega$ \emph{admits factor frequencies} if for every recurrent factor $u$ of $x$, the above limit exists. If furthermore, given any $u$, this limit can be computed, then we say that $x$ admits 
\emph{computable} factor frequencies. Formally, there is a Turing machine which, given factor $u$ and tolerance $\delta$ as input, computes a rational number $\rho$ such that $|f_x(u) - \rho| < \delta$. Moreover,  for any given $u$,  if  the convergence of the sequence $\frac{1}{N}{|\{i \mid x(i, i+|u|) = u, ~ k \leq i < N+k\}|}$  toward $ f_x(u)$
holds uniformly in $k$, then $x$ is said to admit \emph{uniform frequencies} (see e.g. \cite{Fog02,Queffelec2010}).

\subsection{Shift Spaces and Measures}
\label{sec::shift-measures-ergodicity}
We shall use ergodic theory to study which subsets of uniformly recurrent words that admit factor frequencies are closed under transduction. Shift spaces will serve as our underlying dynamical systems in order to do so. We also remark that combinatorial properties (e.g., recurrence) are often studied for shift spaces, but it is usually straightforward to obtain analogous results for words. 

The shift operator $T$ maps a word $x \in \Sigma^\omega$ to $x(1, \infty)$, i.e., it deletes the starting letter. 
We write $T^n x$ for the $n$-fold application of $T$ to $x \in \Sigma^\omega$.
We endow $\Sigma^\omega$ with the usual product topology, thus making it a compact metric space. 
A \emph{shift system} (also called a \emph{shift}) is the dynamical system $(X,T)$ where $X \subseteq \Sigma^\omega$ is closed and satisfies $TX \subseteq X$. 
When the dynamics is clear from the context, we also refer to $X$ as a shift.

Given a word $x$, the \emph{shift generated by} $x$ is the the topological closure of the orbit $\{T^n x \mid n \in \mathbb{N}\}$ of $x$ under $T$. 
For a shift $X$, we define $\mathcal{L}(X) = \{u \mid u \in \mathcal{L}(x'), ~x'\in X\}$, and as before, we declare $\mathcal{L}_n(X)$ to be shorthand for $\mathcal{L}(X) \cap \Sigma^n$. 
We have that if $X$ is the shift generated by $x$, then $\mathcal{L}(X) = \mathcal{L}(x)$, and $X = \{x' \mid \mathcal{L}(x') \subseteq \mathcal{L}(x)\}$ \cite[Prop.~4.6]{Queffelec2010}. 
A shift $X$ is said to be \emph{minimal} if it does not have a non-empty proper closed subset $Y$ such that $TY \subseteq Y$.
Therefore, the shift generated by any $x'$ in a minimal shift $X$ must be $X$ itself. 
We have that a word $x$ generates a minimal shift if and only if it is uniformly recurrent \cite[Prop.~4.7]{Queffelec2010}.

A cylinder of $X \subseteq \Sigma^\omega$ is a set of the form $\{ux' \colon x' \in \Sigma^\omega \} \cap X$ where $u \in \Sigma^*$.
The cylinder generated by $u$ will be denoted by $[u]_X$; we omit the superscript $X$ when it is clear from the context.
The cylinders are clopen (i.e., both closed and open, and in particular are the open balls of $X$), and a subset of $X$ is clopen if and only if it is a finite union of cylinders. Since there are countably many cylinders, any open set is a countable union of cylinders.

We refer the reader to \cite[App.~B.5]{DurandPerrin2022} or a standard text such as \cite{Bauer2001} for the basic concepts of measure spaces. 
Given a topological space $X$, its collection of \emph{Borel-measurable sets}, or simply \emph{Borel sets}, is the the $\sigma$-algebra generated by the open subsets of $X$. If $X$ is a shift space, the $\sigma$-algebra can equivalently be generated by the cylinders of $X$.

A (Borel probability) measure $\mu$ on $X$ maps each Borel set $B$ to $\mu(B) \in [0, 1]$, satisfies $\mu(X) = 1$, and is countably additive, i.e., for any countable collection $(B_n)_{n=0}^\infty$ of disjoint Borel sets we have $\mu(\bigcup_{n} B_n) = \sum_n \mu(B_n)$. 

A function $\pi \colon Y \rightarrow X$ is called \emph{Borel measurable} if for every Borel set $B \subseteq X$, the pre-image $\pi^{-1}(B)$ is a Borel set of~$Y$. 
In particular, if $\pi$ is continuous, then it is Borel measurable. Indeed, if $B$ is open, then by continuity $\pi^{-1}(B)$ is open; and we have that $\pi^{-1}(X \setminus B) = Y \setminus \pi^{-1}(B)$ and $\pi^{-1}(\bigcup_i B_i) = \bigcup_i\pi^{-1}(B_i)$. In other words, the family of sets $B$ for which $\pi^{-1}(B)$ is Borel contains the open sets of $X$ and forms a $\sigma$-algebra; in particular it must contain the $\sigma$-algebra generated by the family of open sets of $X$, i.e., the Borel sets of $X$.
A measurable function $\pi\colon Y \rightarrow X$ \emph{projects} a Borel measure $\nu$ on $Y$ to a Borel measure $\mu$ on $X$ as $\mu = \nu \circ \pi^{-1}$.

An \emph{invariant measure} $\mu$ of a compact dynamical system $(X,T)$ is a (Borel probability) measure $\mu$ such that $\mu(T^{-1} B )= \mu(B)$ for every measurable $B \subseteq X$.
By the Krylov-Bogolyubov theorem \cite[3.8.4]{DurandPerrin2022}, every $(X,T)$ has an invariant measure.

In the case of shift spaces, we can obtain an invariant measure $\mu$ by defining a probability pre-measure $\mu_0$ on the Boolean algebra of cylinders, and then invoking the Carath\'eodory extension theorem \cite[Thm.~B.5.1]{DurandPerrin2022} to uniquely extend it to a Borel measure $\mu$, since the $\sigma$-algebras generated by cylinders and open sets of a shift space coincide. We require that the pre-measure $\mu_0$ evaluate to $1$ on $X = [\varepsilon]_X$, and satisfy the compatibility conditions
\begin{equation}
\mu_0([u]_X) = \sum_a\mu_0([ua]_X) = \sum_a \mu_0([au]_X).
    \label{eq:compatibility}
\end{equation}
Observe in particular that $\mu_0 \circ T^{-1}([u]_X) = \mu_0([u]_X)$, and more generally, for any set $U$ in the Boolean algebra of cylinders, $\mu_0(U) = \mu_0(T^{-1}U)$. The Borel measure $\mu$ is defined as $$\mu(B) = \inf_{(U_n)_n} \sum_n \mu_0(U_n),$$ where $(U_n)_n$ ranges over sequences of (disjoint) sets in the Boolean algebra of cylinders such that $B \subseteq \bigcup_n U_n$. 

Since $T$ is continuous, $\mu \circ T^{-1}$ will also be a Borel measure $\nu$. The corresponding pre-measure $\nu_0$ obtained by restricting $\nu$ to the Boolean algebra of cylinders, however, is identical to $\mu_0$. Thus by the uniqueness of the extension, $\nu = \mu \circ T^{-1} = \mu$, i.e., $\mu$ is indeed an invariant measure. 

We say that $x \in X$ is a \emph{generic point} of an invariant measure $\mu$ (or simply, $x$ is generic for $\mu$) if for every continuous $h \in C(X, \mathbb{R})$ (where $C(X, \mathbb{R})$ denotes the set of all continuous functions from $X$ to $\mathbb{R}$),
\[
    \lim_{N\to\infty} \frac 1 N \sum_{i=0}^{N-1} h(T^i x) = \int h d\mu.
\]

We refer the reader to \cite[App.~B.5.2]{DurandPerrin2022} for the precise formal definition of the integral.

\subsection{Morphic Words and Primitivity}
A \emph{substitution} is a map $\sigma\colon \Sigma \rightarrow \Gamma^*$, and is extended to $\Sigma^*$ and $\Sigma^\omega$ in the obvious way, i.e., concatenation. A substitution $\sigma\colon \Sigma \rightarrow \Sigma^*$ is said to be \emph{primitive} if there exists a positive integer $m$ such that for every letter $a \in \Sigma$, all letters occur in $\sigma^m(a)$. A substitution $\sigma$ is said to be \emph{non-erasing} if there does not exist a letter $a$ such that $\sigma(a) = \varepsilon$.

A word $x \in \Sigma^\omega$ (we assume $x$ contains all the letters in $\Sigma$) is called substitutive if there exists a substitution $\sigma$ which is non-trivial (i.e, not the identity) over the letters of $x$, such that $\sigma(x) = x$. If $\sigma$ is primitive, then $x$ is said to be a primitive substitutive word. As an example, the Fibonacci word $x = 0100101001001\cdots$ is primitive substitutive, and is the fixed point of the substitution that maps $0$ to $01$ and $1$ to $0$.

A word $y \in \Gamma^\omega$ is called \emph{morphic} if there exists a substitutive word $x \in \Sigma^\omega$ and a substitution $\tau\colon \Sigma \rightarrow \Gamma^*$ such that $y = \tau(x)$. If $x$ is primitive substitutive, then $y$ is said to be primitive morphic. Morphic words are presented as $\sigma, \tau$, a prefix $u$ such that $\sigma(u) = u$, and a letter $a$ such that $\sigma(a)$ prolongs $a$ (i.e., begins with $a$ and is more than one letter long). We can convert any representation into one such that the defining substitutions are non-erasing by restricting the alphabet of $\sigma$.

\begin{remark}
\label{remark:cobham}
    Some authors require $\tau$ to be a coding, i.e., $\tau\colon \Sigma \rightarrow \Gamma$ in the above definition. A result of Cobham observes that this is not a restriction. Indeed, these definitions can (effectively) be used interchangeably, including in the case of primitive morphic words. 
    The idea to prove the equivalence (presented in \cite[Prop.~17]{durand-hd0l}) is to define a substitution $\hat\sigma$ over an auxiliary alphabet $\hat\Sigma$, constructed by taking $|\tau(a)|$ copies of each letter $a \in \Sigma$.
\end{remark}

There are several other sources the reader can consult for a detailed technical exposition of this remark, see e.g., \cite[Cor.~7.7.5]{Allouche_Shallit_2003}, \cite[Thm.~3.8]{durand-hd0l-decide} (Thm.~9 in the arXiv version), or \cite{honkala}. For completeness, we sketch the idea of \cite[Prop.~17]{durand-hd0l} here. For $0 \le i< |\tau(a)| - 1$, the substitution $\hat\sigma$ maps the $i$-th copy $a^{(i)}$ of $a$ to the concatenation $b_i^{(0)}b_i^{(1)} \cdots$ of all the copies of the $i$-th letter $b_i$ of $\sigma(a)$ (if $|\sigma(a)| \le i$, then the image is the empty word). The last copy of $a$ is mapped similarly to the copies of the remaining letters of $\sigma(a)$. We then get $\hat x $, a ``stuttering'' version of $x$, as the fixed point of $\hat \sigma$ iterated on $c^{(0)}$, the foremost copy of the starting letter of $x$. Finally, $y$ is obtained as $\hat\tau(\hat x)$, where $\hat \tau$ maps $a^{(i)}$ to the $i$-th letter of $\tau(a)$.

\begin{remark}
\label{remark:recurrentsuffix}
  It is straightforward to give an effective proof of the fact that any suffix of a morphic word is morphic \cite[Thm.~7.6.1]{Allouche_Shallit_2003}. It is also known how to decide whether a morphic word (presented as the image under $\tau$ of the fixed point of $\sigma$) is uniformly recurrent; should the decision be yes, the procedure computes letter-to-letter $\tau'$ and primitive $\sigma'$ such that the input is the image under $\tau'$ of the fixed point of $\sigma'$ \cite[Thm.~1, Thm.~3, Sec.~4, Sec.~5]{durandmorphicdecidability} (see also \cite[Lem.~4]{durand-hd0l} and the statement of \cite[Thm.~24(1)]{stability-return}). 
  It is well known that primitive substitutive words are, in fact, \emph{linearly} recurrent \cite[Prop.~25]{dhs1999}. By Lem.~\ref{lem:morphic-freq} below, the linear recurrence function and factor frequencies are effective. These properties extend to primitive morphic words by Lem.~\ref{lem:automata-suffice}. The proof of effectiveness is implicit is classical texbook, we provide it for completeness.
\end{remark}

\begin{lemma}
\label{lem:morphic-freq}
    Let $x \in \Sigma^\omega$ be a substitutive word obtained as the fixed point of a primitive substitution $\sigma$, and let $y = \tau(x)$ be a primitive morphic word. We have that $x, y$ are effectively linearly recurrent and admit computable factor frequencies.  
\end{lemma}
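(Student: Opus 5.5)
The plan is to handle $x$ first and then transfer everything to $y=\tau(x)$. First we make $\sigma$ non-erasing by restricting the alphabet, as described after the definition of morphic words; primitivity is preserved. Let $M_\sigma$ be the incidence matrix of $\sigma$. Since $\sigma$ is primitive, some power $M_\sigma^m$ is positive, and $m$ is computable by iterating. Perron--Frobenius then gives a simple dominant eigenvalue $\lambda>1$, an algebraic number, together with a positive eigenvector. It also gives computable constants $c,C>0$ such that $c\lambda^k\le|\sigma^k(a)|\le C\lambda^k$ for all letters $a$ and all $k$.

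\textbf{Linear recurrence of $x$.} First compute the finite set $\mathcal{L}_2(x)$. Iterate the map sending a two-letter factor $ab$ to the two-letter factors of $\sigma(ab)$, starting from the two-letter prefix of $x$, until the set stabilises. Each $ab\in\mathcal{L}_2(x)$ then occurs in $\sigma^{j}(c)$ for some letter $c$ and some $j\le j_0$, where $j_0$ is computable. Enlarging $j_0$ using primitivity, we may assume every $\sigma^{j_0}(c)$ contains every element of $\mathcal{L}_2(x)$.

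Now let $u$ be a factor of length $n$, and pick the least $k$ with $c\lambda^k\ge n$. Write $x=\sigma^k(x)$ as a concatenation of blocks $\sigma^k(a)$. Then $u$ lies inside some $\sigma^k(ab)$ with $ab\in\mathcal{L}_2(x)$. Next write $x=\sigma^{k+j_0}(x)$. Every block $\sigma^{k+j_0}(d)$ contains $\sigma^k(ab)$, so consecutive occurrences of $u$ are at most $2\max_d|\sigma^{k+j_0}(d)|\le 2C\lambda^{k+j_0}\le C' n$ apart, with $C'$ computable. This gives an explicit linear bound on $R_x(n)$. Remark~\ref{remark:compute-fac-lang} then lets us compute $\mathcal{L}_n(x)$ and the exact value of $R_x(n)$, since $x(n)$ is computable by iterating $\sigma$.

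\textbf{Linear recurrence of $y$.} For $y$, some letters may be erased by $\tau$, but not all of them. By primitivity, every window of $m$ consecutive letters of $x$ contains a letter that $\tau$ does not erase. Hence a factor $v$ of $y$ of length $n$ lies inside $\tau(u)$ for some factor $u$ of $x$ with $|u|\le m(n+2)$. Occurrences of $u$ yield occurrences of $v$, so the gaps are bounded by $\max_a|\tau(a)|\cdot R_x(m(n+2))=O(n)$. Effectivity follows from the case of $x$, again via Remark~\ref{remark:compute-fac-lang}.

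\textbf{Frequencies.} For letters, the frequency vector is the normalised Perron eigenvector of $M_\sigma$. For length-$n$ factors, use the induced substitution $\sigma_n$ on the alphabet $\mathcal{L}_n(x)$: it sends $w=x(i,i+n)$, with first letter $a$, to the list of the $|\sigma(a)|$ length-$n$ factors of $\sigma(w)$ starting inside $\sigma(a)$. This set is computable from $\mathcal{L}_n(x)$. We show $\sigma_n$ is primitive, which follows from uniform recurrence of $x$ combined with the growth of $\sigma^k$. Then the decomposition $x=\sigma^k(x)$ together with Perron--Frobenius convergence $M_{\sigma_n}^k/\lambda^k\to(\text{rank one})$ shows that the occurrence counts in long prefixes converge, uniformly in the starting point. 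The limit is the normalised Perron eigenvector of $M_{\sigma_n}$, which has the same dominant eigenvalue $\lambda$. This eigenvector is computable exactly in the field $\mathbb{Q}(\lambda)$, so the frequencies are computable.

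For $y$, attribute each occurrence of $v$ in $y$ to the position of $x$ at whose image it starts. Let $N=|v|+m$, and let $\#_v(w)$ count the occurrences of $v$ in $\tau(w)$ that start inside $\tau(w(0))$. Then
\[
f_y(v)=\frac{\sum_{w\in\mathcal{L}_N(x)}f_x(w)\,\#_v(w)}{\sum_{a}f_x(a)\,|\tau(a)|},
\]
which is computable, and the limit exists because the frequencies $f_x$ do.

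The main obstacle I expect is the rigorous convergence argument for factor frequencies, namely the primitivity of $\sigma_n$ together with control of the boundary error when a prefix of $x$ is cut into blocks $\sigma^k(a)$ of different levels. I would handle the error using the bounds $c\lambda^k\le|\sigma^k(a)|\le C\lambda^k$ and a geometric-series estimate, which also gives the uniformity.
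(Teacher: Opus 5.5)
Your route is the paper's. You use Perron--Frobenius bounds on $|\sigma^k(a)|$, and saturation to compute $\mathcal{L}_2(x)$ together with a power $\sigma^{j_0}$ whose images all contain it. You reprove the Durand--Host--Skau linear bound directly rather than citing it, but it is the same argument. Factor frequencies come from the induced primitive substitution $\sigma_n$. For $y=\tau(x)$ you attribute occurrences to positions of $x$, which the paper defers to Lem.~\ref{lem:automata-suffice}. Your one-step counting formula replaces its decomposition of $\tau$ into a ``stuttering'' substitution followed by a coding. However, your handling of letters erased by $\tau$ contains two errors.

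\emph{The window claim.} It is false that every window of $m$ consecutive letters of $x$ contains a letter not erased by $\tau$, where $m$ is the primitivity exponent. For the Fibonacci substitution $0\mapsto 01$, $1\mapsto 0$ we have $m=2$. With $\tau(0)=\varepsilon$, $\tau(1)=1$, the factor $00$ is erased entirely. Primitivity only guarantees this for windows of length $W=2\max_a|\sigma^m(a)|$, since each such window contains a full block $\sigma^m(a)$. Alternatively, take $W=R_x(1)$ as the paper does. With $W$ in place of $m$, your linear-recurrence bound for $y$ is fine.

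\emph{The choice $N=|v|+m$.} This is too small in your formula for $f_y(v)$. The count $\#_v(w)$ captures every occurrence of $v$ starting in $\tau(w(0))$ only if $|\tau(w(1,N))|\ge |v|-1$ for all $w\in\mathcal{L}_N(x)$. When $\tau$ erases letters, this forces $N$ to grow multiplicatively in $|v|$. In the same example take $v=111$, so $N=5$. No length-$5$ factor of the Fibonacci word contains three $1$s, since $10101$ is not a factor. Hence every $\#_v(w)$ is $0$ and your formula gives $f_y(111)=0$, whereas $y=1^\omega$ has $f_y(111)=1$. Choosing $N\ge W|v|$ fixes this; it is the same multiplicative scaling you already used in the recurrence step for $y$. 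With these two corrections the rest of your argument goes through.
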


\begin{proof}
    We prove the lemma for the primitive substitutive $x$, and defer the proof for the primitive morphic $y$ to Lem.~\ref{lem:automata-suffice}, whence it follows immediately. In the proof, we shall associate to the substitution $\sigma$ an adjacency matrix $M$, whose $(j, i)$-th entry records the number of times the $j$-th letter $a_j$ appears in the image $\sigma(a_i)$ of the $i$-th letter $a_i$. Because $\sigma$ is primitive, we can assume up to replacing $\sigma$ with some $\sigma^k$ that $M$ has strictly positive entries, which allows us to use Perron--Frobenius theory.

    We first show that $x$ is effectively linearly recurrent, i.e., $R_x(n) \le Ln$ for some computable $L$. The proof of linear recurrence in \cite[Prop.~25]{dhs1999} shows that $L = C \cdot \max_{a\in \Sigma}|\sigma(a)| \cdot R_x(2)$, where $C$ is such that for all $n \ge 1$, $\max_{a \in \Sigma} |\sigma^n(a)| \le C \cdot \min_{a \in \Sigma}|\sigma^n(a)|$. 
    
    It suffices to take $C = (\max_{i,j}(v_i/v_j))^2$, where $(v_i)_{i=1}^{|\Sigma|}$ are the strictly positive entries of a left eigenvector $v^\top$ of $M$ corresponding to the Perron-Frobenius eigenvalue $\rho$ (i.e., the positive one with maximum absolute value). Indeed, for the $i$-th letter $a$, $|\sigma^n(a)| = \sum_j (M^n)_{j, i}$. The expression on the right is the sum of entries in the $i$-th column of $M^n$, and is lower bounded by $\sum_j (v_j/\max_j v_j) \cdot (M^n)_{j, i}$, and upper bounded by $\sum_j (v_j/\min v_j) \cdot (M^n)_{j, i}$. These bounds respectively evaluate to the $i$-th entries of $(v^\top M^n)/(\max_j v_j)$ and $(v^\top M^n)/(\min_j v_j)$. Since $v^\top$ is an eigenvector, we get that $$\rho^j (\min_j v_j)/(\max_j v_j) \le |\sigma^n(a)| \le \rho^j (\max_j v_j)/(\min_j v_j).$$

    We now show how to compute (an upper bound on) $R_x(2)$. First, we can compute $\mathcal{L}_x(2)$ by starting with $x(0, 2)$, and recording the length-$2$ factors that are produced upon repeatedly applying $\sigma$. The set of factors will saturate within $|\Sigma|^2$ iterations. In particular, we can find $k$ large enough such that for all $a \in \Sigma$, $\sigma^k(a)$ contains every factor in $\mathcal{L}_x(2)$. Clearly, $R_x(2) \le 2 \max_a |\sigma^k(a)|$. This proves effective linear recurrence.
    
    We note that the textbook discussions in \cite[Chap.~5.4]{Queffelec2010} and \cite[Chap.~3.8.5]{DurandPerrin2022} prove $x$ has computable factor frequencies. We present the techniques here for clarity in exposition.
    Recall the adjacency matrix $M$ of $\sigma$. It follows that the frequency of the $i$-th letter $a$ in $x$ is the $i$-th entry of the eigenvector corresponding to the Frobenius eigenvalue.

    This same idea is extrapolated to describe factor frequencies as follows. We define an alphabet $\Sigma_n$, whose letters correspond to words in $\mathcal{L}_n(x)$ (see Rmk.~\ref{remark:compute-fac-lang} for how to compute this alphabet), and the word $x_n \in \Sigma_n^\omega$, where $x_n(i)$ corresponds to the factor $x(i, i+n)$. We define the substitution $\sigma_n$ such that $\sigma_n(u)$ is the ordered list of the first $|\sigma(u(0))|$ length-$n$ factors of $\sigma(u)$. Crucially, this substitution on $\Sigma_n$ is primitive \cite[Lem.~5.3]{Queffelec2010}, and we can again use Perron-Frobenius theory to obtain the frequency of factor $u \in \Sigma^n$, which is a letter in $\Sigma_n$.
\end{proof}

Morphic words, by their definition, can be regarded as being self-similar. This notion of self-similarity can be generalised as follows. 

\begin{definition}
\label{def:S-adic}
Let $S$ be a set of substitutions. A word $x \in \Sigma^\omega$ is said to be $S$-adic if there exists a sequence $\boldsymbol{\sigma} = (\sigma_n)_{n=0}^\infty$ of substitutions from $S$, and words $x^{(0)} = x, x^{(1)}, x^{(2)}, \ldots$ such that for all $n$, $x^{(n)} = \sigma_n\left( x^{(n+1)}\right)$. The sequence $\boldsymbol{\sigma}$ is called the directive sequence of the word $x$, and we say that $x$ is directed by $\boldsymbol{\sigma}$. 
\end{definition}

Note that substitutive and morphic words are special cases of $S$-adic words where the directive sequence is periodic and eventually periodic, respectively. 
The most common example of a class of $S$-adic words is the class of Sturmian words (see e.g. \cite{Fog02,berthe-survey}): these are words over the binary alphabet with a factor complexity of $n+1$. 
Let $S = \{\lambda_0, \lambda_1, \rho_0, \rho_1\}$, where $\lambda_i(i)= i, \lambda_i(j) = ij, \rho_i(i) = i, \rho_i(j)= ji$. It is well known that any Sturmian word is $S$-adic with $S$ as above, and an analogous statement can also be made for the generalisation to Arnoux-Rauzy words \cite{Glen2009episturm}. As an example, the Fibonacci word is a Sturmian word, and the sequence of substitutions is $(\lambda_0 \lambda_1)^\omega$.

\section{Frequencies and Ergodic Theory}
In this section, we shall prove the following, and use it to motivate the conditions we impose on a word $x$ (i.e., Boshernitzan's condition) in order to ensure that $\mathcal{A}(x)$ admits factor frequencies.

\begin{lemma}
\label{lem:generic}
    A word $x$ admits factor frequencies if and only if $x$ is generic for an invariant measure $\mu$ on the shift $X$ that it generates. Should this be the case, we have $f_x(u) = \mu([u]_X)$. 
\end{lemma}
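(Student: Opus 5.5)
The plan is to prove both directions by comparing ergodic averages of continuous functions with cylinder indicators. The link is that for a factor $u$, the indicator $\mathbf{1}_{[u]_X}$ is continuous on $X$, since cylinders are clopen. Moreover,
\[
\frac1N\sum_{i=0}^{N-1}\mathbf{1}_{[u]_X}(T^i x)=\frac1N\,|\{i<N \mid x(i,i+|u|)=u\}|.
\]
The right-hand side is exactly the empirical frequency of $u$ in $x$.

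\emph{($\Leftarrow$)} Suppose $x$ is generic for an invariant measure $\mu$ on $X$. Taking $h=\mathbf{1}_{[u]_X}$ in the definition of genericity, the identity above shows that the frequency of $u$ exists and equals $\int \mathbf{1}_{[u]_X}\,d\mu=\mu([u]_X)$. This gives both the existence of factor frequencies and the formula $f_x(u)=\mu([u]_X)$.

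\emph{($\Rightarrow$)} Suppose $x$ admits factor frequencies. First I will check that the frequency exists for every $u$, not only for recurrent factors. If $u$ is not a factor of $x$, its frequency is $0$. If $u$ is a factor occurring only finitely often, its frequency is also $0$. It would be cleanest to read the definition as covering all $u$, and I will note that non-recurrent factors contribute zero frequency regardless.

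Next I define the pre-measure $\mu_0([u]_X)=f_x(u)$ and verify the compatibility conditions \eqref{eq:compatibility}.
\begin{itemize}
\item $\mu_0([\varepsilon]_X)=1$ holds trivially.
\item $\sum_a f_x(ua)=f_x(u)$ holds because occurrences of $ua$ at indices $i<N$ are, for each $a$, occurrences of $u$, and together they account for all occurrences of $u$ at indices $i<N$.
\item $\sum_a f_x(au)=f_x(u)$ holds because occurrences of $u$ at $i+1$ with $i<N$ differ from occurrences at $i<N$ by at most one, which vanishes after dividing by $N$.
\end{itemize}
By the Carath\'eodory construction described in Sec.~\ref{sec::shift-measures-ergodicity}, $\mu_0$ then extends to an invariant Borel probability measure $\mu$ on $X$ with $\mu([u]_X)=f_x(u)$. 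One detail is that distinct words $u$ might give the same cylinder, namely the empty one. But $[u]_X=\emptyset$ exactly when $u\notin\mathcal{L}(x)$, and in that case $f_x(u)=0$, so $\mu_0$ is well defined.

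The remaining step, which I expect to be the main obstacle, is to pass from indicators of cylinders to all continuous $h$. By linearity, the ergodic averages converge to $\int h\,d\mu$ for every $h$ that is a finite linear combination of cylinder indicators. These are exactly the locally constant functions, i.e.\ those depending on finitely many coordinates.

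For general $h\in C(X,\mathbb{R})$, I will use compactness of $X$, which makes $h$ uniformly continuous. For every $\varepsilon>0$ there is $n$ such that $h$ varies by less than $\varepsilon$ on each cylinder $[w]_X$ with $|w|=n$. Let $g$ take the value $h(z_w)$ on $[w]_X$ for some chosen point $z_w\in[w]_X$; then $\|h-g\|_\infty\le\varepsilon$. Both the ergodic averages and the integrals of $h$ and $g$ differ by at most $\varepsilon$, so
\[
\limsup_N \Bigl|\tfrac1N\textstyle\sum_{i<N} h(T^ix)-\int h\,d\mu\Bigr|\le 2\varepsilon .
\]
Letting $\varepsilon\to0$ gives genericity. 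This $\varepsilon/3$-type approximation closes the argument. The only care needed is that $\int g\,d\mu=\sum_w g(z_w)\,\mu([w]_X)$, which holds by the definition of the integral for simple functions.
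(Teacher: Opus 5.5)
Your proposal is correct and follows the paper's proof essentially step for step. For the easy direction you use indicators of clopen cylinders. For the converse you build a pre-measure from the frequencies, extend it by Carath\'eodory, and then approximate an arbitrary continuous $h$ uniformly by functions that are constant on length-$n$ cylinders. Your explicit checks of the compatibility conditions and of non-recurrent factors are welcome refinements the paper skips, as is getting by with uniform convergence instead of dominated convergence.

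One small nit: well-definedness of $\mu_0$ also needs $f_x(u)=f_x(uw)$ whenever $[u]_X=[uw]_X\neq\emptyset$, not just the empty-cylinder case. This holds because $T^i x\in X$ for all $i$, so every occurrence of $u$ in $x$ is then followed by $w$.
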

\begin{proof}
The ``if'' implication is obvious: we simply choose the continuous function $h$ to be the indicator $\mathbf{1}_u$ which evaluates to $1$ on the cylinder $[u]_X$ and $0$ elsewhere. 

Conversely, suppose $x$ admits factor frequencies and generates the shift $X$. Recall that $\mathcal{L}(X) = \mathcal{L}(x)$, and hence the factor frequencies of $x$ defines a pre-measure $\mu_0$ on each cylinder in a way that satisfies the compatibility condition (\ref{eq:compatibility}). The pre-measure $\mu_0$ is thus invariant, and extends to an invariant measure $\mu$. We now need to prove that $x$ is generic for $\mu$, i.e., for every continuous $h$, we have $\int h d\mu = \lim_{N\rightarrow \infty}\frac{1}{N}\sum_{i=0}^{N-1}h(T^ix)$.

We have a special case where the Stone-Weierstraß theorem applies. Observe that since $X$ is compact, the function $h$ is uniformly continuous, admits a modulus of continuity, and we can define the following sequence $(g_M)_M$ of functions that converges uniformly to $h$ (i.e., for every $\delta$ there exists $M_\delta$ such that for all $M \ge M_\delta$ we have $\max_X |g_M - h| \le \max_{u \in \mathcal{L}_M(X)}(\max_{[u]_X} h - \min_{[u]_X} h ) < \delta$):

$$
g_M = \sum_{u \in \mathcal{L}_M(X)} \max_{[u]_X} h \cdot \mathbf{1}_{u}.
$$
By the dominated convergence theorem \cite[Thm.~B.5.3]{DurandPerrin2022}, we also have that $\int h d\mu = \lim_{M \rightarrow\infty} \int g_M d\mu$.

In particular, for every $\delta > 0$, we can choose $M$ large enough such that $$\left|\int h d\mu - \int g_Md\mu\right| < \delta/3,$$ and for all $N$, we have by uniform convergence that $$\left| \frac{1}{N}\sum_{i=0}^N g_M(T^i x) -  \frac{1}{N}\sum_{i=0}^N h(T^i x)\right| < \delta/3.$$

Since $x$ admits factor frequencies, it is also clear that for each $M$, we have $\lim_{N\rightarrow \infty}\frac{1}{N}\sum_{i=0}^{N-1}g_M(T^i x) = \int g_M d \mu$, and in particular for our chosen $M$, for every $\delta$ there exists $N_\delta$ such that for all $N \ge N_\delta$, we have $$\left|\int g_M d \mu - \frac{1}{N}\sum_{i=0}^{N-1}g_M(T^i x) \right| < \delta/3.$$

Adding our inequalities together and applying the triangle inequality, we get that for every $\delta$ there exists $N_\delta$ such that for all $N \ge N_\delta$, we have $\left|   \int h d\mu - \frac{1}{N}\sum_{i=0}^N h(T^i x)\right| < \delta$, or in other words, $\int h d\mu = \lim_{N\rightarrow \infty}\frac{1}{N}\sum_{i=0}^{N-1}h(T^ix)$, as desired.
\end{proof}

We now describe measures for which ``almost'' all points are generic. An invariant measure $\mu$ is called \emph{ergodic} when the following condition is met: if a Borel set $B$ is such that $T^{-1}B = B$, then $\mu(B)$ is either $0$ or $1$. Since $X$ is a compact metric space, $(X, T)$ is guaranteed to have an ergodic measure \cite[Prop.~3.8.9]{DurandPerrin2022}. The Birkhoff ergodic theorem \cite[Thm.~3.8.5]{DurandPerrin2022} states that if $\mu$ is an ergodic measure, then for every integrable function $h$, the set of points $x$ for which\footnote{We overload inequality to also include the case where the limit does not exist.} $\int h d \mu \ne \lim_{N\rightarrow \infty} \frac{1}{N}\sum_{i=0}^{N-1}h(T^i x)$ has $\mu$-measure $0$. Thus, for each $u \in \mathcal{L}(X)$, the set $B_u = \{x \mid f_x(u) \ne \mu([u]_X)\}$ has $\mu$-measure $0$. By Lem.~\ref{lem:generic}, the complement of the countable union of all such $B_u$ is precisely the set of points that are generic for $\mu$. Thus, the set $G$ of generic points for $\mu$ has $\mu$-measure $1$. If $(X,T)$ is minimal, we additionally have that $G$ is dense (in the topological sense) in $X$.

The exception of non-generic points does not arise in dynamical systems $(X, T)$ that are \emph{uniquely ergodic}, i.e., there exists only one invariant measure $\mu$, which is guaranteed to be ergodic \cite[Cor.~3.8.10]{DurandPerrin2022}. In this case, Oxtoby's theorem \cite[Thm.~4.3]{Queffelec2010} tells us that all points $x \in X$ are generic for $\mu$. Moreover, unique ergodicity is even equivalent to each word in the shift having \emph{uniform} factor frequencies \cite[Prop.~3.8.14]{DurandPerrin2022}. A word $x$ admits uniform factor frequencies if for every $v \in \mathcal{L}(x)$ and every $\delta$, there exists $M$ such that for every $m \ge M$ and every $u \in \mathcal{L}_m(x)$, we have $||u|_v/|u| - f_x(v)| < \delta$, where $|u|_v$ denotes the number of occurrences of $v$ in $u$. Note that for technical convenience, we may replace $|u|_v/|u|$ in the above by $|u|_v/(|u|-|v|+1)$, which we denote by $f_u(v)$. This alternate definition is also natural because the denominator is the number of indices of $u$ at which an occurrence of $v$ is possible. 

Note that if $(X, T)$ is uniquely ergodic, then the support $X'$ of the invariant measure $\mu$ is a minimal shift. Indeed, if it were to contain a shift $Y$, then $Y$ itself would admit an invariant measure $\nu$ \cite[Prop.~3.8.4]{DurandPerrin2022} which, when extended to $X'$, would be distinct from $\mu$: a contradiction. We henceforth focus on minimal shifts.

\begin{lemma}
\label{lem:ur-ue-compfreq}
    If an effectively uniformly recurrent word $x$ generates a uniquely ergodic shift $X$ with invariant measure $\mu$, then $x$ admits computable factor frequencies. 
\end{lemma}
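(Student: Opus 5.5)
The plan is to turn unique ergodicity into uniformity of frequencies, and then compute the frequencies by a two-sided squeeze over the finite, computable sets $\mathcal{L}_m(x)$.

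\textbf{Existence and uniformity.} By Oxtoby's theorem, every point of $X$ is generic for $\mu$; in particular so is $x$. By Lem.~\ref{lem:generic}, $x$ admits factor frequencies with $f_x(v)=\mu([v]_X)$. Unique ergodicity also gives uniform factor frequencies \cite[Prop.~3.8.14]{DurandPerrin2022}: for every $v$ and $\delta>0$ there is $M$ such that $|f_u(v)-f_x(v)|<\delta$ for all $m\ge M$ and all $u\in\mathcal{L}_m(x)$.

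\textbf{Computing the frequency by squeezing.} Since $x$ is effectively uniformly recurrent, Rmk.~\ref{remark:compute-fac-lang} lets us compute $\mathcal{L}_m(x)$ for every $m$. Given $v$ and a tolerance $\delta$, we enumerate $m=|v|,|v|+1,\ldots$ and compute
\[
  a_m=\min_{u\in\mathcal{L}_m(x)} |u|_v/m, \qquad b_m=\max_{u\in\mathcal{L}_m(x)} |u|_v/m .
\]
For each $m$ we will show $a_m-|v|/m\le f_x(v)\le b_m+|v|/m$. Take $N=km$ and cut the prefix $x(0,N)$ into $k$ consecutive blocks of length $m$, each of which lies in $\mathcal{L}_m(x)$. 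The only occurrences of $v$ not counted inside some block are those straddling a block boundary, and there are at most $|v|-1$ of these per boundary. Hence the count of occurrences of $v$ in $x(0,N)$, divided by $N$, lies in $[a_m-|v|/m,\ b_m+|v|/m]$. Letting $k\to\infty$ and using that $f_x(v)$ exists gives the bounds.

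\textbf{Termination and output.} By uniform frequencies, for each $\eta>0$ there is $M$ such that for all $m\ge M$ both $a_m$ and $b_m$ lie within $\eta$ of $f_x(v)$; the switch between the denominators $m$ and $m-|v|+1$ costs only $O(|v|/m)$. Therefore $(b_m+|v|/m)-(a_m-|v|/m)\to 0$. The procedure searches for the first $m$ at which this width is below $\delta$ and outputs the midpoint $\rho$, a rational number. It halts by the previous sentence, and it is correct because $f_x(v)$ lies in the certified interval, so $|f_x(v)-\rho|<\delta$.

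The main subtlety is that the algorithm never needs to know the rate $M$ of uniform convergence. The bounds $a_m-|v|/m$ and $b_m+|v|/m$ are rigorous for every $m$, so stopping as soon as the interval is narrow enough is safe, while unique ergodicity only ensures that such an $m$ exists. This is the step where both effective uniform recurrence (to compute $\mathcal{L}_m(x)$) and unique ergodicity (to make the interval shrink) are used.
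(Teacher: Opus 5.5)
Your proof is correct. Its overall architecture matches the paper's: you compute $\mathcal{L}_m(x)$ via Rmk.~\ref{remark:compute-fac-lang}, certify that $f_x(v)$ lies between the minimal and maximal local frequencies over these finitely many factors, and use the uniform frequencies coming from unique ergodicity only to guarantee that the brute-force search for a narrow bracket halts.

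The difference lies in how the bracket is justified. The paper proves the exact inequality $\min_u f_u(v) \le f_x(v) \le \max_u f_u(v)$ over $u \in \mathcal{L}_{L+l-1}(x)$ by establishing the convex-combination identity $f_x(v)=\sum_u f_x(u)f_u(v)$. To do this it introduces the continuous function $\chi(z)=f_{z(0,L+l-1)}(v)$ and identifies the right-hand side with $\int\chi\,d\mu$. It then shows, by a sliding-window rearrangement and the genericity of $x$, that the Birkhoff averages of $\chi$ converge to $f_x(v)$.

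You instead cut prefixes into disjoint blocks of length $m$ and absorb the occurrences straddling block boundaries into an additive slack of $|v|/m$. This is fully combinatorial and uses $\mu$ only through the existence of $f_x(v)$. The cost is a slightly wider certified interval, which is harmless because the slack vanishes as $m\to\infty$. The paper's route gives a tight bracket and an exact structural formula, so any single window such as $x(0,L+l-1)$ can be output directly. Yours needs the explicit correction term but avoids integrals and continuous test functions entirely.
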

\begin{proof}
    Given a factor $v$ of length $l$ and a tolerance $\delta$, we shall show that we can compute $L$ such that for all $u \in \mathcal{L}_{L+l-1}(x)$, we have $|f_x(v) - f_u(v)| < \delta$. It then remains to count the occurrences of $v$ in any appropriate $u$, e.g., $u = x(0, L+l-1)$. For notational convenience, in this proof we abbreviate $\mathcal{L}_{L+l-1}(x)$ as simply $\mathcal{L}$. 

    It suffices to show that for every $L \ge 1$, we have $\min_{u \in \mathcal{L}} f_u(v) \le f_x(v) \le \max_{u \in \mathcal{L}} f_u(v)$. Indeed, by the definition of uniform frequencies, the difference between the bounds is less than $\delta$ for some $L$ which can be found by brute enumeration (made possible by the fact that $x$ is effectively uniformly recurrent).

    We prove the above claim by showing that $f_x(v)$ is a convex combination of the values of $f_u(v)$; more precisely, we have $f_x(v) = \sum_{u \in \mathcal{L}} f_x(u) f_u(v)$. To that end, we define the continuous function $\chi$, which maps $x$ to $f_{x(0, L+l-1)}(v)$. We clearly have that $$\sum_{u \in \mathcal{L}} f_x(u) f_u(v) = \int \chi d\mu.$$ We shall conclude by showing that $\lim_{N\rightarrow \infty}\frac{1}{N}\sum_{i=0}^{N-1}\chi(T^i x) = f_x(v)$. 
    
    Observe that by definition, $\chi(x) = \frac{1}{L}\sum_{j=0}^{L-1} \mathbf{1}_v(T^j x)$. Thus, we can express $\sum_{i=0}^{N-1}\chi(T^i x)= \frac{1}{L}\sum_{i=0}^{N-1}\sum_{j=i}^{i+L-1}\mathbf{1}_v(T^j x)$. We can further rearrange the summation to obtain 
    $$
    \frac{1}{L}\left(L\sum_{i=L-1}^{N-L}\mathbf{1}_v(T^ix)+\sum_{i=0}^{L-2}(i+1)(\mathbf{1}_v(T^ix) + \mathbf{1}_v(T^{N-1-i}x))\right).
    $$
    Thus, $\lim_{N\rightarrow \infty}\frac{1}{N}\sum_{i=0}^{N-1}\chi(T^ix) = \lim_{N\rightarrow \infty}\frac{1}{N}\sum_{i=L-1}^{N-L}\mathbf{1}_v(T^i x)$. The latter sequence is $(N-2L)/N$ times the average of $N-2L$ terms, and indeed converges to the limit of the usual Birkhoff averages, i.e., to $f_x(v)$.
\end{proof}

The existence of factor frequencies is very susceptible to failure when words are drawn from minimal shift spaces that are not uniquely ergodic. Note that the set of all invariant measures of $(X,T)$ form a convex polytope whose extremal points are the ergodic measures \cite[Prop.~3.8.10]{DurandPerrin2022}.
In particular, if $(X,T)$ is not uniquely ergodic, it will have at least two distinct ergodic measures.

\begin{lemma}[Thm.~1.1 in \cite{dw2024}]
\label{lem:UE-equivalence}
    Let $X$ be a minimal shift space which is not uniquely ergodic. There exists a dense set $B \subseteq X$ of words that are not generic for any invariant measure, and hence do not admit factor frequencies.
\end{lemma}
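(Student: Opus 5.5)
We take two distinct ergodic measures $\mu_1\ne\mu_2$ of $X$. The plan is to build, Baire-category style, the set of points whose Birkhoff averages keep alternating between $\mu_1$-like and $\mu_2$-like behaviour. Since $\mu_1\neq\mu_2$ and both are determined by their values on cylinders, we fix a factor $v\in\mathcal{L}(X)$ with $\mu_1([v]_X)\ne\mu_2([v]_X)$, and set $\eta=|\mu_1([v]_X)-\mu_2([v]_X)|/3$.

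\emph{Finite approximants.} By the Birkhoff ergodic theorem (as discussed before Lem.~\ref{lem:ur-ue-compfreq}), the $\mu_i$-generic points have full $\mu_i$-measure, so in particular they exist. Fix generic points $g_1,g_2\in X$ for $\mu_1,\mu_2$. Then for $i=1,2$ and every $N_0$, there are arbitrarily long prefixes $w$ of $g_i$ with $|f_w(v)-\mu_i([v]_X)|<\eta/2$.

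\emph{Construction of the dense set.} Given any cylinder $[u]_X\neq\emptyset$, we build $x\in[u]_X$ as the limit of finite words $u=w_0\subset w_1\subset w_2\subset\cdots$, each in $\mathcal{L}(X)$, constructed inductively. We choose $w_{k+1}=w_k\,s_k\,p_k$, where:
\begin{itemize}
\item $p_k$ is a prefix of $g_{1}$ when $k$ is even and of $g_2$ when $k$ is odd;
\item the connector $s_k$ is supplied by minimality: since $X$ is minimal, $\mathcal{L}(X)$ is uniformly recurrent, so any two factors can be joined by a connector of length bounded by the recurrence function, and thus $w_k s_k p_k\in\mathcal{L}(X)$;
\item $|p_k|$ is taken so large relative to $|w_ks_k|$ that the empirical frequency of $v$ in $w_{k+1}$ is within $\eta$ of $\mu_1([v]_X)$ for even $k$, and of $\mu_2([v]_X)$ for odd $k$.
\end{itemize}
Every $w_k$ lies in $\mathcal{L}(X)$ and $X=\{x'\mid \mathcal{L}(x')\subseteq\mathcal{L}(X)\}$, so the limit $x$ lies in $X$. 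Along the two subsequences of even and odd indices, the averages $\frac1N|\{i<N: x(i,i+|v|)=v\}|$ accumulate near two values that are at least $\eta$ apart. Hence $f_x(v)$ does not exist, and by Lem.~\ref{lem:generic} the point $x$ is generic for no invariant measure. Since $u$ was arbitrary, the set $B$ of such points meets every cylinder, i.e.\ it is dense.

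\emph{Main obstacle.} The delicate step is the joining. The prefix $p_k$ of $g_i$ must be placed so that $w_ks_kp_k$ is a legal factor. Uniform recurrence alone does not guarantee this directly: it tells us that $p_k$ occurs within bounded distance after any position in any point of $X$, not that it can follow $w_k$ specifically. To obtain the required occurrence, we instead take a point $y\in[w_k]_X$; by minimality, $p_k$ occurs in $y$ at some position $j\ge |w_k|$, with $j-|w_k|$ bounded by $R(|p_k|)$. We then let $s_k$ be the gap between the end of $w_k$ and that occurrence.

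However, this gap is bounded only in terms of $|p_k|$, which could spoil the averaging estimate. The fix is to take the averaging block from inside a long factor whose frequency is close to $\mu_i([v]_X)$ on most of its length. If the connector length becomes comparable to $|p_k|$, we instead extend by several concatenated blocks, using the fact that frequencies along $g_i$ are close to $\mu_i([v]_X)$ on all long windows of a suitable kind. Alternatively, one can simply cite \cite{dw2024}, where this residual-set argument is carried out in full.
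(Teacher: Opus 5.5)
Note first that the paper does not prove this lemma; it cites \cite{dw2024}. Your direct construction is a reasonable self-contained route: inside an arbitrary cylinder, alternate long stretches that look $\mu_1$-generic and $\mu_2$-generic. As written, however, it has a genuine gap at exactly the joining step you flag, and the repair you sketch does not close it.

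You fix $p_k$ first, as a prefix of $g_i$, and only afterwards find the connector $s_k$. The only bound your argument then provides is $|s_k|\le R(|p_k|)$, so the requirement that $|p_k|$ be large relative to $|w_ks_k|$ is circular. A cruder estimate cannot rescue this. Linear recurrence implies Condition~(B) (Rmk.~\ref{remark:Boshernitzan}) and hence unique ergodicity (Lem.~\ref{lem:Boshernitzan}), so here $R(n)/n$ is unbounded. The connector, whose $v$-frequency you do not control, may therefore swamp $p_k$.

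Your proposed fix appeals to frequencies along $g_i$ being close to $\mu_i([v]_X)$ on all long windows. That is false in this setting: by minimality, $g_i$ contains every long prefix of $g_{3-i}$, and those have $v$-frequency near $\mu_{3-i}([v]_X)$. Genericity only controls windows whose starting position is fixed while the length tends to infinity. Concatenating several blocks just reintroduces an uncontrolled connector at every junction.

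The missing idea is to dispense with connectors altogether. By minimality, $w_k\in\mathcal{L}(X)=\mathcal{L}(g_i)$, so $w_k$ occurs in $g_i$ at some position $j_k$. Then $T^{j_k}g_i\in[w_k]_X$ is still $\mu_i$-generic; this is the density of generic points in minimal shifts noted after Lem.~\ref{lem:generic}. Set $w_{k+1}=g_i(j_k,j_k+L_k)$. Since $j_k$ is chosen before $L_k$, the proportion of positions of $w_{k+1}$ where $v$ occurs tends to $\mu_i([v]_X)$ as $L_k\to\infty$. A suitable $L_k$ therefore puts it within $\eta$ of $\mu_i([v]_X)$ and also forces $|w_{k+1}|\to\infty$.

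With this change, the rest of your argument goes through:
\begin{enumerate}
\item the limit $x$ lies in $[u]_X$;
\item the averages of $\mathbf{1}_v$ at $N=|w_{k+1}|$ alternate between values at least $\eta$ apart, so $f_x(v)$ does not exist;
\item by the easy direction of Lem.~\ref{lem:generic}, $x$ is generic for no invariant measure;
\item since $u$ was arbitrary, such points are dense in $X$.
\end{enumerate}
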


\begin{remark}
    To summarise the discussion on the existence of factor frequencies, for a minimal shift $X$, we have the following. If $X$ is uniquely ergodic, then all $x \in X$ admit  (uniform and computable) factor frequencies, and moreover agree on the frequencies of each factor. On the other hand, if $X$ is not uniquely ergodic, then it will have a dense subset of words which do not admit factor frequencies, and a dense (by virtue of including the orbit of any generic point) subset of generic points for its ergodic measures. These generic points do admit factor frequencies, but they are not unanimous on the frequencies of all factors because they are generic for different measures.
\end{remark}

Let us now explain how to obtain the counterexamples to the preservation of unique ergodicity alluded to in the Introduction. 

\begin{lemma}
    There exists a uniformly recurrent word $x \in \{0, 1\}^\omega$ that admits factor frequencies, but $\mathcal{A}(x)$ does not, where $\mathcal{A}$ is the two-state transducer whose underlying automaton changes the state on reading $1$, stays in the same state on reading $0$, and prints $(q, b)$ on reading $b$, where $q$ is the current state.
\end{lemma}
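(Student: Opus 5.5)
Our plan is to realise $x$ as the symbolic coding of an irrational rotation, and $\mathcal{A}(x)$ as the coding of a $\mathbb{Z}/2\mathbb{Z}$ skew product over that rotation. We then invoke one of the cited non-uniquely-ergodic skew-product constructions (Veech \cite{veech1969strict}, or Guenais \& Parreau \cite{guenais605250valeurs}). Concretely, fix an irrational $\alpha$ and an interval (or finite union of intervals) $I \subseteq [0,1)$, and let $x(n) = 1$ iff $\{\theta + n\alpha\} \in I$. Such codings are uniformly recurrent: they are toric words. The rotation is uniquely ergodic, and the coding shift $X$ is measure-theoretically isomorphic to it, with the finitely many orbits of endpoints being countable and null. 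So the word $x$ admits factor frequencies, e.g.\ via Weyl equidistribution together with Lem.~\ref{lem:generic}, since every cylinder pulls back to a finite union of intervals.

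Next we identify $\mathcal{A}(x)$. The state after reading $x(0,n)$ is $q_n = q_0 + \sum_{i<n} x(i) \bmod 2$. Hence $\mathcal{A}(x)$ is the coding of the orbit of $(\theta, q_0)$ under the skew product $S(t,q) = (t+\alpha,\ q + \mathbf{1}_I(t))$ on $\mathbb{T}\times\mathbb{Z}/2$. Veech's theorem \cite{veech1969strict} states that for suitable $\alpha$ and interval $I=[0,\beta)$, the skew product $S$ is minimal but not uniquely ergodic: besides Lebesgue$\times$counting it admits other invariant measures. The Guenais--Parreau paper gives explicit such parameters. We transfer this to the shift $Y$ generated by $\mathcal{A}(x)$. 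The coding map from the skew-product dynamics to $Y$ is essentially a measurable conjugacy, up to the countable set of discontinuity orbits, which carry no non-atomic invariant mass. Two distinct invariant measures of $S$ therefore project to two distinct invariant measures on $Y$, because the cylinders of $Y$ generate, i.e.\ the partition by $I$, $I^c$ and the state is generating for an irrational skew product.

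To conclude, we would first check that $Y$ is minimal, which follows from minimality of $S$. Alternatively, note that $\mathcal{A}$ is a permutation automaton, so by the paper's forthcoming recurrence results $\mathcal{A}(x)$ is uniformly recurrent. Then Lem.~\ref{lem:UE-equivalence} gives a dense set of non-generic points in $Y$. To get that $\mathcal{A}(x)$ \emph{itself} fails to have frequencies, we choose $\theta$ appropriately. The set of $\theta$ for which $\mathcal{A}(x)$ is non-generic is nonempty, since it contains the projections of the points from Lem.~\ref{lem:UE-equivalence}. Changing $\theta$ never destroys the existence of frequencies for $x$, which holds for every $\theta$ by unique ergodicity of $X$. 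Any $y\in Y$ is $\mathcal{A}$ applied from some initial state to some $x' \in X$, where $x'$ is obtained by erasing the state coordinate. So we may replace $x$ by such an $x'$, which is still uniformly recurrent with frequencies, and possibly change the initial state of $\mathcal{A}$, which is harmless by the symmetry $q\mapsto q+1$.

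The main obstacle is the middle step: extracting from the literature a precise non-unique-ergodicity statement for the $\mathbb{Z}/2$ cocycle $\mathbf{1}_{[0,\beta)}$ and carefully transferring it through the coding, handling the discontinuity points, to the symbolic shift $Y$. We would try Veech's original cocycle first, since it is exactly of this two-state form.
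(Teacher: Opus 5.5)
Your proposal is correct and follows the paper's route: code an irrational rotation to get a uniformly recurrent $x$ with factor frequencies, identify $\mathcal{A}(x)$ with the coding of the $\mathbb{Z}/2$ skew product, push two distinct invariant measures of a minimal non-uniquely-ergodic skew product down to $Y$, and conclude with Lem.~\ref{lem:UE-equivalence}. The differences are in the details. The paper imports Chaika's two-interval cocycle and separates the two measures via Chaika's Prop.~3, whereas you use Veech's single-interval example (also cited in the paper) and separate them by injectivity of the coding. The paper proves unique ergodicity of $X$ by nested intervals and Oxtoby, whereas you use Weyl equidistribution. Finally, your explicit replacement of $x$ by the projection $x'$ of a non-generic point, with the initial state fixed via the symmetry $q\mapsto q+1$, spells out a step the paper leaves implicit.
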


\begin{proof}
    We shall choose $x$ to be the symbolic trace of a carefully constructed topological dynamical system. Chaika \cite{chaika-counter} constructs a minimal and uniquely ergodic dynamical system $(X', T')$ as follows: the space $X'$ is the unit circle identified by the interval $[0, 1)$, and $T'$ maps $\zeta$ to $\zeta + \alpha$, where $\alpha < 1/3$ is an irrational number. The invariant measure $\mu'$ is the usual Lebesgue measure. There is a distinguished set $I$ which is the union of the intervals $[0, \gamma)$ and $[\alpha + \beta, \alpha + \beta + \gamma)$, where $\beta, \gamma$ are constants depending on $\alpha$. The dynamics of the skew product $Y' = (\mathbb{Z}_2 \rtimes X', T')$ maps $(b, \zeta)$ to $(b + \mathbf{1}_I(\zeta), \zeta + \alpha)$, where $\mathbf{1}_I(\zeta)$ is $1$ if $\zeta \in I$, and $0$ otherwise. This skew product is minimal, but admits precisely two distinct ergodic measures $\nu_0', \nu_1'$. 

    For each $\zeta$, we define its symbolic trace $x = \mathsf{tr}(\zeta)$ as $x(n) = \mathbf{1}_I(\zeta + n\alpha)$. We claim that $\mathsf{tr}(\zeta)$ generates a minimal shift. To that end, for each $u \in \{0, 1\}^+$, we shall inductively define sets $I_u$ with the property that $u$ occurs at index $n$ in $\mathsf{tr}(\zeta)$ if and only if $\zeta + n\alpha \in I_u$. Clearly, $I_1 = I$, and $I_0 = [0, 1) \setminus I_1$, and $I_{bu} = I_b \cap (-\alpha + I_u)$. By examining the orientation of the intervals, we have that each $I_u$ is either empty, or contains a non-empty open set. Thus, $\mathsf{tr}(\zeta)$ is uniformly recurrent, and furthermore $\mathcal{L}(\mathsf{tr}(\zeta))$ is the set of factors $u$ such that $I_u$ is non-empty (we take $I_\varepsilon$ to be $[0, 1)$). It follows that every $\mathsf{tr}(\zeta)$ generates the same shift $X$, and that $f_{\mathsf{tr}(\zeta)}(u) = \mu'(I_u)$. The unique ergodicity of $X$ follows readily from the following lemma.

\begin{lemma}
    \label{lem:symbolic-UE}
        Consider the partition of the unit circle into finitely many semi-open intervals $I_j = [\beta_j, \gamma_j)$, and let $\alpha$ be irrational. The shift $X$ generated by the codings (with respect to the intervals $I_j$) of the rotation by $\alpha$ is uniquely ergodic.
    \end{lemma}
    \begin{proof}
        Denote by $\tilde I_j$ the interval $(\beta_j, \gamma_j]$.
        By symmetry, the set of codings with respect to the partition of ``dual'' intervals is also included in $X$. We shall prove that $X$ comprises precisely of the codings with respect to the original partition, and codings with respect to the dual partition. 

        For each $u \in \mathcal{L}_n(X)$, we define $I_u$ and $\tilde I_u$ as before, and also define the non-empty closed interval $\bar I_u$ as their union. If $x \in X$ is a coding of the rotation starting at $\zeta$, then for each prefix $u$ of $x$, we must have $\zeta \in \bar I_u$. Conversely, if $\zeta$ is in each $\bar I_u$, then $x$ is its coding because by the orientation of the intervals, $\zeta$ is either in each $I_u$, or in each $\tilde I_u$ (if $\zeta$ was on the left endpoint of $I_u$ and right endpoint of $\tilde I_{uv}$, it would imply that $I_{uv}$ is empty, a contradiction).

        Thus, each $x \in X$ defines an infinite sequence of nested closed intervals, and there necessarily exists a point $\zeta$ in their countable intersection. By the preceding discussion $x$ codes a rotation of this $\zeta$ with respect to either the original partition, or the dual partition. This proves that each $x \in X$ admits factor frequencies, with $f_x(u)$ given by the Lebesgue measure of $I_u$. We use Oxtoby's theorem to conclude that $x$ is uniquely ergodic. 
    \end{proof}

    We now define the shift $Y$ to be the set of runs of $\mathcal{A}$ (starting in either state) on words of $X$. Alternately, $Y$ is the minimal shift generated by symbolic traces of the minimal system $Y' = Z_2 \rtimes X'$. 

    Finally, we show that $Y$ admits two distinct invariant measures $\nu_0, \nu_1$. For this, observe that $\mathsf{tr}\colon Y' \rightarrow Y$ is measurable. Indeed, the subsets $B$ of $Y$ for which $\mathsf{tr}^{-1}(B)$ is Lebesgue-measurable form a $\sigma$-algebra which includes the cylinders of $Y$ as they are images of measurable sets $\{b\} \times I_u$. Thus, the ergodic measures $\nu_0', \nu_1'$ of $Y'$ project to invariant measures $\nu_b = \nu_b' \circ \mathsf{tr}^{-1}$. 

    To complete the proof, we record that $\nu_0, \nu_1$ are indeed distinct because $\nu_0', \nu_1'$ are such that $\nu'_b(b \times B) = \mu'(B)$, i.e., each measure concentrates its mass in one copy of $X'$ \cite[Prop.~3]{chaika-counter}.

   We thus have that $Y$ is a shift that is not uniquely ergodic, and contains the runs of $\mathcal{A}$ on all $x \in X$. In particular, by Lem.~\ref{lem:UE-equivalence}, there exists some counterexample $y = \mathcal{A}(x)$ that does not admit factor frequencies despite $x$ doing so.
\end{proof}

The natural questions that follow are: can  one identify when a shift is uniquely ergodic? Is there a stronger condition on $X$ which precludes the misbehaviour of $Y$ described above? Boshernitzan \cite{Boshernitzan1992} gave a sufficient (but not necessary) condition in answer to the former. In this paper, we shall show that this condition answers the latter in the affirmative. 

\begin{definition}[Condition (B)]
\label{def:Boshernitzan}
    Let $X \subseteq \Sigma^\omega$ be a minimal shift space. For an invariant measure $\mu$, define $\kappa\colon \mathbb{N} \rightarrow \mathbb{R}$ as $$\kappa(n) = \min_{u \in \mathcal{L}_n(X)} \left( n \cdot \mu\left([u]_X\right) \right).$$ The minimal shift space $X \subseteq \Sigma^\omega$ satisfies Boshernitzan's condition (also referred to as Condition (B)) if it admits an invariant measure $\mu$ such that the corresponding $\kappa$ satisfies $\limsup_{n \rightarrow \infty} \kappa(n) > 0$. A uniformly recurrent word $x \in \Sigma^\omega$ satisfies Boshernitzan's condition if it generates a shift space $X$ that satisfies Boshernitzan's condition.
\end{definition}

\begin{lemma}[Boshernitzan \cite{Boshernitzan1992}]
    \label{lem:Boshernitzan}
    A minimal shift space $X \subseteq \Sigma^\omega$ that satisfies Boshernitzan's condition is uniquely ergodic. If a word $x \in \Sigma^\omega$ satisfies Boshernitzan's condition, then it admits uniform factor frequencies.
\end{lemma}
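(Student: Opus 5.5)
Let $\mu$ be an invariant measure on $X$ with $\limsup_n \kappa(n) > 0$. I would show that every invariant measure $\nu$ of $X$ equals $\mu$; it suffices to show $\nu \ll \mu$. Unique ergodicity then follows, and the statement about uniform frequencies follows from \cite[Prop.~3.8.14]{DurandPerrin2022} and Oxtoby's theorem, as recalled above.

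\emph{Step 1: reduce to ergodic measures.} Since the invariant measures form a convex set whose extremal points are the ergodic measures, it is enough to show that every ergodic $\nu$ equals $\mu$. Distinct ergodic measures are mutually singular. So if $\nu \neq \mu$ is ergodic, there is a Borel set $B$ with $\nu(B)=1$ and $\mu(B)=0$. The plan is to derive a contradiction from this using the lower bound $\mu([u]) \ge \varepsilon/n$ for all $u\in\mathcal{L}_n(X)$, valid along a sequence $n_k\to\infty$ with $\kappa(n_k)\ge\varepsilon$.

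\emph{Step 2: cover $B$ by few cylinders.} By regularity there is an open $U \supseteq B$ with $\mu(U)$ as small as we like. The difficulty is that $U$ is a union of cylinders of varying lengths, not of length exactly $n_k$. Instead I would follow Boshernitzan's counting argument. Fix $n=n_k$. The number of cylinders of length $n$ is at most $n/\varepsilon$, since their $\mu$-masses sum to $1$ and each is at least $\varepsilon/n$. By the Birkhoff theorem for $\nu$ (or Lemma~\ref{lem:generic} applied to a $\nu$-generic point $y$), there are words $w$ of length $N\gg n$ in $\mathcal{L}(X)$ in which the empirical frequencies approximate $\nu$ on length-$n$ cylinders. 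The same holds for $\mu$-generic words.

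\emph{Step 3: the key combinatorial estimate.} Because $\mu$ and $\nu$ are different, there are $\delta>0$ and a word $v$ with $|\nu([v])-\mu([v])|>\delta$. The aim is to show that the bound $\#\mathcal{L}_n(X)\le n/\varepsilon$, together with every length-$n$ cylinder having $\mu$-mass at least $\varepsilon/n$, forces all ergodic measures to agree on $[v]$ up to $O(\delta)$. Boshernitzan's idea is this. Consider the Rauzy graph of length-$n$ words. Low complexity ($\le n/\varepsilon$ words) implies that $\mathcal{L}_n(X)$ decomposes into at most $C(\varepsilon)$ ``long'' special-free paths, where $C(\varepsilon)$ is independent of $n$. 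Each such path $p$ corresponds to a segment that, once entered, is traversed in its entirety. Any invariant measure is therefore determined, up to an error $O(|v|/n)$, by the weights it assigns to the at most $C(\varepsilon)$ paths. The mass condition on $\mu$ gives every path $\mu$-weight bounded below. Letting $n=n_k\to\infty$, the paths become long, and the frequency of $v$ inside each path segment is governed by minimality and uniform recurrence. The mixture coefficients live in a simplex of bounded dimension $C(\varepsilon)$. A compactness/limit argument then shows that the space of ergodic measures has at most $C(\varepsilon)$ elements, and each of them must give positive mass comparable to $\mu$'s to the path decomposition. Finally, if $\nu(B)=1$ and $\mu(B)=0$, then the $\nu$-weight concentrates on segments whose $\mu$-weight tends to $0$. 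This contradicts the lower bound $\mu([u])\ge\varepsilon/n$ applied to words $u$ inside those segments, which cover a $\nu$-proportion of order $1$.

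\emph{Main obstacle.} The hard part is Step 3. One must turn ``few length-$n$ words, each of $\mu$-mass $\gtrsim 1/n$'' into absolute continuity $\nu\ll\mu$, handling the fact that the bound holds only along a subsequence $n_k$. I would first try to prove the cleaner inequality $\nu([u]) \le C'\mu([u])$ for all $u\in\mathcal{L}_{n_k}(X)$, with $C'$ depending only on $\varepsilon$. This would come from counting returns of $u$ inside the long $\nu$-generic words via the bounded path decomposition. That inequality passes to all Borel sets, giving $\nu\ll\mu$. Invariance and ergodicity of $\nu$ then force $\nu=\mu$.
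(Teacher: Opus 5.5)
\paragraph{There is a genuine gap.} The paper does not prove this lemma; it quotes it from \cite{Boshernitzan1992}. The second sentence then follows from unique ergodicity via \cite[Prop.~3.8.14]{DurandPerrin2022}, as you say. Your argument for unique ergodicity, however, tacitly treats the measure $\mu$ witnessing Condition~(B) as ergodic, when it is only assumed invariant.

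\begin{itemize}
\item In Step~1, ``distinct ergodic measures are mutually singular'' does not give a set $B$ with $\nu(B)=1$ and $\mu(B)=0$. An ergodic $\nu\neq\mu$ may be an ergodic component of $\mu$.
\item Your final target points the wrong way for the same reason. If $\mu=\tfrac12(\nu_1+\nu_2)$ with $\nu_1\perp\nu_2$ ergodic, then $\nu_i\le 2\mu$ yet $\nu_i\neq\mu$. The density $d\nu/d\mu$ is $T$-invariant, but it is forced to be constant only by ergodicity of $\mu$, not of $\nu$. So even a complete proof of $\nu([u])\le C'\mu([u])$ would only show that there are at most $C'$ ergodic measures.
\item Step~3 does not close either. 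Your path decomposition uses only $\#\mathcal{L}_n(X)\le n/\varepsilon$. That bound at a single $n$ does not control the special factors of length $n$. You would first have to pass to some $m\in[n/2,n]$ with $p(m+1)-p(m)\le 2/\varepsilon$, and even then the resulting paths need not be long.
\item More fundamentally, linear-complexity information can only bound the number of ergodic measures. Codings of Keane's minimal, non-uniquely ergodic $4$-interval exchanges have complexity $3n+1$.
\item Your closing contradiction is unjustified. That $\nu$ is carried by $o(n)$ cylinders of length $n$ is not absurd in itself, since words with small periods can carry large mass.
\end{itemize}

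\paragraph{The missing idea: ergodicity of $\mu$, then convexity.} Suppose $n_k\mu([u])\ge\varepsilon$ for all $u\in\mathcal{L}_{n_k}(X)$. For any invariant $\nu$, the measure $\lambda=\tfrac12(\mu+\nu)$ satisfies the same bound with $\varepsilon/2$. So once you know that \emph{every} measure satisfying (B) is ergodic, $\lambda$ is ergodic, hence extremal, hence $\nu=\mu$.

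To prove ergodicity, suppose $T^{-1}E=E$ with $0<\mu(E)<1$. Choose a union $A$ of length-$N$ cylinders with $\mu(A\triangle E)<\eta$. For $u\in\mathcal{L}_n(X)$, let $\phi(u)$ be the fraction of its $n-N+1$ windows of length $N$ that lie in $A$.
\begin{itemize}
\item Invariance of $E$ gives $\int_E(1-\phi)\,d\mu<\eta$ and $\int_{X\setminus E}\phi\,d\mu<\eta$, for every $n$.
\item Hence the words with $\phi\in[0.1,0.9]$ have total $\mu$-mass at most $20\eta$. For $\eta$ small, words with $\phi>0.9$ and words with $\phi<0.1$ both occur in $\mathcal{L}_n(X)$.
\item By minimality, some $x\in X$ contains an occurrence of the first kind followed by one of the second.
\item Sliding the window between these occurrences changes $\phi$ by at most $1/(n-N+1)$ per step. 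So at least $0.8(n-N+1)-1$ pairwise distinct words of length $n$ have $\phi\in[0.1,0.9]$.
\item At $n=n_k$, Condition~(B) gives these words total mass at least $(0.8(n_k-N+1)-1)\varepsilon/n_k$. This contradicts the bound $20\eta$ once $\eta<\varepsilon/100$ and $k$ is large.
\end{itemize}
This argument uses the lower bound on every individual cylinder, not merely the complexity bound it implies.
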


\begin{remark}
\label{remark:Boshernitzan}
    If a word $x$ is linearly recurrent, then it satisfies Boshernitzan's condition \cite[Thm.~15]{durand2008}. In fact,
     linear recurrence for a minimal shift space  is equivalent to $ \liminf \kappa(n)  >0$  by  a result due to Boshernitzan (see  \cite[Exercise 174]{Fog02}). By Rmk.~\ref{remark:recurrentsuffix} below, Boshernitzan's condition is also satisfied by primitive  morphic words. The work of Damanik and Lenz \cite{damanik2006} is important from the perspective of applicability: they show that Sturmian words satisfy Boshernitzan's condition, as do almost all words coding the orbits of interval-exchange transformations \cite[Thm.~5]{damanik-applications}, a large class of 1-dimensional toric words \cite[Sec.~5]{damanik-applications}, and almost all Arnoux-Rauzy words \cite[Thm.~12]{damanik-applications}.
\end{remark}

\section{The Krohn-Rhodes Framework}

In this paper, we shall work with deterministic automata and transducers. Henceforth, when we refer to automata and transducers, we mean the deterministic models of computation described below. 

A transducer $\mathcal{A}$ is given by $(Q, q_\mathsf{init}. \Sigma, \Gamma,  \delta^o, \delta^t)$, where $Q$ is a finite non-empty set of states, $q_\mathsf{init} \in Q$ is the initial state, $\Sigma$ is the input alphabet, $\Gamma$ is the output alphabet, $\delta^t\colon Q \times \Sigma \rightarrow Q$ is the transition function, and $\delta^o\colon Q \times \Sigma \rightarrow \Gamma^*$ is the output function. For convenience, we shall extend the domain of $\delta^t,\delta^o$ to $Q \times \Sigma^*$ in the obvious way: $\delta^t(q, \varepsilon) = q$ and $\delta^o(q, \varepsilon) = \varepsilon$ for all $q$, while $\delta^t(q, ua) =\delta^t(\delta^t(q, u), a)$ and $\delta^o(q, ua) =\delta^o(q, u) \cdot\delta^o(\delta^t(q, u), a)$. A \emph{uniform} transducer is one where we have that for all $q, a$, we have $|\delta^o(q, a)|$ is the same: if the common length is $l$, we have an $l$-uniform transducer. A \emph{non-erasing} transducer is one where for all $q \in Q, a \in \Sigma$ we have $\delta^o(q, a) \ne \varepsilon$.

Automata are special cases of $1$-uniform transducers where $\Gamma = Q \times \Sigma$, and $\delta^o$ is the identity function, and we therefore present them simply as $(Q, q_\mathsf{init}, \Sigma, \delta)$. We shall denote the output of a transducer $\mathcal{A}$ upon reading a word $x$ (finite or infinite) as $\mathcal{A}(x)$. When $\mathcal{A}$ is an automaton, we call the sequence $\mathcal{A}(x)$ of state-letter pairs the \emph{run} of $\mathcal{A}$ on $x$, because indeed if we denote $\mathcal{A}(x)(n) = (q_n, a_n)$, then for every $n$ we have $\delta(q_n, a_n) = q_{n+1}$.

Given a transducer $\mathcal{A}$, we shall denote its \emph{underlying automaton} (obtained as described above) by $\mathcal{A}_0$. We observe that we can always define a substitution $\sigma\colon (Q \times \Sigma) \rightarrow \Gamma^*$ such that for all $x$, we have that $\mathcal{A}(x) = \sigma\left(\mathcal{A}_0(x)\right)$.  The following lemma assures us that it suffices to focus only on automata to prove our results.
\begin{lemma}
    \label{lem:automata-suffice}
    Let $\sigma\colon \Sigma \rightarrow \Gamma^*$ be a substitution, and let $x \in \Sigma^\omega$ be an infinite word. If $x$ has one of the following properties, then $\sigma(x)$ also has that property, provided it is an infinite word.
    \begin{enumerate}
        \item primitive morphic
        \item recurrent
        \item (effectively) uniformly recurrent
        \item (effectively) linearly recurrent
        \item uniformly recurrent and admits  factor frequencies
        \item satisfies Boshernitzan's condition
    \end{enumerate}
    In item (5), if $x$ is effectively uniformly recurrent and admits computable factor frequencies, then $\sigma(x)$ also has these properties.
\end{lemma}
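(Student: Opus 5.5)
We first reduce to non-erasing $\sigma$: since $\sigma(x)$ is infinite, some letter with non-empty image occurs infinitely often. Letters with empty image can be absorbed by a recoding. Concretely, when $x$ is uniformly recurrent, we parse $x$ into blocks, each consisting of a letter with non-empty image followed by a maximal run of erased letters. Such runs have bounded length (otherwise an erased word of unbounded length would be a factor, and by uniform recurrence erased words would fill $x$). Hence $x$ is the image of a word $x'$ over a finite alphabet of such blocks, and $x'$ inherits every listed property. This is routine because $x'$ arises from $x$ by a sliding-block parsing with bounded delay. For plain recurrence we instead argue directly below without this reduction. From now on $\sigma$ is non-erasing, with $m = \min_a|\sigma(a)| \ge 1$ and $M = \max_a |\sigma(a)|$.

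Items (2)--(4) follow a single template. Let $w$ be a factor of $y=\sigma(x)$ of length $n$. Then $w$ lies inside $\sigma(u)$ for some factor $u$ of $x$ with $|u| \le n/m + 2$ (for recurrence: some factor $u$ at all, taking the occurrence of $w$ with its covering preimage). Every occurrence of $u$ in $x$ yields an occurrence of $w$ in $y$ at the same relative offset, so recurrence transfers. For gaps, two consecutive occurrences of $u$ differ by at most $R_x(|u|)$, so $R_y(n) \le M\,(R_x(n/m+2) + n/m + 2)$. This gives uniform recurrence, linear recurrence (linear bounds compose), and effectiveness, since $y(n)$ is computable from a prefix of $x$ of length at most $n/m+1$. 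Item (1) is the content of Rmk.~\ref{remark:cobham} together with composition: if $x=\tau(z)$ with $z$ a primitive substitutive fixed point, then $\sigma(x)=(\sigma\circ\tau)(z)$, and $\sigma\circ\tau$ is a substitution, so $\sigma(x)$ is primitive morphic (non-erasure of the composite is handled by the restriction noted in the preliminaries).

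For (5) and (6) we use the standard tower construction. Let $\hat\Sigma = \{(a,i) : 0\le i<|\sigma(a)|\}$ and let $\hat x$ replace each letter $a$ of $x$ by $(a,0)\cdots(a,|\sigma(a)|-1)$. Then $y$ is a letter-to-letter image $\pi(\hat x)$. The frequencies of $\hat x$ are explicit: a factor $\hat w$ of $\hat x$ determines a unique factor $u$ of $x$ together with an offset, and $f_{\hat x}(\hat w)=f_x(u)/\lambda$, where $\lambda=\sum_a |\sigma(a)| f_x(a)$ is the mean block length. This follows because the number of positions of $\hat x$ up to index $N'$ is $\sum_{i<N}|\sigma(x(i))| \sim \lambda N$. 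Then $f_y(w)=\sum_{\pi(\hat w)=w} f_{\hat x}(\hat w)$, a finite sum. This gives (5), and computability is inherited directly: compute $\lambda$ and the finitely many relevant $u$ via Rmk.~\ref{remark:compute-fac-lang}, using effective uniform recurrence of $x$.

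For (6) we take the measure $\hat\mu$ induced as above and push it forward. By Lem.~\ref{lem:Boshernitzan} the shift of $x$ is uniquely ergodic, so $\mu$ is given by frequencies. For a length-$n$ factor $w$ of $y$, choose a covering $u$ of length about $n/m$. Then $\nu([w]) \ge \mu([u])/\lambda$, with $|u| \le n/m+2$. The main obstacle is that Condition (B) controls $\min_u \mu([u])$ only along a subsequence $n_k$ of lengths in $x$, while $|u|$ varies with the occurrence. We handle this using monotonicity: $\mu([u]) \ge \mu([u'])$ whenever $u$ is a prefix of $u'$. So we fix $n_k$ with $\kappa_x(n_k) \ge c$ and set $n = m(n_k-2)$. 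Every $w$ of length $n$ is covered by some $u$ of length at most $n_k$, and extending $u$ to length exactly $n_k$ gives $\nu([w]) \ge c/(n_k\lambda)$. Hence $\kappa_y(n) \ge c\,m(n_k-2)/(n_k\lambda)$, which is bounded away from zero, so $\limsup \kappa_y>0$. Minimality of the shift generated by $y$ comes from (3).
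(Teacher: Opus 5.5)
Your proposal is correct and follows essentially the paper's route: a covering (attributed) preimage for (2)--(4), composition for (1), the stuttering decomposition $\sigma=\tau\circ\hat\sigma$ with renormalisation by the mean block length for (5), and, for (6), monotonicity of $\mu([u])$ under extension to pass from lengths along the subsequence where $\kappa_X$ is large to lengths in $y$. The one organisational difference is that you first recode $x$ so that $\sigma$ becomes non-erasing, whereas the paper keeps the erased letters and absorbs them through the gap bound $R_x(1)$ (covering factors of length at most $R_x(1)|v|$, and several attributed factors $u_1,\dots,u_d$ in (5)); note that your block word $x'$ does not inherit frequencies and Condition (B) as a sliding-block fact, but via the same renormalisation argument, this time dividing by the frequency of non-erased letters.
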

\begin{proof}
    Item (1) follows immediately by definition (see Rmk.~\ref{remark:cobham}). 
    
    Items (2) through (4) have similar proofs.  To formalise the proof, consider the \emph{attribution function} $\alpha$ from indices of $\sigma(x)$ to indices of $x$. We define $\alpha(j)$ to be the smallest index $i$ such that $j < |\sigma(x(0, i+1))|$, i.e., the letter at index $j$ of $\sigma(x)$ is attributed to the image of $x(i)$. We define $\alpha(J) = \bigcup_{j \in J} \alpha(j)$. In particular, when $v$ is a factor of $\sigma(x)$ at index $j$, then we set $J = [j, j + |v|)$, obtain $[i, i')$ as the smallest interval that contains $\alpha(J)$, and say that the above occurrence of $v$ is attributed to an occurrence of $u = x(i, i')$ in $x$ at index $i$. 

    Clearly, if an occurrence of $v$ is attributed to an occurrence of $u$, then the recurrence of $u$ implies that of $v$. This proves item (2). 

    In case $x$ is uniformly recurrent, an occurrence of $v$ will always be attributed to a factor $u$ that is at most $R_x(1) \cdot |v|$ in length. This is because the gaps between occurrences of non-erasing letters (letters $a$ for which $\sigma(a)$ is not the empty word) are bounded by $R_x(1)$. The gaps between the occurrences of $u$ are themselves bounded by $R_x(R_x(1) \cdot |v|)$. Finally, the gaps between the occurrences of $v$ must be bounded by $\left(\max_{a\in \Sigma} |\sigma(a)|\right) \cdot  R_x(R_x(1) \cdot |v|)$. This proves item (3), and also item (4): if $R_x(n)$ is linear, then the bound on the gaps between occurrences of $v$ will also be linear in $|v|$.

    We prove item (5) in two steps. As in Rmk.~\ref{remark:cobham}, we decompose $\sigma$ as $\tau \circ \hat\sigma$, where $\tau$ is a letter-to-letter substitution, and $\hat\sigma$ maps each letter $a$ to $|\sigma(a)|$ copies of $a$, i.e., $\hat\sigma(a) = (a, 0)(a, 1)\cdots(a, |\sigma(a)|-1)$. We denote $\hat\sigma(x)$ by $z$, and $\tau(z)$, which is the same as $\sigma(x)$, by $y$. Observe by the preceding discussion that $z, y$ are uniformly recurrent. We shall show that $z$ admits factor frequencies, and in turn so does $y$. 
    
    By uniform recurrence, we have that any factor $w$ of $z$ can be attributed to one of finitely many factors $\{u_1, \ldots, u_d\}$ of $x$, and this factor will have length at most $R_x(1)\cdot |w|$. By the definition of attribution, there is no pair $u_i, u_j$ such that $u_i$ is a factor of $u_j$ (in particular, the cylinders $[u_1]_X, \ldots, [u_d]_X$ are disjoint); moreover by the positional-encoding nature of the construction, each $u_i$, upon restricting to non-erasing letters, will give a word $u$ such that $\hat\sigma(u)$ contains exactly one occurrence of $w$ (and conversely, the set of factors which begin and end with a non-erasing letter, and give $u$ upon restricting to non-erasing letters is $\{u_1, \ldots, u_d\}$). We thus obtain $f_z(w) = f_z(\hat\sigma(u))$. We claim that $f_z(\hat\sigma(u)) = F/L$, where $F = \sum_{i=1}^d f_x(u_i)$, and $L$ is the non-zero constant $\sum_{a \in \Sigma} f_x(a)|\hat\sigma(a)|$. Indeed, $L \ge 1/R_x(1)$, and is hence non-zero.   

     We note that $F = \lim_{N\rightarrow \infty} \frac{1}{N}\sum_{i=0}^{N-1}\sum_{j=1}^d\mathbf{1}_{u_j}(T^i x)$ by definition. By Lem.~\ref{lem:generic}, we also have that $L = \lim_{N \rightarrow \infty} \frac{1}{N}\sum_{i=0}^{N-1}|\sigma(x(i))|$ (consider the function that maps $x$ to $|\sigma(x(0))|$). We rephrase these observations by taking appropriate subsequences of Birkhoff averages as follows. Let $(\beta_N)_{N \in \mathbb{N}}$ denote the strictly increasing sequence whose $N$-th element $\beta_N$ is the index (in $x$) of the $N$-th occurrence of an element of $\{u_1, \ldots, u_d\}$. We have in particular that $\lim_{N\rightarrow \infty} \frac{N}{\beta_N} = F$. Similarly, we define the strictly increasing sequence $(\gamma_N)_{N \in \mathbb{N}}$ of indices of occurrences of $\hat\sigma(u)$ in $z$ as $\gamma_N = \sum_{i=0}^{\beta_N-1}|\sigma(x(i))|$, and observe that $\lim_{N \rightarrow \infty} \frac{\gamma_N}{\beta_N} = L$. Since $L \ne 0$, we obtain that $\lim_{N \rightarrow \infty} \frac{N}{\gamma_N} = \frac{F}{L}$. Denoting $U = \{\gamma_N \mid N \in \mathbb{N}\}$, it remains to observe (by elementary analytic means) that $\lim_{M \rightarrow \infty}\frac{|U \cap [0, M)|}{M} = \lim_{N \rightarrow \infty} \frac{N}{\gamma_N} = \frac{F}{L}$.

    Indeed, the sequences agree whenever $M = \gamma_N$. The sequence on the left decreases in the interim, and only increases when $M$ steps from $\gamma_N$ to $\gamma_N+1$. A simple calculation shows that the increment is bounded by $1/\gamma_N$, i.e., asymptotically vanishes. It then follows that the sequences converge to the same limiting value $F/L$.

    The proof that $y = \tau(z)$ inherits the property of admitting factor frequencies from $z$ is simpler because $\tau$ is letter-to-letter. For any factor $v$, there are finitely many factors $w_1, \ldots, w_d$ of the same length such that $\tau(w_i) = v$. The frequency $f_y(v)$ is simply $\sum_i f_z(w_i)$.

    To prove item (6)\footnote{The proof of \cite[Thm.~8(a)]{damanik2006}, which states item (6), implicitly assumes the substitution is non-erasing.}, observe that $$\min_{v \in \mathcal{L}_n(y)}f_y(v) \ge \frac{1}{L} \min_{u \in \mathcal{L}_{n \cdot R}(x)}f_x(u),$$ where $R = R_x(1)$. Writing $\kappa'(n)$ as shorthand for $\min_{u \in \mathcal{L}}f(u)$, using $X, Y$ to denote the shifts generated by $x, y$, and defining $\mu_X, \mu_Y$ using $f_x, f_y$, we get $L(R+1)n\cdot \kappa_Y'(n) \ge \max_{0\le r < R} (nR + r) \cdot \kappa'_X(nR+r)$. This allows us to conclude that $\limsup_{n \rightarrow \infty} \kappa_Y(n) > 0$, given $\limsup_{n \rightarrow \infty} \kappa_X(n) > 0$, thus establishing the preservation of Boshernitzan's condition.
\end{proof}

We record an interesting corollary of the above proof.

\begin{corollary}
\label{cor:non-erasing-substitution}
    Let $\sigma$ be a non-erasing substitution. If $x$ admits (computable) factor frequencies, then so does $\sigma(x)$. 
\end{corollary}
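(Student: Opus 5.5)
The plan is to reuse the two-step decomposition from the proof of item (5) in Lem.~\ref{lem:automata-suffice}, this time without any uniform recurrence hypothesis. Write $\sigma = \tau \circ \hat\sigma$, where $\hat\sigma(a) = (a,0)(a,1)\cdots(a,|\sigma(a)|-1)$ and $\tau$ is letter-to-letter. Since $\sigma$ is non-erasing, every $|\hat\sigma(a)| \ge 1$. Put $z = \hat\sigma(x)$ and $y = \tau(z) = \sigma(x)$. The final step, from $z$ to $y$, is unchanged: for a factor $v$ of $y$, the preimages $w_1,\dots,w_d$ under $\tau$ of the same length are finitely many, and $f_y(v) = \sum_i f_z(w_i)$. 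Computability transfers, because the $w_i$ can be enumerated as all words of length $|v|$ over the finite alphabet mapping to $v$, and a non-factor simply has frequency $0$.

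For $z$, the previous proof used uniform recurrence only to guarantee that each factor $w$ of $z$ is attributed to finitely many factors $u_1,\dots,u_d$ of $x$. In the non-erasing case this holds automatically. Every letter of $x$ contributes at least one letter to $z$, so an occurrence of $w$ is attributed to a factor of $x$ of length at most $|w|$. Moreover, the positional encoding $(a,j)$ determines exactly where $w$ starts inside $\hat\sigma(u_i)$. Hence the occurrences of $w$ in $z$ are in bijection with the occurrences in $x$ of words $u$ from an explicitly computable finite set $U_w$, namely the words of length at most $|w|$ over $\Sigma$ whose image under $\hat\sigma$ contains $w$ at the position prescribed by the first letter of $w$.

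Set $F = \sum_{u \in U_w} f_x(u)$ and $L = \sum_a f_x(a)|\sigma(a)|$. Non-erasure gives $L \ge \sum_a f_x(a) = 1$, so $L > 0$. The claim is $f_z(w) = F/L$, and the counting argument is the one in Lem.~\ref{lem:automata-suffice}. The quantity $\gamma_N = \sum_{i<\beta_N} |\sigma(x(i))|$ satisfies $\gamma_N/\beta_N \to L$, by the existence of letter frequencies in $x$ and the boundedness of $|\sigma(a)|$. Also $N/\beta_N \to F$, and the interpolation between the $\gamma_N$ goes through because consecutive gaps are controlled by $1/\gamma_N$.

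The main obstacle is the degenerate case $F = 0$, where $\beta_N$ may not be defined or the ratio argument breaks down. I would avoid this by counting directly. The number of occurrences of $w$ in $z(0,\gamma(M))$ equals the number of occurrences of members of $U_w$ in $x(0,M)$, up to an $O(|w|)$ boundary term, where $\gamma(M) = |\sigma(x(0,M))|$ and $\gamma(M)/M \to L$. Since $\gamma(M+1) - \gamma(M) \le \max_a|\sigma(a)|$, every prefix length of $z$ lies within a bounded distance of some $\gamma(M)$, so the limit passes to all prefix lengths. Computability then follows: approximate each $f_x(u)$ for $u \in U_w$ and each $f_x(a)$ to a precision chosen so that the quotient $F/L$ is within $\delta$. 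The denominator bound $L \ge 1$ makes this error propagation uniform.
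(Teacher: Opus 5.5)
Your proposal is correct and is essentially the paper's argument: the paper records this corollary as a consequence of the proof of item (5) of Lem.~\ref{lem:automata-suffice}, and non-erasure replaces uniform recurrence there exactly as you describe (attributed factors have length at most $|w|$, and $L \ge 1$). Your direct count via $\gamma(M) = |\sigma(x(0,M))|$ is a worthwhile refinement, since it also handles factors of $x$ that occur only finitely often, for which the sequence $(\beta_N)$ in the paper's argument is not defined. One wording fix is needed: $U_w$ must consist of the minimal attributed words (in the non-erasing case this is the single word of block labels read off from $w$), because otherwise extensions $ub$ of a valid $u$ would also lie in $U_w$ and be counted twice.
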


Our main tool to prove preservation results for automata will be the Krohn-Rhodes theorem, which decomposes automata into simple, ``well-behaved'' components that are connected ``in series''. The strategy is then to prove preservation results for these simpler automata. The Krohn-Rhodes theorem is often stated for semigroups \cite{krohn-rhodes-OG}; see \cite[Thm.~1]{meyer1969remarks} for a formulation in terms of automata (see also \cite[Thm.~3]{Maler2010}). We now introduce the terminology required to state the theorem. 

Given an automaton $\mathcal{A} = (Q, q_\mathsf{init}, \Sigma, \delta)$, for every $u \in \Sigma^*$, we define $\delta_u\colon Q \rightarrow Q$ to be the transition function induced by $u$, i.e., $\delta_u(q) = \delta(q, u)$. An automaton $\mathcal{A}$ is called a \emph{reset automaton} if for every $a \in \Sigma$, the function $\delta_a$ is either the identity function, or constant-valued. Intuitively, a reset automaton is a flip-flop whose memory records the last reset. An automaton $\mathcal{A}$ is called a \emph{permutation automaton} (also a \emph{group automaton}) if for every $a\in \Sigma$, $\delta_a$ defines a permutation of $Q$, i.e., is bijective. The automaton $\mathcal{A}$ is called \emph{counter-free} if there does not exist $u \in \Sigma^+, Q' \subseteq Q$ such that $\delta_u$ induces a non-trivial permutation on $Q'$. For the sake of brevity, we declare that permutation, reset, and counter-free transducers are respectively those whose underlying automata are permutation, reset, and counter-free automata. We remark that counter-free automata are of particular interest because of their equivalence with first-order-definable (aperiodic) languages, as established by the McNaughton--Papert theorem \cite{mcnaughton-papert}.

We say that an automaton $\mathcal{A}' = (Q', q_\mathsf{init}' \Sigma, \delta')$ \emph{covers} an automaton $\mathcal{A} = (Q, q_\mathsf{init}, \Sigma, \delta)$ if there exists a map $\varphi$ from $Q'$ to $Q$ that respects the initial states and commutes with the transition relations, i.e., $\varphi(q_\mathsf{init}') = \mathsf{q}_\mathsf{init}$ and for every $q' \in Q', a \in \Sigma$, we have that $\varphi(\delta'(q', a)) = \delta(\varphi(q'), a)$. This implies, in particular that for a word $u$ over $\Sigma$, its runs $v, v'$ over $\mathcal{A}, \mathcal{A}'$ respectively are related by $v = \sigma'(v')$ where $\sigma'$ is a letter-to-letter substitution.

If $\mathcal{A}_1 = (Q_1, \Sigma, \delta_1)$, and $\mathcal{A}_2 = (Q_2, Q_1 \times \Sigma, \delta_2)$, then the \emph{cascade} $\mathcal{A}_2 \circ \mathcal{A}_1$ is an automaton over $\Sigma$ with states $Q_2 \times Q_1$ defined by the following property. The run of a word $u$ is a word $v'$ over $(Q_2 \times Q_1 \times \Sigma)$: its projection $v$ onto $Q_1 \times \Sigma$ is the run of $\mathcal{A}_1$ on $u$, and $v'$ is the run of $\mathcal{A}_2$ on $v$. Formally, the transition function is given by 
\[\delta((q_2, q_1), a) = (\delta_1(q_1, a), \delta_2(q_2, (q_1, a))). \]
The cascade $\mathcal{A}_k \circ \cdots \circ \mathcal{A}_1$ is implemented by performing the rightmost cascade first, akin to function composition.

\begin{theorem}[Krohn-Rhodes \cite{meyer1969remarks}]
\label{thm:krohn-rhodes}
    For every automaton $\mathcal{A}$, we can compute a cascade $\mathcal{A}' = \mathcal{B}_k \circ \cdots \circ \mathcal{B}_1$ such that:
    \begin{enumerate}
        \item $\mathcal{A}'$ covers $\mathcal{A}$.
        \item Each $\mathcal{B}_i$ is either a permutation automaton or a two-state reset automaton.
        \item If $\mathcal{B}_i$ is a permutation automaton, then its transition group is homomorphic to a subgroup of the transition monoid of $\mathcal{A}$.
    \end{enumerate}
    In particular, if $\mathcal{A}$ is a counter-free automaton, then each $\mathcal{B}_i$ is a reset automaton.
\end{theorem}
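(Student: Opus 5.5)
The statement is the Krohn--Rhodes theorem in its automata formulation, which the paper imports from \cite{meyer1969remarks}. I would therefore reduce it to the semigroup version \cite{krohn-rhodes-OG} and check effectiveness and the counter-free clause, rather than reprove the decomposition from scratch. The plan has four steps.

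\textbf{Step 1 (pass to the transition monoid).} Let $M$ be the transition monoid $\{\delta_u : u\in\Sigma^*\}$ of $\mathcal{A}$; it is finite and computable by closure under composition. The automaton $(M, \mathrm{id}, \Sigma, (m,a)\mapsto \delta_a\circ m)$, reading compositions in the order of application, covers $\mathcal{A}$ via $\varphi(m)=m(q_\mathsf{init})$. Covering is transitive, and covers compose along cascades. So it suffices to cover the right regular representation of $M$ by a cascade.

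\textbf{Step 2 (induction on $|M|$, following the holonomy or Zeiger proof).} The simplest route I know is Maler's holonomy decomposition \cite{Maler2010}, which is also effective.
\begin{itemize}
\item Consider the subsets $\delta_u(Q)$ for $u \in \Sigma^*$, together with $Q$ and the singletons, ordered by inclusion and grouped into height classes.
\item For each non-singleton set $P$ in this family, the maximal proper subsets (``tiles'') of $P$ are permuted by the words that map $P$ onto itself. This gives a holonomy permutation group $H_P$, which is a subquotient of the group of units of a local monoid of $M$. Hence $H_P$ is homomorphic to a subgroup of $M$, which yields item (3).
\item Tiles are also sent to other tiles, possibly collapsed. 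These collapses are simulated by resets: a reset automaton whose $\delta_a$ is constant is the case of a letter that collapses a set.
\item The cascade is built level by level. Level $i$ stores the current tile at height $i$, given the choices made at higher levels, as a permutation component $H_P$ together with a reset component.
\end{itemize}

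\textbf{Step 3 (reduce resets to two-state flip-flops).} An $n$-state reset automaton is covered by a cascade of two-state reset automata, for instance by encoding the states in binary with one flip-flop per bit. Each bit is set or reset constantly by a reset letter and left unchanged by an identity letter.

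\textbf{Step 4 (counter-free case).} If $\mathcal{A}$ is counter-free, then $M$ is aperiodic, so every subgroup of $M$ is trivial. By item (3), each permutation component then has trivial transition group, so it can be removed or regarded as a trivial reset automaton. Only reset automata remain.

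The main obstacle is Step 2. The delicate point is to verify that the cascade really commutes with the covering map $\varphi$. The lower levels must deterministically choose representative maps (``coordinates'') so that the higher-level holonomy action is well defined. I would follow \cite[Thm.~3]{Maler2010} closely for this bookkeeping, and cite \cite{meyer1969remarks} for the original argument, rather than redo it in full.
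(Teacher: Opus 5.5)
Your proposal is correct and takes the same route as the paper, which does not reprove the theorem but imports it from Meyer's automata formulation \cite{meyer1969remarks} and Maler's constructive holonomy-based proof \cite[Thm.~3]{Maler2010}. One step is left implicit: each holonomy level is a (product of) permutation--reset automata, and each must first be split into a permutation automaton tracking the accumulated group element, followed by a reset automaton storing the last reset state translated by that element's inverse, before your binary flip-flop encoding and your aperiodicity argument for the counter-free clause apply (this is standard and in Maler). Your Step 1 is unnecessary, since Step 2 works directly with subsets of $\mathcal{A}$'s own state set.
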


\begin{remark}
    \label{remark:template}
    Our proofs of preservation theorems for transducer outputs will follow the following template: (i) prove the theorem for reset automata and permutation automata; (ii) deduce that the theorem holds for any cascade of these special automata; (iii) apply Lem.~\ref{lem:automata-suffice} (we only need the easy case of $\sigma$ being a coding here) and deduce that the theorem holds for any automaton covered by such a cascade; (iv) apply the Krohn-Rhodes theorem and deduce that the theorem indeed holds for all automata; (v) apply Lem.~\ref{lem:automata-suffice} to deduce the theorem holds for all transducers. In the above, only Step (i) will require work.
\end{remark}

\section{Preservation of Recurrence}
\label{sec::preservation-of-recurrence}
Sem\"enov showed that if a word $x \in \Sigma^\omega$ is \emph{effectively almost-periodic} (i.e., the word $x$ is effective, and given any $u \in \Sigma^+$, it can be decided whether $u$ occurs infinitely often in $x$, and if it is the case, we can compute ${\mathcal R}_x (u)$), then for any transducer $\mathcal{A}$, the word $\mathcal{A}(x)$, if infinite, is also effectively almost-periodic (see \cite[Sec.~3]{semenov-english} for an exposition of that result in English). Using this result and techniques from the proof, Pritykin showed that if a word $y \in \Sigma^\omega$ has a uniformly recurrent suffix $x$, then for any transducer $\mathcal{A}$, the word $\mathcal{A}(y)$, if infinite, also has a uniformly recurrent suffix \cite{pritykin}. We use the Krohn-Rhodes theorem to prove (in an arguably simpler and more insightful manner) the following strengthening. We remark that the Krohn-Rhodes theorem is crucially needed in the case where we do not assume uniform recurrence.

\begin{theorem}
\label{thm:recurrence}
    Let $\mathcal{A} = (Q, q_\mathsf{init}, \Sigma, \Gamma, \delta^o, \delta^t)$ be a transducer. Consider a word $x \in \Sigma^\omega$. If $x$ has one of the following properties, then $\mathcal{A}(x)$ also has the same property, provided that it is an infinite word.
    \begin{enumerate}
        \item (effectively) recurrent suffix
        \item (effectively) uniformly recurrent suffix (whose starting index is computable) 
        \item (effectively) linearly recurrent suffix (whose starting index is computable)
    \end{enumerate}
\end{theorem}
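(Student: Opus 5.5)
We follow the template of Rmk.~\ref{remark:template}. First, a reduction: if $x = w x'$ with $x'$ having the property, then running $\mathcal{A}$ on $x$ reaches state $q = \delta^t(q_\mathsf{init}, w)$ after reading $w$. From that point it behaves as $\mathcal{A}$ started in $q$ on $x'$. Hence it suffices to show that, for every start state $q$ and every $x'$ with the property, $\mathcal{A}_q(x')$ has a suffix with the property. The starting index of this suffix is computable from $|w|$, the index inside $\mathcal{A}_q(x')$, and the lengths of outputs on $w$. By Lem.~\ref{lem:automata-suffice} (items (2)--(4)), applied to the substitution on $Q\times\Sigma$, and its trivial extension to suffixes, we may replace $\mathcal{A}$ by its underlying automaton. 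Attribution maps a suffix of $\mathcal{A}_0(x)$ to a suffix of $\sigma(\mathcal{A}_0(x))$. By the covering statement and the Krohn--Rhodes Theorem~\ref{thm:krohn-rhodes}, it then suffices to treat a single permutation automaton and a single two-state reset automaton, each started in an arbitrary state, with input a word that already has the property. Iterating along the cascade is immediate, since each stage's run has a suffix with the property, and that suffix is fed to the next stage in some state.

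\emph{Permutation automata.} Let $x$ be recurrent and let $G$ be the transition group. For a prefix $x(0,n)$ the run is determined by $q_n = \delta_{x(0,n)}(q_\mathsf{init})$ together with the factor. To see that a factor $(q_i,a_i)\cdots(q_{j-1},a_{j-1})$ of the run recurs, note that it recurs whenever $u = x(i,j)$ reappears at some $i'$ with $q_{i'} = q_i$. Recurrence of $x$ gives infinitely many occurrences $i_1<i_2<\cdots$ of $u$. The elements $g_k = \delta_{x(i_1,i_k)}$ lie in the finite group $G$, so I will use a pigeonhole/group argument: pick $k<l$ with $g_k=g_l$. Then $\delta_{x(i_k,i_l)}$ is the identity, but this only returns $q_{i_k}$ to itself. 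To return to $q_{i_1}$, note that the word $x(i_1,i_l)u$ recurs as well, so one iterates the argument inside ever longer recurring blocks. Applying it to $v = x(i_1, i_1+m)$ for large $m$ containing $|G|$ occurrences of $u$ yields an occurrence at which the group element is the identity, because the powers of the element in $G$ cycle back. This makes the run itself recurrent, and no suffix is needed. For uniform recurrence, the same argument bounds return times. A factor of length $n$ in the run returns within the distance needed to see $|G|!$ (or $|G|$) consecutive returns of a slightly longer factor of $x$. This gives $R_{\mathcal{A}(x)}(n) \le C\,R_x(Cn)$ in a suitable form, which preserves linear recurrence and is computable.

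\emph{Reset automata,} which is the main obstacle. Letters act either as identity or as a reset to state $0$ or $1$. If $x$ has no reset letter occurring infinitely often, the state is eventually constant. Otherwise, take the suffix starting at the first reset occurring after all non-recurrent letters. On it, every factor of the run that begins after its first reset is determined by the $x$-factor extended leftwards to the last reset. A factor of the run at position $i$ is thus determined by $x(r,j)$, where $r$ is the last reset before $i$, and recurrence of $x(r,j)$ gives recurrence of the run factor. For the prefix before the first reset inside the chosen suffix, the state was set by a reset, so the same reasoning applies. In the uniform case, gaps between resets are bounded by $R_x(1)$, so $R_{\mathcal{A}(x)}(n) \le R_x(n+R_x(1))$, which is linear if $R_x$ is. The starting index is computable by scanning for the first reset. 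For effectiveness in item (1) we use the MSO characterization: the run is MSO-definable from $x$, so decidability transfers (\cite[Lem.~4.5]{berthe2025monadic}). Locating a recurrent suffix's start needs an MSO query ("from position $p$ on, every factor recurs" is MSO-expressible over the run), which I expect to be the delicate point.
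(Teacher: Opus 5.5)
Your reduction (Krohn--Rhodes cover, coding, suffixes fed to the next stage) and your reset-automaton argument are the paper's: extend leftwards to the last reset, obtain $R_y(n)\le R_x(n+R_x(1))$, and transfer MSO decidability. The genuine gap is in the permutation case, at the one step that carries the content of the lemma. As you note, pigeonhole on $g_k=\delta_{x(i_1,i_k)}$ only shows that some later occurrence of $u$ returns to \emph{its own} state. What is needed is some $i'>i_1$ with $\delta_{x(i_1,i')}(q_{i_1})=q_{i_1}$, and your repair does not provide it. Nothing forces $x$ to contain powers of a return word, so there is no mechanism by which ``the powers of the element cycle back''. The concrete claim is also false, namely that a block starting at $i_1$ with $|G|$ occurrences of $u$ contains one with trivial group element. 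Let $G=\mathbb{Z}/2\mathbb{Z}$, with $1$ acting as the swap and $0$ trivially. Let $x$ be a Sturmian word in which every $1$ is followed by at least $K\ge |G|$ zeros, let $u=0$, and let $i_1$ be the position of a $0$ immediately followed by a $1$. Then the next $K$ occurrences of $0$ all lie at odd parity. Re-running the same pigeonhole on longer blocks has the same defect at every level, because a coincidence $g_k=g_l$ only ever gives a trivial element between two later occurrences. For the same reason your bound $R_{\mathcal{A}(x)}(n)\le C\,R_x(Cn)$ is unsupported.

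The missing idea is to nest the blocks so that each is both a prefix and a suffix of the next, and to pigeonhole on the states reached at their ends. Fix the occurrence $i_0$ of $u_0=u$, with state $s$. Let $i_{k+1}$ be the next occurrence of $u_k$ after $i_0$ and set $u_{k+1}=x(i_0,i_{k+1}+|u_k|)$; then $u_l$ is both a prefix and a suffix of $u_k$ whenever $l<k$. Among the states $q_k=\delta_{u_k}(s)$ for $0\le k\le|Q|$, two coincide, say $q_l=q_k$ with $l<k$. Writing $u_k=pu_l$ gives $\delta_{u_l}(\delta_p(s))=\delta_{u_l}(s)$, and injectivity of $\delta_{u_l}$ yields $\delta_p(s)=s$. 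So the run factor starting in state $s$ with projection $u_l$, and hence the one with projection $u_0$, reoccurs at $i_0+|p|>i_0$. This gives $R_{\mathcal{A}(x)}(n)\le W_x^{(|Q|+1)}(n)$ with $W_x(n)=R_x(n)+n$, which is computable, and linear when $R_x$ is.

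Two smaller points. First, in the reset case the suffix must start one position \emph{after} the first reset. For example, with $q_\mathsf{init}=1$, $a$ resetting to $0$, $b$ the identity and $x=(ab)^\omega$, the letter $(1,a)$ never recurs. Second, ``from position $p$ on every factor recurs'' is not MSO-expressible, since recurrence is not an $\omega$-regular property. You do not need it, because the starting index is structural: it is $0$ for permutation automata. For reset automata it is one past the first occurrence of a reset letter $a$ satisfying the MSO sentence ``$a$ occurs infinitely often''.
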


 By Rmk.~\ref{remark:template}, the preservation theorem above would follow from preservation lemmas for permutation automata and reset automata. Observe that in the proofs, we can assume without losing generality that the input word $x$ is recurrent by simply considering the run starting at the beginning of the recurrent suffix. The following distills the core idea of Sem\"enov's argument.

\begin{lemma}
\label{lem:recurrence-perm}
    Let $\mathcal{A} = (Q, q_\mathsf{init}, \Sigma, \delta)$ be a permutation automaton. Consider a word $y \in \Sigma^\omega$. If $x$ has one of the following properties, then $\mathcal{A}(x)$ also has that property.
    \begin{enumerate}
        \item (effectively) recurrent
        \item (effectively) uniformly recurrent
        \item (effectively) linearly recurrent
    \end{enumerate}
\end{lemma}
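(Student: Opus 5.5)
The core idea is Semënov's: because every letter acts as a permutation on $Q$, the transition function $\delta_w$ of any word $w$ is a permutation, and the permutations form the finite group $G$ generated by $\{\delta_a : a\in\Sigma\}$. Let $y=\mathcal{A}(x)$, let $q_n$ be the state before reading $x(n)$, and fix a factor $v=y(i,i+m)$. This factor is determined by the pair $(q_i, x(i,i+m))$. So it suffices to show the following: whenever $u=x(i,i+m)$ recurs at a position $j$ with $q_j=q_i$, the factor $v$ recurs at $j$. We then have to control how often "same factor, same state" occurs.

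For item (1), set $u=x(i,i+m)$ and extend it to the left to $w=x(0,i+m)$, i.e.\ the whole prefix. By recurrence of $x$, $w$ occurs infinitely often, at positions $j_1<j_2<\dots$ with $j_1=0$. At each such occurrence the state before reading $u$ is $\delta_{x(0,j_k)}\delta_{x(j_k,j_k+i)}(q_{\mathsf{init}})$. Write $g_k=\delta_{x(0,j_k)}\in G$. Since $G$ is finite, some $g\in G$ appears infinitely often among the $g_k$, but I want $g=\mathrm{id}$. To get this I refine the argument. I take occurrences $j_{k}<j_{k'}$ with $g_k=g_{k'}$. Then $h=\delta_{x(j_k,j_{k'})}$ satisfies $h\circ g_k=g_{k'}$ (composing in the order of reading), so $h$ acts trivially on $g_k(q_{\mathsf{init}})$. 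Combined with $\delta_{x(0,i)}$ being a permutation, this is not quite enough. The cleaner route is a pigeonhole over $|G|+1$ consecutive occurrences $j_{k_0}<\dots<j_{k_{|G|}}$ of $w$. Among the $|G|+1$ group elements $\delta_{x(j_{k_0},j_{k_t})}$, $t=0,\dots,|G|$, two coincide, say for $s<t$. Then $\delta_{x(j_{k_s},j_{k_t})}=\mathrm{id}$. Hence the state before position $j_{k_t}$ equals $\mathrm{id}$ applied to the state before $j_{k_s}$, in the sense that $q_{j_{k_t}}=q_{j_{k_s}}$. Since $w$ starts at both positions, $u$ is read from the same state at $j_{k_s}+i$ and $j_{k_t}+i$. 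I still need the state to equal $q_i$ and not merely two later states to agree. For this I use $j_{k_0}=0$ and the group property: both $\delta_{x(0,j_{k_s})}$ and $\delta_{x(0,j_{k_t})}$ are known elements, and pigeonhole directly on the prefixes $\delta_{x(0,j_{k})}$ starting from $k=0$, where the element is $\mathrm{id}$, is what is needed. I therefore make the main pigeonhole argument about returns to the identity. Consider the sequence $g_k=\delta_{x(0,j_k)}$ in the finite group $G$, and let $H$ be the set of values attained infinitely often. The key claim is that $\mathrm{id}\in H$. It follows because, after some index, all $g_k\in H$. Taking $g\in H$, further occurrences of $w$ occur inside later copies of long prefixes. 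More simply, apply the argument to $w'=x(0,N)$ for $N$ large, which contains $w$ at positions $j_1,\dots,j_r$ including all $g\in H$ patterns. Then iterate: $w'$ recurs at some $j'$, and the transformations compose, giving $g_{j'+j_s}=g'\circ g_s$. Choosing $g'$ recurring and multiplying by powers (order of $g'$ divides $|G|!$) eventually produces the identity. I expect this bookkeeping, namely producing an occurrence of $w$ whose prefix permutation is exactly the identity, to be the main obstacle. My fallback is Semënov's trick: consider the $|G|!$-fold structure where $g'^{\,|G|!}=\mathrm{id}$, obtained by iterating recurrence of $w'$ $|G|!$ times in nested fashion.

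For items (2) and (3), the same argument gives quantitative bounds. If $x$ is uniformly recurrent, the nested-iteration argument only needs to be performed at most $|G|$ times, each time passing to a factor of length at most $R_x(\cdot)$ of the previous one. So the return time of $v$ in $y$ is bounded by a $|G|$-fold composition of $n\mapsto n+R_x(n)$, applied to $m$. That composition is finite, and linear if $R_x$ is linear, since the number of compositions depends only on $\mathcal{A}$. Effectiveness follows because all bounds are computable from $R_x$ and $G$. For effective recurrence, problems (A) and (B) for $y$ reduce to those for $x$ via the product of $\mathcal{A}$ with a DFA for $L$, a regular condition on $x$.
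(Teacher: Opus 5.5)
Your reduction is right: an occurrence of $v$ at $i$ is determined by $q_i$ and $u=x(i,i+m)$. The gap is that the step carrying the whole lemma is never proved, namely producing infinitely many further occurrences of $w=x(0,i+m)$ read from the same state.

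\textbf{Item (1).} Your first two attempts are abandoned by your own account. The last one ends with ``choosing $g'$ recurring and multiplying by powers \ldots\ eventually produces the identity'', which you flag as the main obstacle. Nothing you wrote shows that $g'^2,g'^3,\dots$ actually occur as prefix permutations of occurrences of $w$. The missing piece is a closure argument:
\begin{enumerate}
\item Choose $N$ so that $w'=x(0,N)$ contains occurrences of $w$ realising $\mathrm{id}$ (at position $0$) and every element of your set $H$ of values attained infinitely often.
\item Let $g'$ be a value taken at infinitely many occurrences of $w'$.
\item Then every product of $g'$ with such a $g_s$ (in reading order) is taken infinitely often. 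So multiplication by $g'$ maps $H\cup\{\mathrm{id}\}$ injectively into $H$.
\item Injectivity forces $\mathrm{id}\in H$.
\end{enumerate}
With this added, your route proves item (1), but only qualitatively.

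\textbf{Items (2) and (3).} The claimed bound is not supported. Anchoring at the prefix $x(0,i+m)$ makes every length depend on the position $i$ of the occurrence, not on $m$ alone. The $N$ in $w'=x(0,N)$ is also not controlled by $R_x$. So you get no gap bound in terms of $|v|$. The sentence ``the nested iteration only needs to be performed at most $|G|$ times'' refers to an iteration you never set up.

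The paper's proof supplies exactly this by anchoring at the occurrence $i_0$ itself:
\begin{enumerate}
\item Set $u_0=u$ and $u_{k+1}=x(i_0,i_{k+1}+|u_k|)$, where $i_{k+1}$ is the next occurrence of $u_k$ after $i_0$.
\item Then each $u_l$ is both a prefix and a suffix of $u_k$ for $l<k$, and $|u_{k+1}|\le W_x(|u_k|)$ with $W_x(n)=n+R_x(n)$.
\item Pigeonhole on the states reached at the ends of $u_0,\dots,u_{|Q|}$ gives $l<k$ with equal final states.
\item A permutation automaton's run on a word is determined by its final state. Hence the suffix copy of $u_l$ inside $u_k$ carries another copy of $v_l$, and so of $v$, strictly to the right of $i_0$.
\end{enumerate}
This gives recurrence and the position-independent bound $R_{\mathcal{A}(x)}(n)\le W_x^{(|Q|+1)}(n)$ in one stroke. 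That nested-return-word pigeonhole is the idea your proposal is missing; it works equally with prefix permutations in $G$ over $|G|+1$ levels.
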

\begin{proof}
    We prove the preservation of recurrence, and obtain preservation of the other properties as corollaries of the proof. We shall consider an arbitrary occurrence of a factor $v$ of $\mathcal{A}(x)$, and prove that $v$ occurs infinitely often by showing that we can find another occurrence to the right of the one under consideration. 
    
    To that end, let us establish some notation. We denote the projection of $v =v_0$ onto $\Sigma$ by $u_0$, and denote by $i_0$ the index of the occurrence of $v_0$ in $\mathcal{A}(x)$ (which is the same as that of the corresponding occurrence of $u_0$ in $x$). If the first letter of $v_0$ is $(q, u_0(0))$ and $\delta_{u_0}(q) = q_0$, then we use $(u_0, q_0)$ as shorthand for $v_0$. The shorthand indeed defines $v_0$ because $\mathcal{A}$ is a permutation automaton, and hence for every word $u$, $\delta_u$ admits an inverse, which we shall denote by $\delta_u^{-1}$. In other words, the transition function is ``reversible'': if we know the state after reading $u$, we can determine all the states along the run of $u$.

    We inductively define sequences $(u_k)_k$ of factors of $x$, $(v_k)_k$ of factors of $\mathcal{A}(x)$, $(i_k)_k$ of indices, and $(q_k)_k$ of states. We shall maintain that the projection of each $v_k$ onto $\Sigma$ is the corresponding $u_k$. 
    We always have that $u_k$ has infinitely many occurrences in $x$, and we will maintain that $u_k$ occurs in $x$ at index $i_0$. We find its next occurrence at index $i_{k+1} > i_0$, and define $u_{k+1} = x(i_0, i_{k+1} + |u_k|)$ to be the \emph{extended return word} that spans consecutive occurrences of $u_k$. Correspondingly, we define $v_{k+1}= \mathcal{A}(x)(i_0, i_{k+1}+|u_k|)$. This word $v_{k+1}$ will be denoted by $(u_{k+1}, q_{k+1})$. In particular, the word $u_k$ is a prefix  of $u_{k+1}$.

    By the pigeonhole principle, there exist distinct $l, k$ with $0 \le l < k \le |Q|$ such that $q_l = q_k$. By construction, we also have that $u_l$ is both a prefix and a suffix of $u_k$. This means that we have found two occurrences of $v_l = (u_l, q_l)$, one at index $i_0$, and the other at index $i_{k+1} + |u_k| - |u_l|$. Our construction also ensures in particular that $u_0$ is a prefix of $u_l$, and since the transition function is reversible, we have found an occurrence of $v_0$ at index $i_{k+1} + |u_k| - |u_l|$, which is to the right of $i_0$ by at most $|u_{k+1}|$.

    We have thus far proven that $\mathcal{A}(x)$ is recurrent. Effectiveness follows because $\mathcal{A}(x)$ inherits the property of having a decidable MSO theory from $x$. 
    
    In the case of (effective) uniform recurrence, we observe that $i_{k+1} - i_0 \le R_x(|u_k|)$, and thus $|u_{k+1}| \le R_x(|u_k|) + |u_k|$. For convenience, we define the auxiliary function $W_x$ (read as ``window function'') as $W_x(n) = R_x(n) + n$. From the discussion above, we get that $R_{\mathcal{A}(x)}(n) \le W_x^{(|Q|+1)}(n)$, where $W_x^{(k)}$ denotes the $k$-fold composition of $W_x$. Clearly, if $R_x$ is computable (respectively, linear), then so is $R_{\mathcal{A}(x)}$. 

    Finally, we check that in the case of uniform recurrence, the index of the first occurrence of $v$ is bounded by $R_{\mathcal{A}(x)}(|v|)$. This is indeed the case, because if $i_0$ exceeds the above bound, we can apply the above argument \emph{mutatis mutandis} to find an occurrence of $v$ to the \emph{left} of $i_0$. This completes the proof.
\end{proof}

\begin{lemma}
\label{lem:reset}
    Let $\mathcal{A} = (Q, q_\mathsf{init}, \Sigma, \delta)$ be a reset automaton. Consider a word $x \in \Sigma^\omega$. If $x$ has one of the following properties, then $\mathcal{A}(x)$ also has that property.
    \begin{enumerate}
        \item recurrent suffix 
        \item effectively recurrent suffix with computable starting index
        \item (effectively) uniformly recurrent suffix (whose starting index is computable)
        \item (effectively) linearly recurrent suffix (whose starting index is computable)
        \item uniformly recurrent suffix and admits factor frequencies
    \end{enumerate}
    In item (5), if the suffix of $x$ is effectively uniformly recurrent, its starting index is computable, and $x$ admits computable factor frequencies, then $\mathcal{A}(x)$ also has these properties.
\end{lemma}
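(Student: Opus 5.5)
The idea is to show that, after a computable prefix, a reset automaton's state is determined by a bounded window of the input to the left. Write $\Sigma = \Sigma_{\mathrm{id}} \cup \Sigma_{\mathrm{res}}$, where letters in $\Sigma_{\mathrm{id}}$ act as the identity and each $a \in \Sigma_{\mathrm{res}}$ resets to a constant state $c_a$. We may assume $x$ is itself recurrent (respectively uniformly or linearly recurrent), by passing to the recurrent suffix $x(s,\infty)$ and restarting $\mathcal{A}$ in the state $\delta(q_\mathsf{init}, x(0,s))$.

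We split into two cases.

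\emph{No reset letter occurs in $x$.} Then $\mathcal{A}(x)$ is $x$ with a constant state attached. It is the image of $x$ under a letter-to-letter substitution, so every property transfers directly, including factor frequencies and computability. Deciding which case holds uses problem (B) on the regular language $\Sigma^*\Sigma_{\mathrm{res}}\Sigma^*$. In the uniformly recurrent case it suffices to inspect $x(0, R_x(1)+1)$.

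\emph{Some reset letter occurs in $x$.} By recurrence it occurs infinitely often. Let $p$ be the index of the first reset letter in $x$; it is computable via problem (A), or via $R_x(1)$. For $i > p$, the state $q_i$ is $c_a$, where $a$ is the last reset letter in $x(p,i)$. So the suffix $y=\mathcal{A}(x)(p,\infty)$ is a sliding-block image of $x(p,\infty)$, but with an unbounded window. The plan is to handle this by attribution, as in Lem.~\ref{lem:automata-suffice}. Take an occurrence of a factor $v$ of $y$ at index $i$, and extend its projection $u$ leftward to $u' = x(j, i+|v|)$, where $j$ is the last position $\le i$ carrying a reset letter; such a $j \ge p$ exists. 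Then every occurrence of $u'$ in $x$ at an index $\ge p$ yields an occurrence of $v$ in $y$. Hence recurrence of $u'$ gives recurrence of $v$, proving item (1).

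For item (3), in the uniformly recurrent case gaps between reset letters are at most $R_x(1)$. So $|u'| \le |v| + R_x(1)$, and
\[
R_y(n) \le R_x(n+R_x(1)),
\]
with starting index $p$. This is linear if $R_x$ is, giving item (4). The first occurrence of $v$ is bounded as in Rmk.~\ref{remark:compute-fac-lang}.

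For item (2), the factors of $y$ and their recurrence can be decided by regular queries on $x$. Indeed, $y$ is a transduction of $x$, so its MSO theory is decidable; recurrence together with a computable start index then gives effective recurrence.

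For item (5), the factor frequencies come from the same correspondence, now as an exact equality. Each length-$n$ factor $v$ of $y$ corresponds to the finite set $\{u'_1,\dots,u'_d\}$ of words of the form $a w u$, where $a\in\Sigma_{\mathrm{res}}$ with $c_a$ matching, $w \in \Sigma_{\mathrm{id}}^*$ with $|w|\le R_x(1)$, and $u$ is the projection of $v$ with consistent states. When $v$'s first state is forced by a reset inside $u$, we only need $u$ itself. These $u'_j$ have pairwise disjoint cylinders, since the position of the last reset before $u$ is determined, and
\[
f_y(v) = \sum_j f_x(u'_j).
\]
Equivalently, one can view $y$ as a coding of a higher-block presentation of $x$ with block length $R_x(1)+1$, and apply Lem.~\ref{lem:automata-suffice} and Cor.~\ref{cor:non-erasing-substitution}. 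Computability follows because the set $\{u'_j\}$ is computable from $\mathcal{L}(x)$ by Rmk.~\ref{remark:compute-fac-lang}, and the shift by $p$ does not affect frequencies.

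The main obstacle is the bookkeeping for this disjointness and exact decomposition. It also needs care in the case of a factor $v$ that occurs only in the transient prefix and not recurrently: such a $v$ has frequency $0$, and those occurrences must be excluded from the sum.
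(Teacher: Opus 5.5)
Your argument is the paper's: restrict to the recurrent suffix, treat the reset-free case as a coding, extend an occurrence of $v$ leftward to the last reset letter and use recurrence of that extension, bound the extension by $R_x(1)$ to get $R_y(n)\le R_x(n+R_x(1))$, and write $f_y(v)$ as the sum of $f_x(awu)$ over extensions with $a$ a reset letter and $w$ reset-free. These extensions are exactly the leaves of the paper's tree. Your higher-block reformulation of item (5) is a tidy equivalent in the uniformly recurrent case.

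There is, however, an off-by-one error that comes from mixing conventions. In the paper, $\mathcal{A}(x)(n)=(q_n,x(n))$ with $\delta(q_n,x(n))=q_{n+1}$, so the state recorded at index $i$ is the state \emph{before} $x(i)$ is read. It is determined by the last reset at a position $j<i$, never by $x(i)$. So you must take $j<i$ strictly, and the recurrent suffix starts at $p+1$, not at $p$ as you claim.

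For instance, take $Q=\{0,1\}$, $q_{\mathsf{init}}=0$, $a$ resetting to $1$, $b$ acting as the identity, and $x=(ab)^\omega$. Then $p=0$ and $\mathcal{A}(x)=(0,a)(1,b)(1,a)(1,b)\cdots$, in which $(0,a)$ occurs only once. Your argument breaks exactly there: $j=i=p$ gives $u'=u$, which does not determine the first state of $v$.

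For the same reason, your shortcut ``when $v$'s first state is forced by a reset inside $u$, we only need $u$ itself'' never legitimately applies. Using it would give wrong frequencies: if $a\mapsto 1$ and $b\mapsto 0$ are both resets, then $f_y((1,a))=f_x(aa)$, not $f_x(a)$. One always needs a \emph{proper} left extension $awu$, which is why the paper's leaves have $u$ as a proper suffix.

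With $j<i$ and starting index $p+1$ (the paper's $N+1$), the rest of your argument goes through as written.
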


\begin{proof}
We shall assume without losing generality that $x$ is recurrent by simply considering the run starting at the beginning of its recurrent suffix. 
The lemma is trivial if $x$ does not contain any recurrent reset letters, i.e., letters $a$ such that $\delta_a$ is a constant-valued function. We shall therefore assume that $x$ has a recurrent reset letter $a$, whose first occurrence is at index $N$, which can be computed in the case of effective uniform recurrence. We claim that the suffix of $\mathcal{A}(x)$ starting at index $N+1$ is recurrent.

The key will be to focus on return words to the reset letter $a$. By definition, these words begin with $a$, and have exactly a single occurrence of $a$ because $a$ is a single-letter word. We have that $x(N, \infty) = r_0r_1r_2\cdots$, a concatenation of return words. 

As before, we shall show that for any occurrence of an arbitrary factor $v$ at index $i_0 > N$, we can find another occurrence at an index to the right of $i_0$. Let $u$ be the projection of $v$ onto $\Sigma^+$, i.e., $x(i_0, i_0+|v|) = u$. We observe by the above factorisation of $x(N, \infty)$ into return words that there exists an index $i$ with $N \le i < i_0$ such that $x(i) = a$. We take $i$ to be maximal, and in particular we have $i_0 - i < R_x(1)$. Let $u'$ denote the factor $x(i, i_0+|v|)$. By recurrence, $u'$ will have another occurrence at an index $i' > i$, and this will lead to an occurrence of $v$ in $\mathcal{A}(x)$ at index $i' - i + i_0$. This proves that $y = \mathcal{A}(x)(N+1, \infty)$ is recurrent, establishing item (1).

To prove item (2), observe that when $x$ has a decidable MSO theory, we can in particular decide if a reset letter exists, and if so, find its first occurrence by brute enumeration. We also have that the corresponding suffix of $\mathcal{A}(x)$ has a decidable MSO theory, and is hence effectively recurrent.

If $x$ is uniformly recurrent, then $|i' - i| \le R_x(|v| + R_x(1))$, and hence $R_{y}(n) \le R_x(n + R_x(1))$. Clearly $R_{y}$ is effective (respectively, linear) if $R_x$ is. We also check that if $i_0 - N$ exceeds $R_x(n + R_x(1))$, then we can use the same argument as above to find an occurrence of $v$ in $\mathcal{A}(x)$ to the left of $i_0$. This proves items (3) and (4).

To study factor frequencies of $v$, we partition the occurrences of $u$, and account precisely for which partitions result in an occurrence of $v$. Formally, we construct a tree whose root is $u$, vertices are factors $u'$ of the form $wu$, leaves are factors $u'$ that have $u$ as a proper suffix and begin with a reset letter, and the successors of the internal nodes $u'$ are of the form $au'$. The depth of the tree is thus at most $R_x(1)$. The frequency $f_y(v)$ is simply $\sum_{u'}f_x(u')$, where the summation ranges over leaves $u'$ such that the run of the reset automaton on the word $u'$ has $v$ as its suffix.
\end{proof}

\begin{remark}
\label{remark:reset-boshernitzan}
    In the above proof, consider the case $x$ is uniformly recurrent, and let $y$ be the recurrent suffix of $\mathcal{A}(x)$. Observe that $\min_{v \in \mathcal{L}_n(y)} f_y(v) \ge \min_{u \in \mathcal{L}_{n+R_x(1)}(x)}f_x(u)$. 
\end{remark}

Lem.~\ref{lem:recurrence-perm}, Lem.~\ref{lem:reset}, and Rmk.~\ref{remark:template} prove Thm.~\ref{thm:recurrence}. In fact, item (4) of Lem.~\ref{lem:reset}, along with the counter-free case of Thm.~\ref{thm:krohn-rhodes}, gives the following result.

\begin{theorem}
    \label{thm:counter-free-freq}
    Let $\mathcal{A} = (Q, q_\mathsf{init}, \Sigma, \Gamma, \delta^o, \delta^t)$ be a counter-free transducer. Let $x \in \Sigma^\omega$ be uniformly recurrent and admit factor frequencies. We have that if $\mathcal{A}(x)$ is an infinite word, then it has a suffix $y$ which is uniformly recurrent and admits factor frequencies. When $x$ is effectively uniformly recurrent and admits computable factor frequencies, the suffix $y$ also has these properties, and its starting index in $\mathcal{A}(x)$ can be computed.
\end{theorem}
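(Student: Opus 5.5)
We follow the template of Rmk.~\ref{remark:template}, specialised to the counter-free case. By Thm.~\ref{thm:krohn-rhodes}, the underlying automaton $\mathcal{A}_0$ of $\mathcal{A}$ is covered by a cascade $\mathcal{B}_k \circ \cdots \circ \mathcal{B}_1$ in which every $\mathcal{B}_i$ is a two-state reset automaton, and this cascade is computable from $\mathcal{A}$. We also have $\mathcal{A}(x) = \sigma(\mathcal{A}_0(x))$ for a substitution $\sigma$, and the run of $\mathcal{A}_0$ is the image of the run of the cascade under a coding $\sigma'$. It therefore suffices to prove: (a) the run of the cascade on $x$ has a suffix that is uniformly recurrent and admits factor frequencies (effectively so, with computable starting index, when $x$ is effective); and (b) these properties pass through $\sigma'$ and then $\sigma$.

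For (a) we induct on $k$ using Lem.~\ref{lem:reset}(3) and (5). The run of the cascade is built in stages: the word $z_1 = \mathcal{B}_1(x)$ is the run of $\mathcal{B}_1$, and $z_{i+1}$ is the run of $\mathcal{B}_{i+1}$ on $z_i$. At stage $i$, $z_i$ has a suffix starting at index $N_i$ that is uniformly recurrent and admits factor frequencies. We restart $\mathcal{B}_{i+1}$ from its actual state at $N_i$ and apply Lem.~\ref{lem:reset} to the suffix $z_i(N_i,\infty)$; this yields an index $N_{i+1} \ge N_i$ from which $z_{i+1}$ has a suffix with both properties. The reset lemma was stated for an arbitrary initial state, and the state at $N_i$ is determined by the prefix, so no generality is lost. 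In the effective case each $N_i$ is computable, because Lem.~\ref{lem:reset} supplies computable starting indices and, in item (5), preserves effective uniform recurrence and computable frequencies. After $k$ steps we obtain the suffix of the cascade run starting at $N_k$.

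For (b), we apply Lem.~\ref{lem:automata-suffice}(3) and (5) to the suffix twice: first with the coding $\sigma'$, then with $\sigma$. The resulting word is the suffix of $\mathcal{A}(x)$ starting at index $|\sigma(\sigma'(\text{run}(0,N_k)))|$. This index is computable, and the suffix is infinite whenever $\mathcal{A}(x)$ is. The main point to check is that Lem.~\ref{lem:automata-suffice}(5) tolerates erasing letters. Its proof handles this through the constant $L = \sum_a f(a)|\sigma(a)|$, which is nonzero precisely because the image is infinite and the word is uniformly recurrent, so some non-erasing letter recurs. The other care point is the bookkeeping of initial states and starting indices across the induction, but this is routine. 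We note that the statement's reference to ``item (4)'' of Lem.~\ref{lem:reset} should presumably read items (3) and (5).
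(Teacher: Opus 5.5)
Your proposal is correct and is essentially the paper's argument: the paper obtains the theorem from the counter-free case of Thm.~\ref{thm:krohn-rhodes} together with the frequency item of Lem.~\ref{lem:reset}, passed through the template of Rmk.~\ref{remark:template}. Your cascade induction (restarting each reset automaton at $N_i$), followed by the coding and the possibly erasing output substitution via Lem.~\ref{lem:automata-suffice}, spells out exactly those steps, and you are right that the paper's ``item (4)'' should refer to item (5) of Lem.~\ref{lem:reset}.
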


\section{Preservation of Self-Similarity}

In this section, we show that transducers preserve properties that indicate self-similarity\footnote{Every word $x = a_0a_1a_2 \cdots$ is vacuously $S$-adic for $S = \{\sigma_1, \ldots, \sigma_{|\Sigma|}, \tau_1, \ldots, \tau_{|\Sigma|}\}$ over the alphabet $\Sigma \cup \{b\}$, where $\sigma_i$ replaces the distinguished letter $b$ with $a_i \in \Sigma$ and is identical elsewhere, and $\tau_i$ replaces $b$ with $ba_i$ and is identical elsewhere. We then see that $x$ is directed by $\sigma_{a_0}\tau_{a_1}\tau_{a_2}\cdots$. The point of directive sequences is to encode more information about self-similar structure, and the lemma conveys the sense in which this structure is preserved. In particular, it allows us to prove that if $x$ is morphic then so is $\mathcal{A}(x)$.}.

\begin{lemma}
    \label{lem:self-similar}
    Let $\mathcal{A} = (Q, q_\mathsf{init}, \Sigma, \Gamma, \delta^o, \delta^t)$ be a transducer, let $S$ be a set of substitutions, let $x \in \Sigma^\omega$ be an $S$-adic word directed by the sequence $(\sigma_n)_{n=0}^\infty$. We have that the word $\mathcal{A}(x)$, if infinite, is $\hat S$-adic for a set $\hat S$ of substitutions defined using only $S$, and the directive sequence $(\hat \sigma_n)_{n=0}^\infty$ can be defined using only $\mathcal{A}$ and $(\sigma_n)_{n=0}^\infty$.
\end{lemma}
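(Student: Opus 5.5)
First reduce to automata. Since $\mathcal{A}(x)=\sigma(\mathcal{A}_0(x))$ for a substitution $\sigma\colon Q\times\Sigma\to\Gamma^*$, it suffices to treat the underlying automaton $\mathcal{A}_0$. If $\mathcal{A}_0(x)$ is directed by $(\hat\sigma_n)_n$, then $\mathcal{A}(x)$ is directed by $(\sigma\hat\sigma_0,\hat\sigma_1,\hat\sigma_2,\dots)$. I would enlarge $\hat S$ by the finitely many compositions $\sigma\hat\tau$ with $\hat\tau\in\hat S$; this set depends only on $S$ and $\mathcal{A}$. So from now on $\mathcal{A}$ is an automaton with state set $Q$, and the goal is to lift the directive sequence to alphabets that record states.

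The plan is to decorate each level $x^{(n)}$ with \emph{state transformations} rather than states. Let $M=Q^Q$ be the full transformation monoid. Over the alphabet $\Sigma_n$ of $x^{(n)}$, define the decorated alphabet $\hat\Sigma_n=M\times\Sigma_n$. The intended decorated word $\hat x^{(n)}$ has at position $i$ the letter $(\delta_{p},b)$, where $b=x^{(n)}(i)$ and $p=\sigma_0\cdots\sigma_{n-1}(x^{(n)}(0,i))$. Thus $\delta_p$ is the transformation induced by the full level-$0$ prefix preceding the block coming from that letter.

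Define $\hat\sigma_{n}\colon\hat\Sigma_{n+1}\to\hat\Sigma_n^*$ by
\[
\hat\sigma_n(f,b)=(f,c_0)(\delta_{\tau(c_0)}\circ f,c_1)\cdots,
\]
where $\sigma_n(b)=c_0c_1\cdots c_{k-1}$ and $\tau=\sigma_0\cdots\sigma_{n-1}$, and $\delta_{\tau(c_0)}\circ f$ means first apply $f$, then $\delta_{\tau(c_0)}$. I would then check by induction on prefixes that $\hat x^{(n)}=\hat\sigma_n(\hat x^{(n+1)})$. The set $\hat S$ consists of the maps $(f,b)\mapsto(\text{transformations by } \delta_{\tau(c_j)}\text{'s})$. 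At level $0$, a final coding $(f,c)\mapsto(f(q_\mathsf{init}),c)\in Q\times\Sigma$ recovers $\mathcal{A}(x)$; I would absorb it into $\hat\sigma_0$.

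The main obstacle is that $\hat\sigma_n$ as written depends on the composite $\tau=\sigma_0\cdots\sigma_{n-1}$, not just on $\sigma_n$. That would make $\hat S$ infinite and not definable from $S$ alone. To fix this, I would carry more information in the letters. Let a letter of $\hat\Sigma_{n}$ be a triple $(f,b,g)$, where $g=\delta_{\tau(b)}\in M$ is the transformation induced by the full expansion of $b$.

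Then $\hat\sigma_n$ needs only the $g$-values of the children. Those in turn cannot be computed from $\sigma_n$ alone, since the children's $g$-values are themselves inductive data. To handle this, I would let the letter of $\hat\Sigma_{n+1}$ carry the tuple of $g$-labels of all letters of $\Sigma_{n+1}$, namely a function $G\colon\Sigma_{n+1}\to M$. The substitution then maps
\[
(f,b,G)\mapsto (f,c_0,G')\,(G'(c_0)\circ f,c_1,G')\cdots,
\]
where $G'\colon\Sigma_n\to M$ is the level-$n$ labelling. This is consistent with $G$ via $G(b)=G'(c_{k-1})\circ\cdots\circ G'(c_0)$. The value of $G'$ is not determined by $G$, so each $\hat\sigma_n$ is indexed by the pair $(\sigma_n,G')$, with $G'$ read off from the given sequence and $\mathcal{A}$. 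The set $\hat S=\{\hat\sigma_{(\sigma,G')}:\sigma\in S,\ G'\in M^{\Sigma'}\}$ is then defined from $S$ and $Q$ only, and the directive sequence is determined by $\mathcal{A}$ and $(\sigma_n)$ by computing $G'_n(b)=\delta_{\sigma_0\cdots\sigma_{n-1}(b)}$.

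The remaining step, verifying $\hat x^{(n)}=\hat\sigma_n(\hat x^{(n+1)})$, is a routine induction.
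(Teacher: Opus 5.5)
Your proposal is correct and is essentially the paper's argument. Your level-$n$ labelling $G'_n(b)=\delta_{\sigma_0\cdots\sigma_{n-1}(b)}$ is exactly the transition function of the paper's ``fast-forwarded'' automaton $\mathcal{A}_n$ (defined by $\delta_{n+1}(q,a)=\delta_n(q,\sigma_n(a))$), and your $\hat\sigma_{(\sigma_n,G'_n)}$ is, up to decoration, the paper's substitution sending $(q,a)$ to the run of $\mathcal{A}_n$ on $\sigma_n(a)$ from state $q$. The differences are cosmetic: decorating with transformations $f\in Q^Q$ instead of states $f(q_\mathsf{init})$, and carrying the extra $G$-component in letters, are harmless but unnecessary, since $\hat\sigma_n$ only propagates forward from the current state; your explicit handling of the output substitution $\sigma$ is, if anything, more careful than the paper's one-line reduction via Lem.~\ref{lem:automata-suffice}.
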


\begin{proof}
    We can assume without losing generality that $\mathcal{A} = \mathcal{A}_0$ is an automaton (Lem.~\ref{lem:automata-suffice}). Let $x_0 = x, x_1, \ldots$ be the sequence of words such that for all $n$, $x_n \in \Sigma_n^\omega$ and $x_n = \sigma_n(x_{n+1})$. To show that the run $y_0 = \mathcal{A}_0(x_0)$ is $\hat S$-adic, we shall construct a sequence $(\mathcal{A}_n)_{n=0}^\infty$ of automata, where $\mathcal{A}_n = (Q, q_\mathsf{init}, \Sigma_n, \delta_n)$, then let $y_n = \mathcal{A}_n(x_n)$, and finally define a sequence of substitutions $(\hat\sigma_n)_{n=0}^\infty$ such that for all $n$, $y_n = \hat\sigma_n(y_{n+1})$.

    The key idea is that $x_{n+1}$ can be intuited as a ``compression'' of $x_n$ using $\sigma_n$.
    Hence, in $\mathcal{A}_{n+1}$, we define $\delta_{n+1}$ such that for all $a \in \Sigma_{n+1}$, $\delta_{n+1}(q, a) = \delta_n(q, \sigma_n(a))$. Thus, the run $y_{n+1}$ can be regarded as a ``fast-forwarded'' version of the run $y_n$. We are therefore motivated to define $\hat \sigma_n$ as the dual ``slow-motion'' operator, i.e., $\hat \sigma_n((q, a))$ gives the run of $\mathcal{A}_n$ on $\sigma_n(a)$ starting in state $q$. It is now easy to check that $\mathcal{A}_n(x_n) = \hat\sigma_n(\mathcal{A}_{n+1}(x_{n+1}))$ for all $n$.
\end{proof}

\begin{theorem}
    \label{thm:morphic-preservation}
    Let $\mathcal{A} = (Q, q_\mathsf{init}, \Sigma, \Gamma, \delta^o, \delta^t)$ be a transducer, and let $x \in \Sigma^\omega$ be a primitive morphic word given as the image under $\tau$ of a particular fixed point of a primitive substitution $\sigma$. When $\mathcal{A}(x)$ is an infinite word:
    \begin{enumerate}
        \item The word $\mathcal{A}(x)$ has a primitive morphic suffix $y$. We can compute the index at which $y$ begins, and compute substitutions $\hat \sigma, \hat \tau$ such that $\hat \sigma$ is primitive and $y$ is the image under $\hat \tau$ of a fixed point of $\hat \sigma$. 
        \item The word $\mathcal{A}(x)$ admits computable factor frequencies.
    \end{enumerate}
\end{theorem}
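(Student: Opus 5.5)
First reduce to the case of an automaton acting on a primitive substitutive word. By Rmk.~\ref{remark:cobham} and Rmk.~\ref{remark:recurrentsuffix}, we may assume that $\tau$ is a coding and $\sigma$ is primitive and non-erasing, with $x=\tau(z)$ and $z=\sigma(z)$. Write $\mathcal{A}(x)=\pi(\mathcal{A}_0(x))$ for a substitution $\pi$. The run of $\mathcal{A}_0$ on $\tau(z)$ equals $\tau'(\mathcal{B}(z))$ for the automaton $\mathcal{B}$ with $\delta_{\mathcal{B}}(q,a)=\delta^t(q,\tau(a))$ and the coding $\tau'(q,a)=(q,\tau(a))$. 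Composing with $\pi$ and using items (1) and (6) of Lem.~\ref{lem:automata-suffice} together with Cor.~\ref{cor:non-erasing-substitution}, it suffices to treat the run $w=\mathcal{B}(z)$ of an automaton $\mathcal{B}$ on a primitive substitutive fixed point $z$. The erasing case for the final substitution only requires that $\mathcal{A}(x)$ be infinite: after a computable shift of the starting index, the image of a primitive morphic suffix is again primitive morphic.

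For $w$, apply the construction of Lem.~\ref{lem:self-similar} with the constant directive sequence $\sigma,\sigma,\dots$. Since $\mathcal{A}_n$ and $\mathcal{A}_{n+1}$ differ in their transition maps ($\delta_{n+1}(q,a)=\delta_n(q,\sigma(a))$), first replace $\sigma$ by a power $\sigma^p$. Because the sequence of maps $a\mapsto\delta_{\sigma^n(a)}$ is eventually periodic, there exist computable $m,p$ such that $\delta_{\sigma^{m+p}(a)}=\delta_{\sigma^{m}(a)}$ for all $a$. Passing to $\sigma^{p}$ with $p\ge m$ a multiple of the period, the transition map satisfies $\delta_{\sigma^{p}(a)}=\delta_{\sigma^{2p}(a)}$. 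Define the automaton $\mathcal{C}$ by $\delta_{\mathcal{C}}(q,a)=\delta(q,\sigma^{p}(a))$ and the substitution $\hat\sigma(q,a)=$ the run of $\mathcal{C}$ on $\sigma^p(a)$ from $q$. Then $\hat\sigma$ maps runs of $\mathcal{C}$ to runs of $\mathcal{C}$ (this is the idempotence above), and $\mathcal{C}(z)=\hat\sigma(\mathcal{C}(z))$ up to the initial state; more precisely, $\mathcal{C}(z)$ is a fixed point of $\hat\sigma$ once we prolong from the letter $(q_\mathsf{init},z(0))$, possibly after iterating to land on a periodic point, which is computable. Finally, $w=\mathcal{B}(z)=\mathcal{B}(\sigma^p(z))=\hat\sigma_0(\mathcal{C}(z))$, where $\hat\sigma_0(q,a)$ is the run of $\mathcal{B}$ on $\sigma^p(a)$ from $q$. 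Hence $w$ is morphic, with explicit $\hat\sigma,\hat\sigma_0$.

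The main obstacle is primitivity: $\hat\sigma$ on $Q\times\Sigma$ need not be primitive, as the earlier powers-of-$3$ example shows in the non-primitive setting. To handle this, restrict $\hat\sigma$ to the letters actually occurring in $\mathcal{C}(z)$ and consider the associated directed graph on letters. By Thm.~\ref{thm:recurrence}, $\mathcal{C}(z)$ has a uniformly recurrent suffix with a computable starting index $N$. The letters occurring in that suffix form a set $\Delta$ closed under $\hat\sigma$. I would argue that $\hat\sigma$ restricted to $\Delta$ is primitive: every letter of $\Delta$ occurs with bounded gaps in the suffix, and the suffix, read from a suitable point, is itself a $\hat\sigma$-image of a suffix of $\mathcal{C}(z)$ (or lies in the same minimal shift). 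So for each $d\in\Delta$, $\hat\sigma^k(d)$ is eventually long and is a factor of the minimal word, hence contains all of $\Delta$. Pinning down this last step, namely that long images of $\Delta$-letters are factors of the minimal suffix, is the delicate point. I would use that $\hat\sigma^k(d)$ occurs inside $\hat\sigma^k$ of the suffix, which lies in the recurrent part. Alternatively, I would invoke the decision procedure of \cite{durandmorphicdecidability} cited in Rmk.~\ref{remark:recurrentsuffix}: once we know the suffix $y$ is uniformly recurrent (Thm.~\ref{thm:recurrence}) and morphic (suffixes of morphic words are morphic), that procedure effectively yields a primitive $\hat\sigma$ and a coding $\hat\tau$ with $y=\hat\tau(\text{fixed point})$. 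This gives item (1). For item (2), Lem.~\ref{lem:morphic-freq} provides computable frequencies for $y$. Since $\mathcal{A}(x)$ and $y$ differ by a finite prefix, the frequencies coincide, so $\mathcal{A}(x)$ admits computable factor frequencies.
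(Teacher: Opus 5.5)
Your proposal is correct and follows essentially the paper's route. Morphicity comes from the construction of Lem.~\ref{lem:self-similar} plus a pigeonhole on the automata $\mathcal{A}_n$; your passage to $\sigma^p$ with $\delta_{\sigma^{p}(a)}=\delta_{\sigma^{2p}(a)}$ is the paper's composite $\hat\sigma_m\cdots\hat\sigma_{n-1}$ in another form. Primitivity comes from Thm.~\ref{thm:recurrence} combined with suffix-closure of morphic words and Durand's procedure (Rmk.~\ref{remark:recurrentsuffix}), and item (2) comes from Lem.~\ref{lem:morphic-freq}. Your first, direct argument only shows that $\hat\sigma$ restricted to the recurrent letters $\Delta$ is primitive, not that the recurrent suffix is a coding of a fixed point of it, so it is your fallback to Rmk.~\ref{remark:recurrentsuffix} that actually closes item (1), exactly as in the paper.
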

\begin{proof}
    We note that item (2) follows from item (1) by applying Lem.~\ref{lem:morphic-freq}. We therefore focus on proving item (1). We remark that it suffices to give a proof assuming $\mathcal{A}$ is an automaton. 

    We first prove that $\mathcal{A}(x)$ is morphic, and we will follow the construction given in the proof of Lem.~\ref{lem:self-similar} in order to do so. The directive sequence of $x$ is $\tau \sigma^\omega$, and in particular, for all $n \ge 1$, $\Sigma_n = \Sigma_1$ and $x_n = x_1$. Observe that this implies that the set $\{\mathcal{A}_n \mid n \in \mathbb{N}\}$ is finite and can be effectively enumerated. Furthermore, when $n \ge 1$, $\mathcal{A}_{n+1}$ and $\hat\sigma_n$ will depend only on $\mathcal{A}_n$ and $\sigma$. By the pigeonhole principle, there exist computable $m, n$ such that $m < n$ and $\mathcal{A}_m = \mathcal{A}_n$. We have that $\mathcal{A}_m(x)$ is a fixed point of $\hat\sigma_m \cdots \hat\sigma_{n-1}$, and its image under $\hat\sigma_0\cdots \hat\sigma_{m-1}$ gives $\mathcal{A}(x)$, which is hence morphic.

    Now, since $x$ is primitive morphic, it is effectively linearly recurrent (Lem.~\ref{lem:morphic-freq}). The run $\mathcal{A}(x)$ therefore has a suffix $y$ that is effectively linearly recurrent, and the starting index $N$ of this suffix can be computed (Thm.~\ref{thm:recurrence}). This suffix $y$ of the morphic word $\mathcal{A}(x)$ is again morphic, and the substitutions describing it can be computed (Rmk.~\ref{remark:recurrentsuffix}). Having proved that $y$ is an effectively linearly recurrent morphic word, we invoke Rmk.~\ref{remark:recurrentsuffix} again, this time to deduce that $y$ is indeed primitive morphic, and we can compute substitutions $\hat\sigma, \hat\tau$ such that $\hat\sigma$ is primitive, and $y$ is the image under $\hat\tau$ of a fixed point of $\hat\sigma$.
\end{proof}

For the sake of completeness, we comment on the case of \emph{automatic} words. Recall that a word $x \in \Gamma^\omega$ is automatic \cite[Chap.~5]{Allouche_Shallit_2003} if for some $k \ge 2$, there exists an automaton that  computes $x(n)$ when given the $k$-ary representation of $n$ as input: we then say that $x$ is $k$-automatic. Clearly, for $a \ge 1$, we have $x$ is $k^a$-automatic if and only if $x$ is $k$-automatic. Equivalently, by a result due to Cobham \cite[Thm.~6.3.2]{Allouche_Shallit_2003}, a word $x$ is $k$-automatic if and only if it is the image under a coding of a fixed point of a $k$-uniform morphism $\sigma$ (for every $a \in \Sigma$, $|\sigma(a)| = k$). Note that the definition can be relaxed (analogously to Rmk.~\ref{remark:cobham}) to replace the coding by a $k'$-uniform morphism \cite[Cor.~6.8.3]{Allouche_Shallit_2003}. Our proof of Lem.~\ref{lem:self-similar} can thus be adapted to deduce the main result of \cite[Sec.~6.9]{Allouche_Shallit_2003}: 

\begin{lemma}
\label{lem:automatic}
If $\mathcal{A}$ is a uniform transducer and $x$ is $k$-automatic, then $\mathcal{A}(x)$ is $k$-automatic.
\end{lemma}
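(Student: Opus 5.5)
We follow the construction in the proof of Lem.~\ref{lem:self-similar}, with the roles of the substitutions specialised to uniform ones. By Cobham's theorem, write $x = \tau(z)$ where $z \in \Sigma_1^\omega$ is a fixed point of a $k$-uniform morphism $\sigma$ and $\tau$ is a coding. Write $\mathcal{A}(x) = \pi(\mathcal{A}_0(x))$, where $\mathcal{A}_0$ is the underlying automaton and $\pi\colon Q\times\Sigma\to\Gamma^{l}$ is the $l$-uniform output substitution of the uniform transducer $\mathcal{A}$. By the relaxed version of Cobham's characterisation (images of $k$-automatic words under $k'$-uniform morphisms are $k$-automatic, \cite[Cor.~6.8.3]{Allouche_Shallit_2003}), it suffices to prove that the run $\mathcal{A}_0(x)$ is $k$-automatic. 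We may also absorb $\tau$ into the automaton: the run of $\mathcal{A}_0$ on $\tau(z)$ is the image under a coding of the run of the automaton $\mathcal{A}_1$ on $z$, where $\delta_1(q,a) = \delta(q,\tau(a))$ and the coding sends $(q,a)$ to $(q,\tau(a))$. So we reduce to the following claim: for an automaton $\mathcal{B}$ over $\Sigma_1$ and a fixed point $z$ of the $k$-uniform $\sigma$, the run $\mathcal{B}(z)$ is $k$-automatic.

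For the claim, apply the construction of Lem.~\ref{lem:self-similar} with directive sequence $\sigma^\omega$. Set $\mathcal{B}_0 = \mathcal{B}$ and define $\delta_{n+1}(q,a) = \delta_n(q,\sigma(a))$. Let $\hat\sigma_n((q,a))$ be the run of $\mathcal{B}_n$ on $\sigma(a)$ from state $q$. This gives $\mathcal{B}_n(z) = \hat\sigma_n(\mathcal{B}_{n+1}(z))$. The key observation is that each $\hat\sigma_n$ is $k$-uniform, since $|\hat\sigma_n((q,a))| = |\sigma(a)| = k$.

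Next we make the sequence periodic. Each $\mathcal{B}_n$ is determined by its transition function $Q\times\Sigma_1\to Q$, of which there are finitely many, and $\mathcal{B}_{n+1}$ depends only on $\mathcal{B}_n$. Hence by pigeonhole there are $m<n$ with $\mathcal{B}_m=\mathcal{B}_n$, and the sequence is eventually periodic with period $p = n-m$. The word $w = \mathcal{B}_m(z)$ is then a fixed point of $\Phi = \hat\sigma_m\circ\cdots\circ\hat\sigma_{n-1}$, which is a $k^{p}$-uniform morphism on $Q\times\Sigma_1$. We must check that $w$ is the fixed point obtained by iterating $\Phi$ on its first letter. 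This holds because $\mathcal{B}_m(z) = \Phi(\mathcal{B}_m(z))$ exactly, and a $k^p$-uniform morphism with $k^p\ge 2$ has at most one fixed point beginning with a given letter. Therefore $w$ is $k^p$-automatic, hence $k$-automatic.

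Finally, $\mathcal{B}(z) = \Psi(w)$ with $\Psi = \hat\sigma_0\circ\cdots\circ\hat\sigma_{m-1}$, a $k^m$-uniform morphism. By \cite[Cor.~6.8.3]{Allouche_Shallit_2003}, the image of a $k$-automatic word under a uniform morphism is $k$-automatic, so $\mathcal{B}(z)$ is $k$-automatic. Composing with the coding and with $\pi$ finishes the proof.

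The main obstacle is the bookkeeping of uniformity. The positive results we rely on require uniform morphisms. This is exactly why the transducer must be uniform: a non-uniform $\pi$ would destroy automaticity, even though morphicity survives as in Thm.~\ref{thm:morphic-preservation}. We must also avoid Cobham's general non-uniform-to-coding conversion (Rmk.~\ref{remark:cobham}), which would not preserve $k$.
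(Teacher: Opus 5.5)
Your proposal is correct and follows essentially the argument the paper intends. You specialise the construction of Lem.~\ref{lem:self-similar} to the directive sequence of a $k$-automatic word and observe that the ``slow-motion'' substitutions $\hat\sigma_n$ inherit $k$-uniformity; you then apply the pigeonhole periodicity argument from Thm.~\ref{thm:morphic-preservation} and conclude with the uniform-morphism relaxation of Cobham's theorem \cite[Cor.~6.8.3]{Allouche_Shallit_2003}. Absorbing the coding $\tau$ into the automaton is only a repackaging of using the directive sequence $\tau\sigma^\omega$, and your check that $w$ is the fixed point obtained by iterating from its first letter fills in a detail the paper leaves implicit.
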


\section{Preservation of Condition (B)}
\label{sec:rigidity}
In this section, we prove the following main result using the Krohn-Rhodes theorem. We give a general statement, but observe that by Rmk.~\ref{remark:template}, it suffices to prove the result separately for reset automata and permutation automata.

\begin{theorem}
    \label{thm:boshernitzan-preserve}
    Let $\mathcal{A} = (Q, q_\mathsf{init}, \Sigma, \Gamma, \delta^o, \delta^t)$ be a transducer, and let $x \in \Sigma^\omega$ be a/an (effectively) uniformly recurrent word that satisfies Boshernitzan's condition. We have that the word $\mathcal{A}(x)$, if infinite, has a/an (effectively) uniformly recurrent suffix $y$ that satisfies Boshernitzan's condition, and hence admits (computable) factor frequencies.
\end{theorem}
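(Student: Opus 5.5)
Following Rmk.~\ref{remark:template}, it suffices to prove that Condition (B) passes through a single reset automaton and through a single permutation automaton. For a cascade, the output of $\mathcal{B}_i$ is then uniformly recurrent (after a suffix) and satisfies Condition (B), so it can be fed to $\mathcal{B}_{i+1}$. Covering and the final output substitution are handled by item (6) of Lem.~\ref{lem:automata-suffice}. Effectivity of the starting index and of $R$ comes from Thm.~\ref{thm:recurrence}. Once Condition (B) holds, Lem.~\ref{lem:Boshernitzan} gives unique ergodicity, and Lem.~\ref{lem:ur-ue-compfreq} gives computable frequencies.

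The reset case is essentially done. By Lem.~\ref{lem:reset}, $\mathcal{A}(x)$ has a uniformly recurrent suffix $y$ that admits factor frequencies. By Rmk.~\ref{remark:reset-boshernitzan},
\[ \min_{v\in\mathcal{L}_n(y)} f_y(v)\ \ge\ \min_{u\in\mathcal{L}_{n+R}(x)} f_x(u), \qquad R=R_x(1). \]
Multiplying by $n$ and writing $m=n+R$ gives $\kappa_Y(n)\ge \frac{n}{n+R}\,\kappa_X(n+R)$. Hence $\limsup_n\kappa_Y(n)\ge\limsup_m\kappa_X(m)>0$.

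The main work is the permutation case. There $y=\mathcal{A}(x)$ is itself uniformly recurrent by Lem.~\ref{lem:recurrence-perm}, but frequencies are not directly controlled: a priori the mass of $[u]_X$ could split very unevenly among the $|Q|$ lifts $(u,q)$. My plan is to avoid computing frequencies of $y$ at all. Instead, I would fix an invariant measure $\nu$ on $Y$ and show that the chosen scales $n$ witnessing Condition (B) for $X$ yield, at comparable scales, a uniform lower bound on $\nu$ of every cylinder. The tool is the return-word construction from the proof of Lem.~\ref{lem:recurrence-perm}, made quantitative. If $\mu([u])\ge c/n$ for all $u\in\mathcal{L}_n(X)$, then by Kac/Birkhoff-type counting in the uniquely ergodic $X$, occurrences of each length-$n$ word cannot be too sparse on average. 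The goal is then to show that every $v\in\mathcal{L}_{n}(Y)$ is a factor of a lift of some word $w\in\mathcal{L}_{Cn}(X)$ in a way that forces $\nu([v])\ge c'\,\mu([w])$.

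To get this, I would use the following fact. Let $H\le$ the transition group consist of the permutations $\delta_r$ realised by returns. On the projection of $y$ to its state component, the fibre over $X$ is a group extension. Its ergodic components are indexed by the cosets of the subgroup generated by $\delta_r$ over return words $r$ of long words. By minimality of $Y$ all $|Q|$ relevant lifts of $u$ occur, since $Y$ is generated by one point. Invariance of $\nu$ under the permutations obtained by going around returns should then force $\nu([(u,q)])\ge \mu([u])/|Q|$ for each lift $(u,q)$ that occurs.

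The expected obstacle is exactly this step: showing that $\nu$ cannot concentrate on one lift. This is precisely what fails in Chaika's example, so the Boshernitzan bound must be used essentially. I would try to show that for $u$ of length $n$ with $\mu([u])\ge c/n$, there are $O(1)$ (depending on $c,|Q|$) consecutive occurrences of $u$ whose extended return words realise all needed permutations. This can be obtained by Boshernitzan's pigeonhole on the at most $n/c$ words of length $n$ and their bounded-multiplicity returns. The occurrences then cover a lift $(w,q)$ with $|w|=O(n)$. This gives
\[ \nu([v])\ \ge\ \nu([(w,q)])\ \ge\ \mu([w])/|Q|\ \gtrsim\ c''/n \]
along the subsequence of good scales. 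Hence $\limsup\kappa_Y>0$.
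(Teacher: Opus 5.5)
Your reduction to a single reset or permutation automaton, your handling of covers and output substitutions via item (6) of Lem.~\ref{lem:automata-suffice}, and your reset case are fine. Your bound $\kappa_Y(n)\ge\frac{n}{n+R}\kappa_X(n+R)$ is even a little cleaner than the paper's.

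The gap is in the permutation case, at the step where invariance of $\nu$ ``under the permutations obtained by going around returns'' is supposed to force $\nu([(u,q)])\ge\mu([u])/|Q|$. An invariant measure on $Y$ is invariant only under the shift, not under the action of the transition group on the fibres. That returns to $u$ realise all relevant permutations is a topological fact. Your claim that the components are indexed by cosets of the return group holds for \emph{minimal} components, not for \emph{ergodic} ones. The Chaika/Veech examples you cite are minimal two-point extensions in which returns realise the whole group, yet they carry two ergodic measures. So the implication you want is false without further input, and your Boshernitzan-based patch does not provide that input. The inequality $\nu([(w,q)])\ge\mu([w])/|Q|$ is the original claim with $w$ in place of $u$, so the argument is circular. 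If every lift of $w$ contains $(u,q)$, the covering only gives $|w|\,\nu([(u,q)])\ge\mu([w])$, i.e.\ $\nu([(u,q)])$ of order $c/n^2$. That loses a factor of $n$, and the resulting lower bound on $\kappa_Y(n)$ tends to $0$. In effect you are trying to prove unique ergodicity of $Y$ by hand, which is the hard content of Boshernitzan's theorem.

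The missing idea is that Condition (B) only requires \emph{some} invariant measure with $\limsup\kappa>0$, so you may choose a symmetric one. The paper works in the group skew product $G\rtimes X$ and codes to states only at the end. It sets $H=\{h\in G\mid hY=Y\}$ and replaces an arbitrary $\nu$ by $\mu_Y=\frac{1}{|H|}\sum_{h\in H}\nu\circ h$. This is again invariant because the left action of $H$ preserves $Y$ and commutes with the shift, and unique ergodicity of $X$ gives $\mu_Y\circ\pi^{-1}=\mu_X$. The cobounding map $\alpha(x)=\{g\mid (g,x)\in Y\}$ takes values in $H\backslash G$ and is continuous, hence constant on $[u]_X$ for all long $u$. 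So the lifts of such $u$ in $Y$ form a single $H$-orbit, and the symmetric measure splits $\mu_X([u]_X)$ exactly equally: $\mu_Y([(g,u)]_Y)=\mu_X([u]_X)/|H|$. Hence $\kappa_Y(n)=\kappa_X(n)/|H|$ for large $n$. Unique ergodicity of $Y$ is then an \emph{output} of Lem.~\ref{lem:Boshernitzan} applied to $Y$, not something you must establish by controlling an arbitrary $\nu$.
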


We already have argued the existence of a/an (effectively) uniformly recurrent suffix $y$ in Thm.~\ref{thm:recurrence}. We shall prove that this suffix satisfies Boshernitzan's condition. The fact that $y$ admits (computable) factor frequencies would then follow immediately from (Lem.~\ref{lem:ur-ue-compfreq} and) Lem.~\ref{lem:Boshernitzan}.

\begin{lemma}
    \label{lem:boshernitzan-reset}
    Let $\mathcal{A} = (Q, q_\mathsf{init}, \Sigma, \delta)$ be a reset automaton and let $x \in \Sigma^\omega$ be a uniformly recurrent word that satisfies Boshernitzan's condition. We have that the word $\mathcal{A}(x)$ has a uniformly recurrent suffix $y$ that satisfies Boshernitzan's condition.
\end{lemma}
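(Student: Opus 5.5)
We use Lem.~\ref{lem:reset} together with Rmk.~\ref{remark:reset-boshernitzan}. If $x$ has no reset letter, then $\mathcal{A}(x)$ is simply $x$ decorated with a constant state, so it is the image of $x$ under a letter-to-letter substitution. In that case we are done by item (6) of Lem.~\ref{lem:automata-suffice}.

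Otherwise, let $N$ be the first occurrence of a reset letter $a$. Lem.~\ref{lem:reset} tells us that $y = \mathcal{A}(x)(N+1,\infty)$ is uniformly recurrent. It also tells us that $y$ admits factor frequencies, with
$$f_y(v) = \sum_{u'} f_x(u'),$$
where $u'$ ranges over the leaves of the tree in that proof whose run ends with $v$. Let $X, Y$ be the shifts generated by $x, y$. Since $X$ satisfies Condition (B), Lem.~\ref{lem:Boshernitzan} gives that $X$ is uniquely ergodic with measure $\mu_X([u]) = f_x(u)$. We then build the measure $\mu_Y([v]) := f_y(v)$ through Lem.~\ref{lem:generic}.

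The key estimate is Rmk.~\ref{remark:reset-boshernitzan}. Write $R = R_x(1)$. Every $v \in \mathcal{L}_n(y)$ has at least one leaf $u'$ of length at most $n+R$ contributing to $f_y(v)$. Extending $u'$ on the right to length exactly $n+R$ only decreases the frequency, and therefore
$$\min_{v\in\mathcal{L}_n(y)} \mu_Y([v]) \ge \min_{u\in\mathcal{L}_{n+R}(x)} \mu_X([u]).$$
Multiplying by $n$, we get
$$\kappa_Y(n) \ge \frac{n}{n+R}\,\kappa_X(n+R).$$

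Now choose an infinite sequence $m_j \to\infty$ with $\kappa_X(m_j) \ge c > 0$. Setting $n_j = m_j - R$ gives $\kappa_Y(n_j) \ge \frac{n_j}{m_j} c \to c$, so $\limsup_n \kappa_Y(n) \ge c > 0$. Hence $Y$, which is minimal because $y$ is uniformly recurrent, satisfies Condition (B) with respect to $\mu_Y$.

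The main thing to verify carefully is the inequality in Rmk.~\ref{remark:reset-boshernitzan}, in particular that each $v$ is witnessed by a factor of $x$ of bounded extra length $R$. This holds because $v$, occurring in $y$ (past index $N$), is preceded within distance $R$ by a reset letter, which fixes the run on $v$. The same inequality also makes it important that the shift on which Condition (B) is measured is exactly that of the suffix $y$, since earlier positions might have an unreset state. We then check that $\mu_Y$ is a genuine invariant measure on $Y$, which follows from Lem.~\ref{lem:generic}, since $y$ admits factor frequencies.
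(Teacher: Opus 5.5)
Your proof is correct and follows the paper's route: you take the uniformly recurrent suffix with factor frequencies from Lem.~\ref{lem:reset}, equip $Y$ with the frequency measure, and transfer Condition (B) through the inequality of Rmk.~\ref{remark:reset-boshernitzan}. The differences are cosmetic. You treat the no-reset-letter case explicitly, and you finish with the index shift $n_j = m_j - R$, i.e.\ $\kappa_Y(n) \ge \frac{n}{n+R}\kappa_X(n+R)$, whereas the paper takes $n \ge R_x(1)$ and uses $3\kappa_Y(n) \ge \max(\kappa_X(2n), \kappa_X(2n+1))$.
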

\begin{proof}
    Recall from Lem.~\ref{lem:reset} that $\mathcal{A}(x)$ has a uniformly recurrent suffix $y$ that admits factor frequencies. We can define an invariant measure $\mu$ on the minimal shift $Y$ defined by $y$, as $\mu([v]_Y) = f_y(v)$ (the invariant measure on the shift defined by $x$ is similarly defined). Let us now use Rmk.~\ref{remark:reset-boshernitzan} to argue that $Y$ satisfies Boshernitzan's condition (Def.~\ref{def:Boshernitzan}) by virtue of admitting the above invariant measure. We have from Rmk.~\ref{remark:reset-boshernitzan} that for $n$ large enough, $\min_{v \in \mathcal{L}_n(y)} f_y(v)$ is lower bounded by $ \min_{u \in \mathcal{L}_{2n}(x)}f_x(u)$ as well as $\min_{u \in \mathcal{L}_{2n+1}(x)}f_x(u))$, or in other words $$3 \cdot \kappa_Y(n) \ge \max(\kappa_X(2n), \kappa_X(2n+1)).$$ This implies that $\limsup_{n\rightarrow \infty}  \kappa_Y(n) > 0$, and the word $y$ indeed satisfies Boshernitzan's condition.
\end{proof}

The case where $\mathcal{A}$ is a permutation automaton is much more involved, and we need to invoke arguments from cohomology and topological  dynamics  inspired by \cite{berthe2024density}. We devote the rest of this section to the proof of the following lemma. 

\begin{lemma}
    \label{lem:boshernitzan-perm}
    Let $\mathcal{A} = (Q, q_\mathsf{init}, \Sigma, \delta)$ be a permutation automaton. If $x$ is a uniformly recurrent word that satisfies Boshernitzan's condition, then so is $\mathcal{A}(x)$.
\end{lemma}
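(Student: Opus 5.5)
We plan to exploit the group extension structure. Let $G$ be the transition group of $\mathcal{A}$ (generated by the permutations $\delta_a$). Since $\mathcal{A}$ is a permutation automaton, Lem.~\ref{lem:recurrence-perm} already gives that $y=\mathcal{A}(x)$ is uniformly recurrent, so $Y$ is minimal. As in the proof of Lem.~\ref{lem:recurrence-perm}, a factor $v$ of $y$ is determined by a pair $(u,q)$, with $u\in\mathcal{L}_n(x)$ and $q$ the state reached after reading $u$. We view $Y$ as a factor of the skew product $X\times Q$, $(x',q)\mapsto(Tx',\delta_{x'(0)}(q))$, restricted to the minimal component containing the orbit of $(x,q_\mathsf{init})$. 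Our goal is an invariant measure $\nu$ on $Y$ and a constant $c>0$ such that $\nu([(u,q)]_Y)\ge c\,\mu([u]_X)$ for every $(u,q)\in\mathcal{L}(Y)$, where $\mu$ is the (unique, by Lem.~\ref{lem:Boshernitzan}) invariant measure of $X$. This gives $\kappa_Y(n)\ge c\,\kappa_X(n)$, hence $\limsup\kappa_Y>0$.

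First, we take any invariant measure $\nu$ on $Y$; it projects onto $\mu$, so $\sum_{q}\nu([(u,q)]_Y)=\mu([u]_X)$. The difficulty is that the mass over $[u]_X$ may be unevenly split among the admissible states $q$, and some admissible $(u,q)$ might receive very little mass. To control this we use the Boshernitzan sequence: fix $n_k\to\infty$ with $\kappa_X(n_k)\ge\varepsilon$. Condition (B) gives, for each such $n_k$, at most $1/\varepsilon$ words in $\mathcal{L}_{n_k}(X)$, and the return times to each of them are at least of order $\varepsilon n_k$ and at most of order $n_k$ in a uniform (averaged) sense. Following \cite{berthe2024density}, we would consider the ``holonomy'': for $u\in\mathcal{L}_{n_k}(X)$ and return words $r$ to $u$, the permutations $\delta_r$ generate a subgroup $H_u\le G$. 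Minimality of $Y$ means the admissible states over $u$ form a single $H_u$-orbit. The key claim is that the induced system on $[u]_X$ is a skew product over a system with boundedly many return words (in measure), so the states over $[u]_X$ are visited with frequencies bounded below by a constant depending only on $|G|$ and $\varepsilon$.

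Concretely, we plan to cocycle-reduce: the map $x'\mapsto\delta_{x'(0)}$ is a continuous $G$-valued cocycle, and by a standard Zimmer/cohomology argument it is cohomologous, via a continuous transfer function, to a cocycle with values in a subgroup $H$ for which the extension $X\times H$ is minimal. Continuity of the transfer function means it is constant on cylinders of some fixed length $m$. The minimal extension $X\times H$ with Haar-fiber measure $\mu\times\mathrm{Haar}$ is invariant. Then $\nu([(u,q)]_Y)\ge\mu([u]_X)/|H|$ for $|u|\ge m$, because each fiber point has mass $1/|H|$. We push this to $Y$ through the letter-to-letter coding given by the transfer function. This yields $c=1/|G|$.

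The main obstacle is justifying the cohomological reduction to a minimal $H$-extension carrying the product measure. Minimality of $X\times H$ alone does not guarantee unique ergodicity; that is exactly Chaika's counterexample. Here Condition (B) must enter. I would first try to show directly, using the words $u$ with $|u|=n_k$ and $\mu([u])\ge\varepsilon/n_k$, that any ergodic $\nu$ on $Y$ assigns $[(u,q)]$ mass at least $\varepsilon/(|G|n_k)$. To do this, we compare Birkhoff counts along a generic point on return-word blocks of uniformly bounded cardinality. These blocks permute fibres transitively within $H_u$. Hence frequencies over different $q$ differ by at most a bounded factor, uniformly in $k$.
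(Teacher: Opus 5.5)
Your ``Concretely'' paragraph is essentially the paper's proof, and it is correct as it stands. The paper takes $H=\{h\in G\mid hY=Y\}$ for the minimal set $Y$ through $(e,x)$, and proves in Lem.~\ref{lemma:coboundary} two things. First, each fibre $\{g\mid (g,x')\in Y\}$ is a single right coset of $H$: if $(g,x'),(g',x')\in Y$, then the minimal sets $Y$ and $g'g^{-1}Y$ meet, hence coincide. Second, the coset map is continuous, hence constant on cylinders of some length $m$, and any section of it is your transfer function. The paper then averages an arbitrary invariant measure over $H$ to get $\mu_Y([(g,u)]_Y)=\mu_X([u]_X)/|H|$ for $|u|\ge m$. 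Your $\mu\times\mathrm{Haar}_H$ gives the same values directly, without Lem.~\ref{lem:projection}.

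Where you go wrong is the last paragraph: the obstacle you describe does not exist, and you should drop the fallback. For $S(h,x')=(h\,c(x'),Tx')$ with $H$ finite and $\mu$ $T$-invariant, the set
$S^{-1}(\{h_0\}\times A)=\bigcup_{k\in H}\{h_0k^{-1}\}\times\bigl(c^{-1}(k)\cap T^{-1}A\bigr)$
has mass $\mu(T^{-1}A)/|H|=\mu(A)/|H|$. So $\mu\times\mathrm{Haar}_H$ is always invariant, whatever the cocycle.

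Def.~\ref{def:Boshernitzan} only asks for \emph{some} invariant measure with $\limsup\kappa>0$, so you never need to control the other ergodic measures on $Y$. Unique ergodicity of $Y$ is a consequence, via Lem.~\ref{lem:Boshernitzan}, not an input, and Condition (B) enters only through $\kappa_X$ in the final inequality. Chaika's example does not threaten this: there the product measure is still invariant, it is just not the only one.

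The fallback would not work as written anyway. First, $\kappa_X(n_k)\ge\varepsilon$ bounds $|\mathcal{L}_{n_k}(X)|$ by $n_k/\varepsilon$, not $1/\varepsilon$. Second, the inference ``return blocks permute the fibres transitively, hence fibre frequencies differ by a bounded factor'' is precisely what fails for minimal but non-uniquely-ergodic group extensions. Proving it directly for all ergodic $\nu$ would amount to reproving Boshernitzan's theorem on $Y$.

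What genuinely needs an argument is the continuous reduction, which you only cite. Zimmer's cocycle reduction is a measurable statement. The version you need, with a continuous transfer function and a minimal $H$-extension, is the short topological argument of Lem.~\ref{lemma:coboundary} recalled above. You should include it, since that is where minimality of $Y$ is used.
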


The group $G$ generated by the transitions of the permutation automaton, and the natural onto morphism $\varphi\colon \Sigma^* \rightarrow G$ (the morphism maps a word $u$ to the element corresponding to $\delta_u$) will play a key role in the proof. Let $X$ be the minimal shift defined by $x$, let $\mu_X$ be its unique invariant measure, and recall that by Def.~\ref{def:Boshernitzan}, $\limsup_{n \rightarrow \infty} \kappa_X(n) > 0$. We define the \emph{skew product} topological dynamical system\footnote{We acknowledge that the notation $T$ is overloaded, but nevertheless use it for clarity because, as we shall see, it performs the same shift operation in spirit. If the shift operator is invoked on multiple shift spaces within the same context, we shall use a distinguishing subscript.} $(G \rtimes X, T)$ with the compact metric state space $\{(g, x) \mid g \in G, x \in X\}$, and update $T(g, x) = (g \cdot \varphi(x(0)), x(1, \infty))$. Observe that ${T}^{-1}\circ (g, x) = \{(ga^{-1}, ax) \mid ax \in X\}$, where $a^{-1}$ is shorthand for $\varphi(a)^{-1}$.

We shall equivalently (in fact, interchangeably since we deal with a permutation automaton) regard the skew product as a shift space $Z$ over the alphabet $(G \times \Sigma)$. We map $(g, x)$ to $y$ (and also denote $y$ by $(g, x)$), where $y(n) = (g \cdot \varphi(x(0, n)), x(n))$. This perspective is isomorphic and each element $y \in Z$ can naturally be viewed as the run of an automaton with states $G$ on its projection $x \in X$. Note that $\mathcal{A}(x)$ can be obtained by simply applying a coding (letter-to-letter substitution) on $y$. By Lem.~\ref{lem:recurrence-perm}, each $y \in Z$ is uniformly recurrent (however, $Z$ is not  necessarily minimal). Given a word $x$ whose run $\mathcal{A}(x) = (e, x) = y$ we are to study, let $Y$ be the minimal shift defined by $y$: note that $Y$ is contained in (but not necessarily equal to) $Z$.  We use $\pi$ to denote the projection of $Y$ onto $X$, and $\pi^{-1}$ to denote its pre-image. 

Our goal is to prove that if $X$ satisfies Boshernitzan's condition (Def.~\ref{def:Boshernitzan}), then so does $Y$. We shall do so\footnote{Some proof ideas are inspired by \cite{berthe2024density}, especially Prop.~3.8 \emph{ibidem}.} by starting with an arbitrary invariant measure $\nu$ on $Y$, and using it to construct an invariant measure $\mu_Y$ on $Y$ that satisfies $\kappa_Y(n) = c \cdot \kappa_X(n)$ for all large enough $n$, where $c$ is a constant.

We shall study the symmetry of $Y$ in order to understand how much structure $Y$ inherits from $X$. Consider the group action of $G$ on $Z$, where $h \in G$ maps $(g, x)$ to $(hg, x)$. By the associativity of the binary group operation, we have that the action of $h$ commutes with the shift operation $T_Z$ and its inverse. Define the subgroup $H$ as $\{h \in G \mid hY = Y\}$. We shall show the following connection between invariant measures on $Y$ and $X$.

\begin{lemma}
    \label{lem:projection}
    Let $\nu$ be an arbitrary invariant measure on $Y$.
    \begin{enumerate}
        \item The map $\nu \circ h$ is an invariant measure on $Y$ for each $h \in H$. 
        \item The map $\nu \circ \pi^{-1}$ on $X$ is the same as the unique invariant measure $\mu_X$.
    \end{enumerate}
    In particular, the invariant measure $\mu_Y$ defined as $\frac{1}{|H|}\sum_{h\in H} \nu \circ h$ satisfies $\mu_Y \circ \pi^{-1} = \mu_X$.
\end{lemma}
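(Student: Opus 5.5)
The plan is to verify the two items directly from the definitions, interpreting $\nu\circ h$ as the measure $B\mapsto \nu(hB)$ on Borel subsets $B\subseteq Y$. For $h\in H$ the map $y\mapsto hy$ is a homeomorphism of $Z$. It is continuous because on cylinders it acts letter-wise, sending the cylinder of $(g_0,a_0)\cdots(g_{n-1},a_{n-1})$ to that of $(hg_0,a_0)\cdots(hg_{n-1},a_{n-1})$. Its inverse is the action of $h^{-1}$. By the definition of $H$ it maps $Y$ onto $Y$. Hence $B\mapsto\nu(hB)$ is a Borel probability measure on $Y$: it is countably additive since $h$ is a bijection, and $\nu(hY)=\nu(Y)=1$.

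For invariance in item (1), I will use the commutation $T_Z\circ h = h\circ T_Z$ noted just before the lemma. This gives $h\,T^{-1}B = T^{-1}(hB)$ for $B\subseteq Y$. One inclusion comes from $T(hy)=hT(y)$; the other from applying the same identity with $h^{-1}$, which is legitimate since $h^{-1}\in H$ because $H$ is a subgroup. Therefore $\nu(hT^{-1}B)=\nu(T^{-1}hB)=\nu(hB)$, so $\nu\circ h$ is invariant.

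For item (2), the projection $\pi\colon Y\to X$ is continuous, hence Borel measurable, as recalled in the preliminaries. It intertwines the shifts, $\pi\circ T_Y=T_X\circ\pi$, because the $X$-coordinate of $T(g,x)$ is $x(1,\infty)$. Consequently $\pi^{-1}(T_X^{-1}B)=T_Y^{-1}(\pi^{-1}B)$, so $\nu\circ\pi^{-1}$ is an invariant Borel probability measure on $X$. One should check that $\pi(Y)\subseteq X$: every element of $Y$ has all its factors projecting to factors of $x$, so the projection lies in $X$ by the characterisation $X=\{x'\mid\mathcal{L}(x')\subseteq\mathcal{L}(x)\}$. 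Since $X$ satisfies Boshernitzan's condition, it is uniquely ergodic by Lem.~\ref{lem:Boshernitzan}. Hence $\nu\circ\pi^{-1}=\mu_X$.

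For the final claim, a finite convex combination of invariant probability measures is again one, so $\mu_Y$ is an invariant measure on $Y$. Moreover $\pi(hy)=\pi(y)$, so $h\,\pi^{-1}(B)=\pi^{-1}(B)\cap hY=\pi^{-1}(B)$ for $h\in H$. Thus each $(\nu\circ h)\circ\pi^{-1}$ equals $\nu\circ\pi^{-1}=\mu_X$, and averaging gives $\mu_Y\circ\pi^{-1}=\mu_X$. Alternatively, one can apply item (2) to the invariant measure $\nu\circ h$ directly. The only mildly delicate point is the set identity $hT^{-1}B=T^{-1}hB$, which needs $h^{-1}\in H$. Everything else is routine.
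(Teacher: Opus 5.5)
Your proof is correct, and it takes a more elementary route than the paper's.

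The paper derives both items from Lem.~\ref{lemma:coboundary}. It uses the cobounding map $\alpha$ and the relation $\alpha(ax)=\alpha(x)\cdot a^{-1}$ to show that every $T_Z$-preimage of a point of $Y$ already lies in $Y$, i.e.\ $T_Y^{-1}=T_Z^{-1}$ pointwise on $Y$. From this it deduces $h\circ T_Y^{-1}=T_Y^{-1}\circ h$. It then checks $\pi^{-1}\circ T_X^{-1}=T_Y^{-1}\circ\pi^{-1}$ pointwise, by writing $\pi^{-1}(x)=\{(g,x)\mid g\in\alpha(x)\}$.

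You observe that neither step needs $\alpha$:
\begin{itemize}
\item Invariance of $\nu\circ h$ follows from $hY=Y$ (hence $h^{-1}Y=Y$) and the commutation of the $G$-action with $T_Z$, because the $T_Y$-preimage is just $T_Z^{-1}B\cap Y$.
\item The intertwining of preimages is the formal identity $\pi^{-1}T_X^{-1}=(T_X\circ\pi)^{-1}=(\pi\circ T_Y)^{-1}=T_Y^{-1}\pi^{-1}$, valid for any factor map.
\item For that you only need $\pi(Y)\subseteq X$. You check this via the language characterisation, whereas the paper obtains $\pi(Y)=X$ from minimality.
\end{itemize}
You also supply details the paper glosses over: $h$ is a homeomorphism, so $hB$ is Borel, and $h^{-1}\in H$.

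What your route buys is that the lemma becomes a general fact. It holds for any closed $T_Z$-invariant $Y\subseteq Z$ projecting into a uniquely ergodic $X$, with $H$ the stabiliser of $Y$ in $G$.

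What the paper's route buys is stronger structural information: $Y$ is closed under $T_Z$-preimages, and the fibres of $\pi$ are the cosets $\alpha(x)$. This is not needed for the lemma itself. It is, however, what the paper uses right afterwards to compute $\mu_Y([(g,u)]_Y)=\frac{1}{|H|}\mu_X([u]_X)$ for long $u$. So Lem.~\ref{lemma:coboundary} remains essential to the overall argument, just not to this lemma.
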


The following lemma, which is inspired by \cite[Prop.~2.1]{coboundary-inspiration}, is instrumental in making $\pi^{-1}$ explicit: it tells us that we can express $\pi^{-1}(x) = \{(g, x) \mid g \in \alpha(x)\}$.

\begin{lemma}
    \label{lemma:coboundary}
    The map $\alpha$ given by $\alpha(x) = \{g \mid (g, x) \in Y\}$ is a well defined function from $X$ to $\mathcal{C} = H \backslash G$, satisfies $\alpha(ax) = \alpha(x) \cdot a^{-1}$, and is continuous. In particular, $X$ can be expressed as a finite union of cylinders on which $\alpha$ is constant.
\end{lemma}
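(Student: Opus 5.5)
We need to show three things about $\alpha(x)=\{g \mid (g,x)\in Y\}$: it is a single right coset $Hg$ (an element of $\mathcal{C}$), it satisfies the equivariance $\alpha(ax)=\alpha(x)\cdot a^{-1}$, and it is continuous, hence locally constant.

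\emph{Step 1: $\alpha(x)$ is non-empty and closed under left multiplication by $H$.} Since $\pi(Y)$ is a closed shift-invariant subset of the minimal shift $X$, we have $\pi(Y)=X$, so $\alpha(x)\neq\emptyset$. If $g\in\alpha(x)$ and $h\in H$, then $(hg,x)=h(g,x)\in hY=Y$, so $Hg\subseteq\alpha(x)$.

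\emph{Step 2: any two elements of $\alpha(x)$ lie in the same coset.} This is the crux. Suppose $(g,x),(g',x)\in Y$, and set $h=g'g^{-1}$. Then $h(g,x)=(g',x)$, so $hY$ and $Y$ share the point $(g',x)$. Since the action of $h$ commutes with $T_Z$ and is a homeomorphism of $Z$, $hY$ is again a minimal subshift of $Z$. Two minimal subshifts that intersect must coincide, because their intersection is closed, nonempty and invariant. Hence $hY=Y$, so $h\in H$ and $g'\in Hg$. Together with Step 1, this gives $\alpha(x)=Hg\in H\backslash G$.

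\emph{Step 3: the equivariance.} If $ax\in X$ and $(g,ax)\in Y$, then applying $T$ gives $T(g,ax)=(g\varphi(a),x)\in Y$. So $\alpha(ax)\varphi(a)\subseteq\alpha(x)$, and since both sides are single cosets, $\alpha(x)=\alpha(ax)\cdot a$, i.e.\ $\alpha(ax)=\alpha(x)a^{-1}$.

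\emph{Step 4: continuity.} Continuity into the finite discrete set $\mathcal{C}$ means local constancy. We use that $Y$ is closed in $Z$ and $G$ is finite. Let $x_n\to x$ with $\alpha(x_n)=Hg_n$. Passing to a subsequence, we may assume $g_n=g$ is constant. Then $(g,x_n)\to(g,x)$ in the product topology, and since $Y$ is closed, $(g,x)\in Y$, so $\alpha(x)=Hg$. Thus every convergent sequence has all subsequential limits of $\alpha(x_n)$ equal to $\alpha(x)$, which gives continuity. Equivalently, each preimage $\alpha^{-1}(C)=\pi(Y\cap(C\times X))$ is closed, since it is the continuous image of a compact set. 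The finitely many preimages partition $X$, so each is clopen.

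A clopen subset of $X$ is a finite union of cylinders, which yields the final claim: $X$ is a finite union of cylinders on which $\alpha$ is constant. By compactness we may even take all of these cylinders of a common length $n$.

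I expect Step 2 to be the main obstacle. Its key points are checking that $hY$ is minimal and that the action commutes with $T$, both of which the paper has already noted.
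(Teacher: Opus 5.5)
Your proof is correct and follows essentially the same route as the paper. You get surjectivity of $\pi$ from minimality of $X$, the coset property from $hY$ and $Y$ being intersecting minimal shifts, equivariance by applying the shift, and continuity from a pigeonhole-plus-closedness argument on $Y$; the only real difference is the final step, where you use that the finitely many preimages $\alpha^{-1}(C)$ are clopen, which is cleaner than the paper's covering argument appealing to minimality of $X$.
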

\begin{proof}
    Note that $\pi(Y)$ must be a shift contained in $X$ because the projection $\pi$ commutes with $T$ and is a closed map as $G$ is compact; since $X$ is a minimal shift, $\pi(Y) = X$. This assures us that $\alpha$ maps each $x \in X$ to a non-empty set.

    We first prove that $\alpha(x)$ is indeed a coset. By the defining property of the subgroup $H$, we deduce that if $(g, x) \in Y$, then $Hg \subseteq \alpha(x)$. We now need to prove that $\alpha(x)$ cannot span more than one coset, i.e., if $(g, x) \in Y$ and $(g', x) \in Y$, then $g' = hg$ for some $h \in H$. Let $Y' = g'g^{-1}Y$, and observe that $Y'$ is also a minimal shift (recall that the group action commutes with the shift operation). We have that $(g', x)$ is contained in the intersection of the minimal shifts $Y, Y'$: this is only possible if $Y = Y'$. In other words, $h = g'g^{-1} \in H$. We henceforth refer to $\alpha$ as a function from $X$ to $\mathcal{C}$; in technical parlance it is a \emph{(minimal) cobounding map modulo $H$}.

    The next requirement follows readily. Let $x, ax \in X$. Since $\alpha(x)$ and $\alpha(ax)$ are both cosets, they have the same cardinality, and it suffices to observe from the construction of $Y$ that $\alpha(ax) \cdot \varphi(a) \subseteq \alpha(x)$ in order to establish that $\alpha(ax) = \alpha(x) \cdot a^{-1}$. 

    Thirdly, we show that $\alpha$ is continuous. Suppose for the sake of deriving a contradiction that $\alpha$ is not continuous at some $x \in X$, i.e., $\alpha(x) = C$, and for every index $n$, there exists $x_n \in X$ such that $x(0, n) = x_n(0, n)$ but $\alpha(x_n) = C_n \ne C$. By the pigeonhole principle, we deduce that there exist a coset $C'$, infinitely many indices $n$, and words $x_n$ such that $x_n(0, n) = x(0, n)$, but $\alpha(x_n) = C' \ne C$. Consider an element $g' \in C$, and observe that $((g', x_n))_n$ is an infinite sequence of elements in the compact space $Y$, and this sequence converges to $(g', x)$. However, $g' \notin \alpha(x)$, i.e., the limit does not exist in $Y$: a contradiction, as desired. 

    It remains to show that $X$ is a finite union of cylinders on which $\alpha$ is constant. From the preceding argument for continuity, it follows that for any $x \in X$, we can obtain cylinders $[u_j]_X$ on which $\alpha$ is constant and $x(j, \infty) \in [u_j]_X$. We show by contradiction that these cylinders cover $X$: if they leave some $x'$ uncovered, we can use continuity to identify a cylinder $[u']_X$ on which $\alpha$ is constant and $x' \in [u']_X$. However, by the minimality of $X$, the word $u'$ must have been a prefix of some $x(j, \infty)$: a contradiction. Finally, since $X$ is compact, the open cover $([u_j])_{j=0}^\infty$ of cylinders with the desired property admits a finite subcover $([u_j])_{j=0}^N$, which we use to establish the last claim of the lemma.
\end{proof}

\begin{proof}[Proof of Lem.~\ref{lem:projection}]
Lem.~\ref{lemma:coboundary} gives the framework to deduce that: (i) the operators $T_Y, T_Y^{-1}$ on $Y$ give the same evaluations as their counterparts $T_Z, T_Z^{-1}$; (ii) the inverse image $\pi^{-1}$ commutes with $T$ and $T^{-1}$. These will respectively imply the two parts of the claim.

Observe that $T_Y^{-1} \circ (g, x) \subseteq T_Z^{-1} \circ (g, x) = \{(ga^{-1}, ax) \mid ax \in X\}$. On the other hand, $\alpha(ax) = \alpha(x) \cdot a^{-1}$: since $g \in \alpha(x)$, this proves by the definition of $\alpha$ that the set inclusion is actually an equality, i.e., $T_Y^{-1} = T_Z^{-1}$ (that $T_Y = T_Z$ on $Y$ is obvious by the definition of $Y$). This proves (i). It then immediately follows that $h \circ T_Y^{-1} = T_Y^{-1} \circ h$, and hence $\nu \circ h$ is an invariant measure on $Y$ for each $h \in H$. Indeed, $\nu \circ h \circ T_Y^{-1} = \nu \circ T_Y^{-1} \circ h = \nu \circ h$.

To show (ii), observe furthermore that $\pi^{-1} \circ T_X^{-1}(x) = \{(g', ax) \mid g' \in \alpha(ax), ax \in X\} = \{(ga^{-1}, ax) \mid g \in \alpha(x), ax \in X\} = T_Y^{-1} \circ \pi^{-1}(x)$. This implies that for any invariant measure $\nu$, we have $\nu \circ \pi^{-1} \circ T_X^{-1} = \nu \circ T_Y^{-1} \circ \pi^{-1} = \nu \circ \pi^{-1}$. It remains to observe that $\nu \circ \pi^{-1}$ is indeed a Borel measure on $X$ because the function $\pi$ is Borel measurable by virtue of being continuous. Thus, $\nu \circ \pi^{-1}$ is an invariant measure on $X$; since $(X, T_X)$ is uniquely ergodic, it can only be the same as $\mu_X$.
\end{proof}

We are now ready to complete the proof that $Y$ inherits the property of admitting an invariant measure that satisfies Boshernitzan's condition from $X$.

Let us examine what $\mu_Y$ evaluates to on cylinders. For convenience, we shall denote a finite factor $v$ as $(g, u)$, where $v(n) = (g \cdot \varphi(u(0, n)), u(n))$ for all $n$. We have
$$
\mu_Y([(g, u)]_{Y}) = \frac{1}{|H|} \sum_{h \in H}\nu \circ h([(g, u)]_{Y}) = \frac{1}{|H|} \sum_{g' \in Hg}\nu([(g', u)]_{Y}),
$$
and by symmetry, $\mu_Y([(g, u)]_Y) = (1/|H|)\cdot \sum_{g' \in Hg}\mu_Y([(g', u)]_Y)$.

Now, by Lem.~\ref{lemma:coboundary}, for all long enough factors $u \in \mathcal{L}(X)$, we have that $\alpha$ is constant-valued on $[u]_X$. This implies that for $u$ long enough, $\pi^{-1}([u]_X) = \bigcup_{g' \in \alpha([u]_X)} [(g', u)]_{Y}$. In other words, $\mu_Y([(g, u)]_{Y}) = (1/|H|)\cdot \mu_Y \circ \pi^{-1}([u]_X)$. Since $\mu_Y \circ \pi^{-1} = \mu_X$, we have thus proven that for all long enough finite factors, we have $\mu_Y([(g, u)]_{Y}) = (1/|H|)\cdot \mu_X([u]_X)$, and it follows that $\kappa_Y(n) = (1/|H|)\kappa_X(n)$. We have thus proved that the run $y$ of an automaton with states $G$ and initial state $e$ (where $e$ is the group identity) defines a minimal shift $Y$ that satisfies Boshernitzan's condition (Def.~\ref{def:Boshernitzan}).

We obtain $\mathcal{A}(x)$ by a simple coding that substitutes $(g, a)$ with $(q, a)$, where $q$ is obtained by the action of $g$ on $q_\mathsf{init}$. By Lem.~\ref{lem:automata-suffice}, $\mathcal{A}(x)$ satisfies Boshernitzan's condition (Def.~\ref{def:Boshernitzan}), and hence by Lem.~\ref{lem:Boshernitzan}, admits factor frequencies. This proves Lem.~\ref{lem:boshernitzan-perm}.
\qed

\section{Discussion}
An open question is to find an example of a word $x$ (or prove one does not exist) that does not satisfy Boshernitzan's condition, but $\mathcal{A}(x)$ generates uniquely ergodic shifts for every $\mathcal{A}$. 

An obvious direction for future work is to generalise the following beyond the case of primitive morphic words: if the factor frequencies of $x$ are given by \emph{effective closed-form} expressions, then so are those of $\mathcal{A}(x)$.
Concretely, consider the ability to decide whether $f_x(u) = r$ where $r$ is a given constant: our current proof of Thm.~\ref{thm:boshernitzan-preserve} via Lem.~\ref{lem:ur-ue-compfreq} does not imply that $\mathcal{A}(x)$ inherits this property. 

The main technical obstacle to the above is obtaining an effective version of Lem.~\ref{lemma:coboundary}, i.e., the computation of \emph{minimal cobounding maps} on the shift $X$ defined by the input word. This is closely related to the behaviour of \emph{return groups}. Recall that $\mathcal{R}_X(u)$ denotes the set of return words to a factor $u \in \mathcal{L}(X)$. Let $G$ be a group, and $\varphi$ be a morphism from $\Sigma^*$ into $G$. The return group of $u$ (with respect to $G, \varphi$) is generated by the return words to $u$, and is given by $\langle \varphi(\mathcal{R}_X(u)) \rangle$. If $u'$ is a prefix of $u$, then the return group of $u$ is a subgroup of that of $u'$ \cite[Lem.~3]{stability-return}. We say that the return groups \emph{stabilise} at $u$ if for every factor $u''$ such that $u$ is a prefix of $u''$, the return groups of $u$ and $u''$ are the same. As \cite[Prop.~7.8]{berthe2024density} observes, determining minimal cobounding maps is equivalent to determining when return groups stabilise.

Our effective results for primitive morphic words indicate that the stabilisation of return groups can be determined in this case, and it has indeed been proven \cite[Prop.~27]{stability-return}. Nevertheless, this remains open in the general case, to the best of our knowledge. A possible approach is to generalise the derivation-based techniques of Durand applied in both \cite{stability-return} and our paper.

Another potential approach leverages recent advances in word combinatorics, in particular the study of so-called \emph{suffix-connected} words and their generalisations \cite{hermansuffixconnected}. An important property of suffix-connected words (which include Sturmian words and Arnoux-Rauzy words) is that all factors are \emph{stable} \cite[Cor.~1.2]{hermansuffixconnected}, i.e., for any factor $u$, the return group (with respect to $G, \varphi$) is $G$. While this property is ideal to prove an effective version of Lem.~\ref{lem:boshernitzan-perm}, it remains difficult to determine when an automaton preserves suffix-connectedness, and hence obtain a preservation result for an entire cascade of automata. As a positive case in point, we give the following application.

\begin{proposition}
\label{lem:sturmian-perm}
    Let $\mathcal{A} = (Q', q_\mathsf{init}, \Sigma, \delta)$ be a permutation automaton, and let $x \in \Sigma^\omega$ be a Sturmian word. Let $G$ be the group of permutations induced by the transitions, and let $Q$ be the orbit of $q_\mathsf{init}$ under $G$. The run of $\mathcal{A}$ on $x$ visits each state $q \in Q$ with frequency $1/|Q|$.
\end{proposition}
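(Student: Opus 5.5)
We combine the skew-product machinery from the proof of Lem.~\ref{lem:boshernitzan-perm} with the stability of return groups for suffix-connected words \cite[Cor.~1.2]{hermansuffixconnected}. As in that proof, let $\varphi\colon \Sigma^* \to G$ be the natural morphism. Let $X$ be the minimal shift generated by the Sturmian word $x$; it satisfies Boshernitzan's condition by Rmk.~\ref{remark:Boshernitzan} and hence is uniquely ergodic with measure $\mu_X$. Let $Y$ be the minimal shift generated by the run $(e,x)$ of the group automaton with states $G$ and initial state $e$. Let $H=\{h\in G \mid hY=Y\}$, and let $\alpha\colon X\to H\backslash G$ be the continuous cobounding map of Lem.~\ref{lemma:coboundary}. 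The run of $\mathcal{A}$ is the image of $(e,x)$ under the coding $(g,a)\mapsto (g\cdot q_\mathsf{init}, a)$. By Lem.~\ref{lem:boshernitzan-perm} and Lem.~\ref{lem:Boshernitzan}, $Y$ is uniquely ergodic and $(e,x)$ has factor frequencies. It therefore suffices to compute the frequency of the letter $g$ in the $G$-coordinate and then push it forward to $Q$.

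\textbf{Key step: $H=G$.} Since $\alpha$ is continuous, it is constant on some cylinder $[u]_X$. Let $r$ be a return word to $u$, so that both $x'$ and $rx'$ lie in $[u]_X$ for suitable $x'\in X$. Iterating the relation $\alpha(ax)=\alpha(x)\cdot a^{-1}$ gives $\alpha(rx')=\alpha(x')\cdot\varphi(r)^{-1}$. Writing $\alpha\equiv Hg_0$ on $[u]_X$, this yields $Hg_0\varphi(r)^{-1}=Hg_0$, i.e.\ $g_0\varphi(r)g_0^{-1}\in H$. Consequently $g_0\langle\varphi(\mathcal{R}_X(u))\rangle g_0^{-1}\subseteq H$. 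Sturmian words are suffix-connected, so by \cite[Cor.~1.2]{hermansuffixconnected} the return group of every factor is all of $G$. Hence $H\supseteq g_0Gg_0^{-1}=G$, which gives $H=G$ and $Y=Z$.

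\textbf{Frequencies.} Now fix a measure $\nu$ on $Y$; by unique ergodicity it is the only one, so $\nu=\mu_Y=\frac{1}{|H|}\sum_{h\in H}\nu\circ h$. Therefore $\nu$ is invariant under the left action of $G$, so the cylinders $[(g,a)]_Y$, $g\in G$, all receive equal mass, namely $\mu_X([a])/|G|$. Summing over $a$, the $G$-coordinate of $(e,x)$ takes each value $g$ with frequency $1/|G|$. The map $g\mapsto g\cdot q_\mathsf{init}$ is onto $Q$ and each fibre is a coset of the stabiliser of $q_\mathsf{init}$, of size $|G|/|Q|$. Applying Lem.~\ref{lem:automata-suffice} to this coding, each $q\in Q$ is visited with frequency $(|G|/|Q|)\cdot(1/|G|)=1/|Q|$.

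The main obstacle is the step $H=G$. It requires pinning down the conventions of \cite{hermansuffixconnected}: the statement there must concern return words, rather than left return words, and the morphism must be onto a finite group, which is the case here. If their result only says that the return group is \emph{some} subgroup that is large in a weaker sense, I would fall back on Sturmian-specific facts. Every Sturmian factor has exactly two return words, forming a basis of the free group, and one can then argue via the surjectivity of $\varphi$ that their images generate $G$. I would also double-check the left versus right action in the conjugation step.
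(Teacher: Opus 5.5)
Your proof is correct and follows the same route as the paper. You use stability of Sturmian return groups to show the skew product over $G$ is minimal, get unique ergodicity of $Y$ from Lem.~\ref{lem:boshernitzan-perm} so that each cylinder $[(g,a)]_Y$ has mass $\mu_X([a]_X)/|G|$, and finish with orbit--stabiliser. The only difference is that you prove $H=G$ directly from $\alpha(rx')=\alpha(x')\varphi(r)^{-1}$ on a cylinder where $\alpha$ is constant, whereas the paper cites \cite[Prop.~7.8]{berthe2024density} for this step. Your worry about left versus right return words is harmless, since the two return groups are conjugate by $\varphi(u)$.
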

\begin{proof}
    Let $X$ be the minimal shift defined by $x$. Since the word $x$ is stable by virtue of being Sturmian and hence suffix-connected, we have by \cite[Prop.~7.8]{berthe2024density} that the trivial cobounding map on $X$ is minimal. This implies that the shift $Y$ corresponding to the entire skew product $G \rtimes X$ is minimal. Furthermore, since $x$ is a Sturmian word, it satisfies Boshernitzan's condition (Rmk.~\ref{remark:Boshernitzan}), and by Lem.~\ref{lem:boshernitzan-perm}, the shift $Y$ also satisfies Boshernitzan's condition. From the proof of Lem.~\ref{lem:boshernitzan-perm}, we have that the unique invariant measure on $Y$ is given by $\mu_Y([(g, u)]_Y) = \frac{1}{|G|} \mu_X([u]_X)$.

    Consider $q \in Q$. The set $H_q = \{g \mid q_\mathsf{init} \cdot g = q\}$ is a coset of the subgroup $H = \{h \mid q_\mathsf{init} \cdot h = q_\mathsf{init}\}$, and hence $|Q|\cdot |H| = |G|$. Now by Oxtoby's theorem, $f_{\mathcal{A}(x)}(q, a) = \sum_{g \in H_q} \mu_Y([(g, a)]_Y)$, which by the definition of $\mu_Y$ and the above observation, simplifies to $\frac{1}{|Q|}\mu_X([a]_X) = \frac{1}{|Q|}f_x(a)$. The frequency with which a run visits a state $q \in Q$ is then simply $\frac{1}{|Q|}\sum_a f_x(a) = \frac{1}{|Q|}$.
\end{proof}

We conclude by recording an interesting corollary\footnote{We acknowledge that this fact was known to Olivier Carton and Vincent Delecroix.} of Thm.~\ref{thm:boshernitzan-preserve}, Rmk.~\ref{remark:Boshernitzan}, and the proof of Prop.~\ref{lem:sturmian-perm}.

\begin{corollary}
    \label{cor:Sturmian}
    Let $\Sigma = \{0, 1\}$, let $\mathcal{A} = (Q, q_\mathsf{init}, \Sigma, \Gamma, \delta^o, \delta^t)$ be a transducer, and let $x \in \Sigma^\omega$ be a Sturmian word whose factor frequencies have effective closed form expressions. If  $\mathcal{A}$ is a permutation transducer, then the factor frequencies in $\mathcal{A}(x)$ have effective closed form expressions.
\end{corollary}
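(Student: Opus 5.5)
The approach combines the explicit formula for $\mu_Y$ from the proof of Lem.~\ref{lem:boshernitzan-perm} with the stability of Sturmian words used in Prop.~\ref{lem:sturmian-perm}. First, by Lem.~\ref{lem:automata-suffice}, it suffices to treat the underlying automaton $\mathcal{A}_0$. The frequencies of $\mathcal{A}(x) = \sigma(\mathcal{A}_0(x))$ are then obtained from those of $\mathcal{A}_0(x)$ by the explicit formulas in the proof of item (5) of that lemma: finite sums of frequencies of attributed factors, divided by the constant $L = \sum_a f(a)|\sigma(a)|$. These operations preserve effective closed forms, provided the finitely many attributed factors can be enumerated. They can, because $\mathcal{L}_n$ of a Sturmian word with computable frequencies is computable: a factor $u$ belongs to the language iff $f_x(u)>0$, and the closed forms decide this. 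Next, $\mathcal{A}_0(x)$ is itself a coding of the run $(e,x)$ of the automaton with state set $G$ (the transition group). So it suffices to give closed forms for the frequencies of factors $(g,u)$ in $(e,x)$.

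Second, I would invoke the argument of Prop.~\ref{lem:sturmian-perm}. Since $x$ is Sturmian, it is suffix-connected, hence stable, so by \cite[Prop.~7.8]{berthe2024density} the trivial cobounding map is minimal. Therefore $Y = G \rtimes X$ is minimal, i.e., $H = G$, and $\alpha$ is constant on all of $X$, not merely on long cylinders. Thm.~\ref{thm:boshernitzan-preserve} (via Lem.~\ref{lem:boshernitzan-perm} and Rmk.~\ref{remark:Boshernitzan}) shows $Y$ satisfies Condition (B), hence is uniquely ergodic. The computation concluding Lem.~\ref{lem:boshernitzan-perm} then yields, for every factor $(g,u)$ and every length (not just large ones, since $\alpha$ is globally constant), the identity $\mu_Y([(g,u)]_Y) = \frac{1}{|G|}\mu_X([u]_X)$. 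By Oxtoby's theorem $(e,x)$ is generic, so $f_{(e,x)}((g,u)) = f_x(u)/|G|$.

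Finally, a factor $v$ of $\mathcal{A}_0(x)$, written $(q,u)$, has frequency $\sum_{g : q_\mathsf{init}\cdot g = q} f_x(u)/|G| = f_x(u)/|Q|$, where $Q$ is the orbit of $q_\mathsf{init}$, exactly as in Prop.~\ref{lem:sturmian-perm}. This is a rational multiple of a closed form for $x$, with $|Q|$ computable from $\mathcal{A}$. Composing with the substitution step gives the corollary.

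The main obstacle is the justification that the formula holds for all factor lengths, and the effective handling of the erasing/non-uniform output substitution. The first is settled by the stability of Sturmian words, which makes the cobounding map trivial everywhere. For the second, I would check that the tree/attribution decomposition of Lem.~\ref{lem:automata-suffice} only involves finitely many factors of bounded length $R_x(1)\cdot|v|$ that can be listed using the decidable language above.
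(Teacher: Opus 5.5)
Your proposal is correct and follows the paper's route. The paper obtains this corollary from Thm.~\ref{thm:boshernitzan-preserve}, Rmk.~\ref{remark:Boshernitzan}, and the proof of Prop.~\ref{lem:sturmian-perm}: stability gives $H=G$, Condition (B) gives unique ergodicity of $Y$, and hence $f_{\mathcal{A}_0(x)}((q,u)) = f_x(u)/|Q|$. Your explicit handling of the output substitution through the attribution formulas of Lem.~\ref{lem:automata-suffice}, which the paper leaves implicit, correctly fills in the last step.
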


\bibliographystyle{plain}
\bibliography{main}

\end{document}